\definecolor{darkgreen}{rgb}{0,0.45,0}
\newtheorem{theorem}{Theorem}[section]
\newtheorem{lemma}[theorem]{Lemma}
\newtheorem{proposition}[theorem]{Proposition}
\theoremstyle{definition}
\newtheorem{definition}[theorem]{Definition}
\newtheorem{example}[theorem]{Example}
\newtheorem{construction}[theorem]{Construction}
\newtheorem{notation}[theorem]{Notation}
\newtheorem*{notation*}{Notation}
\crefname{algocf}{Algorithm}{Algorithms}
\renewcommand{\paragraph}[1]{\medskip\noindent\textbf{#1.}}
\def\noteson{\gdef\luis##1{\noindent{\color{blue}[Luis: ##1]}}
\gdef\dmitriy##1{\noindent{\color{violet}[Dmitriy: ##1]}}
\gdef\todo##1{\noindent{\color{red}[todo: ##1]}}}
\renewcommand{\to}{\xrightarrow{\;\;\;}}
\renewcommand{\epsilon}{\varepsilon}
\renewcommand{\phi}{\varphi}
\newcommand{\mergeattime}{\mathsf{merge\_at\_time}}
\newcommand{\timeofmerge}{\mathsf{time\_of\_merge}}
\newcommand{\append}{\mathsf{append}}
\let\ker\undefined
\DeclareMathOperator{\ker}{ker}
\DeclareMathOperator{\coker}{coker}
\DeclareMathOperator{\Hom}{Hom}
\DeclareMathOperator{\Ext}{Ext}
\newcommand{\Hil}{\mathsf{Hil}}
\newcommand{\vect}{\mathbf{Vec}}
\newcommand{\set}{\mathbf{Set}}
\newcommand{\Ab}{\mathrm{Ab}}
\newcommand{\op}[1]{{\bf #1}}
\newcommand{\var}[1]{{\sf #1}}
\DeclareMathAlphabet{\mathpzc}{OT1}{pzc}{m}{it}
\newcommand\DEFINEALPHABETLOOP[3]{%
  \ifx\relax#3\expandafter\@gobble\else\expandafter\@firstofone\fi
  {\expandafter\newcommand\expandafter*\csname#3#1\endcsname{#2{#3}}%
   \DEFINEALPHABETLOOP{#1}{#2}}%
}%
\newcommand\Definealphabet[2]{%
  \DEFINEALPHABETLOOP{#1}{#2}abcdefghijklmnopqrstuvwxyzABCDEFGHIJKLMNOPQRSTUVWXYZ\relax
}%
\title[Minimal Presentations of Zero-dimensional Persistent Homology]{Computing Betti Tables and Minimal Presentations of Zero-dimensional Persistent Homology}
\author{Yuan Luo}
\address{University of California, Davis, CA, USA;
Lawrence Berkeley National Laboratory, Berkeley, CA, USA}
\email{uluo@ucdavis.edu}
\author{Dmitriy Morozov}
\address{International Computer Science Institute, Berkeley, CA, USA;
Lawrence Berkeley National Laboratory, Berkeley, CA, USA}
\email{dmitriy@mrzv.org}
\author{Luis Scoccola}
\address{Centre de Recherches Mathématiques et Institut des sciences mathématiques;
Laboratoire de combinatoire et d'informatique mathématique de l'Université du Québec à Montréal;
Université de Sherbrooke; Québec, Canada}
\email{luis.scoccola@gmail.com}
\begin{document}

\maketitle

%\vspace*{-0.5cm}

\begin{abstract}
The Betti tables of a multigraded module encode the grades at which there is an algebraic change in the module.
Multigraded modules show up in many areas of pure and applied mathematics, and in particular in topological data analysis, where they are known as persistence modules, and where their Betti tables describe the places at which the homology of filtered simplicial complexes changes.
Although Betti tables of singly and bigraded modules are already being used in applications of topological data analysis, their computation in the bigraded case (which relies on an algorithm that is cubic in the size of the filtered simplicial complex) is a bottleneck when working with large datasets.
We show that, in the special case of zero-dimensional homology (particularly relevant for clustering), Betti tables of bigraded modules can be computed in log-linear time.
We also consider the problem of computing minimal presentations, and show that minimal presentations of zero-dimensional persistent homology can be computed in quadratic time, regardless of the grading~poset.
Our algorithms are independent of the field of coefficients, which implies that the zeroth and first Betti tables of zero-dimensional persistence homology are independent of the coefficients; we show that, remarkably, this is not true for higher Betti tables.
We implement our algorithms and demonstrate that they are orders of magnitude faster than existing methods on the same datasets, and can scale to significantly larger inputs without exceeding memory limits.
\end{abstract}

%\subjclass[2020]{55N31, 68W01}
%\keywords{Multiparameter persistence, Zero-Dimensional homology, Minimal presentation, Betti table}

\section{Introduction}

\paragraph{Betti tables and persistence}
Betti tables are a classical descriptor of a multigraded modules~\cite{eisenbud,miller-sturmfels,peeva}, which encode the grades of the generators in a minimal projective resolution of the module (see, e.g., \cref{fig:graded-graph}
and Example \ref{example:resolution-of-H_0}).
Informally, one can interpret the Betti tables of a graded module as recording the grades at which there is an algebraic change in the module.
Graded modules have applications in a wide variety of areas of pure and applied mathematics, including topological data analysis, and more specifically, persistence theory \cite{oudot,botnan-lesnick}, where they are known as \emph{persistence modules}, and where they are used to describe the varying topology of simplicial complexes and other spaces as they are filtered by one or more real parameters.

Informally, \emph{one-parameter persistence modules} correspond to $\Zbb$-graded $\kbb[x]$-modules, and can thus be classified up to isomorphism effectively.
\emph{Multiparameter persistence modules}~\cite{carlsson-zomorodian,botnan-lesnick} correspond to $\Zbb^n$-graded $\kbb[x_1, \dots, x_n]$-modules, and thus do not admit any reasonable classification up to isomorphism (formally, one is dealing with categories of wild representation type~\cite{Nazarova}; see~\cite{carlsson-zomorodian,bauer-botnan-oppermann-steen,bauer-scoccola} for manifestations of this phenomenon in persistence theory).
For this reason, much of the research in multiparameter persistence is devoted to the study of incomplete descriptors of multiparameter persistence modules.
Betti tables (also known as multigraded Betti numbers) provide one of the simplest such descriptors, and various properties of this descriptor from the point of view of persistence theory are well understood, including
their effective computation \cite{lesnick-wright,kerber-rolle,fugacci-kerber-rolle,bauer-lenzen-lesnick},
%(implemented in TDA-specific software \cite{rivet,kerber-rolle} outperforming general purpose algebra software),
their relationship to discrete Morse theory~\cite{guidolin-landi,allili-kaczynski-landi,guidolin-landi-2},
their optimal transport Lipschitz-continuity with respect to perturbations~\cite{oudot-scoccola},
and their usage in supervised learning~\cite{loiseaux-et-al,scoccola-et-al}.

\paragraph{Two-parameter persistent homology}
The simplest case beyond the one-parameter case (i.e., the singly graded case) is the two-parameter case (i.e., the bigraded case).
Here, one is usually given a finite simplicial complex~$K$ together with a function $f : K \to \Rbb^2$ mapping the simplices of $K$ to~$\Rbb^2$, which is monotonic, i.e., such that $f(\sigma) \leq f(\tau)$ whenever $\sigma \subseteq \tau \in K$ (see \cref{fig:graded-graph} for an example).
By filtering $K$ using $f$ and taking homology in dimension $i \in \Nbb$ with coefficients in a field~$\kbb$, one obtains an~$\Rbb^2$-graded module, or equivalently, a functor $H_i(K,f; \kbb) : \Rbb^2 \to \vect_\kbb$.
Examples include geometric complexes of point clouds filtered by a function on data points, such as a density estimate~\cite{cai-kim-memoli-wang,carriere-blumberg}, and graphs representing, say, molecules or networks, filtered by two application-dependent quantities~\cite{todd}.
Applying homology to a bifiltered simplicial complex is justified by the fact that the output is automatically invariant under relabeling, meaning that any operation based on this module, such as computing its Betti tables, will result in a relabeling-invariant descriptor.
In this setup, one of the main stability results of~\cite{oudot-scoccola} implies that, for $f,g : K \to \Rbb^2$ any two monotonic functions, we have $\| \mu_{f}  - \mu_{g} \|_1^{\mathsf{K}} \;\leq \; 2 \cdot \|f - g\|_1$,
where $\|-\|_1^{\mathsf{K}}$ denotes the Kantorovich--Rubinstein norm between signed measures (also known as the $1$-Wasserstein distance), and $\mu_f$ and $\mu_g$ are the signed measures on $\Rbb^2$ obtained as the alternating sum of Betti tables of $H_i(K,f;\kbb)$ and $H_i(K,g;\kbb)$, respectively; see \cite[Theorem~1]{loiseaux-et-al} for details.
The upshot is that the Betti tables of the homology of bifiltered simplicial complexes form a perturbation-stable, relabeling-invariant descriptor of bifiltered simplicial complexes.

\begin{figure}
    \includegraphics[width=0.9\linewidth]{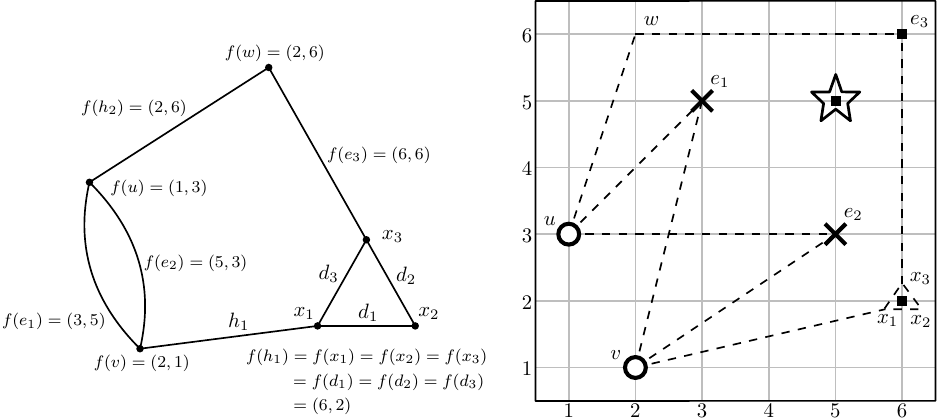}
    \caption{%
    \emph{Left.}
    A bifiltered graph $(G,f)$ with vertex set $\{u,v,w,x_1,x_2,x_3\}$ and edge set $\{e_1, e_2, e_3, h_1, h_2, d_1, d_2, d_3\}$.
    \emph{Right.}
    The bifiltered graph schematically mapped to $\Rbb^2$, together with the Betti tables $\beta_0(H_0(G,f))$~(circles), $\beta_1(H_0(G,f))$~(crosses), $\beta_2(H_0(G,f))$~(stars), and $\beta_0(H_1(G,f))$~(squares).}
    \label{fig:graded-graph}
\end{figure}

The current standard algorithm for computing the Betti tables of $H_i(K,f;\kbb)$ in the two-parameter case is the Lesnick--Wright algorithm~\cite{lesnick-wright}, which runs in time $O(|K|^3)$.
Because of results such as those of \cite{edelsbrunner-parsa}, one does not expect to find algorithms with better worst-case time complexity than matrix-multiplication time, at least when $i \in \Nbb$ is arbitrary.
%Moreover, in the one-parameter case, the multigraded Betti numbers are simply the endpoints of the bars, and the complexity of computing the barcode is also, as of now, cubic.
Current options to speed up practical computations include sparsifying the filtered complex before computing homology~\cite{alonso-kerber-pritam}, as well as computational shortcuts that are known to significantly reduce computational time in practice~\cite{kerber-rolle,bauer-lenzen-lesnick}.
Nevertheless, computational cost is still a main bottleneck in real-world applications of persistence, limiting the size of the datasets on which it can be applied.

%\luis{mention discrete morse theory?}

\paragraph{Zero-dimensional persistent homology}
In many applications, persistent homology in dimension zero is all that is required, as it encodes information about the changes in connectivity of filtered simplicial complexes, making it useful for clustering \cite{carlsson-memoli,carlsson-memoli-2,chazal-guibas-oudot-skraba,cai-kim-memoli-wang,rolle-scoccola,scoccola-rolle} and graph classification~\cite{zhao-wang,perslay,hofer-et-al,loiseaux-et-al,hacquard-lebovici}.
%, popular applications of persistence.

But, if one is only interested in $0$-dimensional homology $H_0(K, f; \kbb)$, algorithms relying on linear algebra are usually far from the most efficient ones:
%in this case, all that matters is the one-skeleton of the simplicial complex, which is a graph $G$, and one expects the computation of descriptors of its persistent homology to be simpler than that of higher homology.
For example, in the one-parameter case, the Betti tables of the $0$-dimensional homology of an $\Rbb$-filtered graph $(G,f)$ can be computed in $O(|G|\,\log|G|)$ time by first sorting the simplices of $G$ by their $f$-value, and then doing an ordered pass using a union-find data structure; in the language of barcodes, the Betti tables simply record the endpoints of the barcode, and the barcode can be computed using the elder rule (see \cite[pp.~188]{edelsbrunner-harer} or \cite[Algorithm~1]{rolle-scoccola}).
%\luis{Reviewer 3 suggests rephrasing the above paragraph to make it clear that the union-find data structure is usually used to compute the barcode, but what it says there is correct, so it is worth changing it?}
%And again, due to \cite{edelsbrunner-parsa}, \luis{say that even in one-parameter, Betti tables of higher dimensional homology will not be better than matrix-multiplication time}.

\begin{comment}
Although, in a sense, simpler than higher dimensional persistent homology, $0$-dimensional persistent homology is still an important object of study.
On the applied side, and as mentioned above, $0$-dimensional persistent homology is relevant to clustering and graph classification; on the theoretical side, $0$-dimensional multiparameter persistent homology appears to be highly rich and complex~\cite{brodzki-burfitt-pirashvili,bauer-botnan-oppermann-steen}.
\end{comment}

%In the case where the input is a one-parameter filtered graph, it is also possible to compute the Betti tables of $H_1(G,f)$ using the same procedure outlined above, since each edge that is added, which does not connect any disjoint connected components, will necessarily add a loop.

\paragraph{Contributions}
The paper is concerned with the following questions: What is the complexity of computing the Betti tables of graphs filtered by posets other than $\Rbb$?
To what extent do the Betti tables of filtered graphs depend on the field of coefficients?

\smallskip

We introduce algorithms for the computation of a minimal set of generators and of a minimal presentation of $0$-dimensional persistent homology indexed by an arbitrary poset.
%, and $\alpha$ denotes the inverse Ackermann function.

\begin{restatable}{theoremx}{quadratictimealgo}
    \label{result:quadratic-time-algo}
    Let $\Pscr$ be any poset, and let $(G,f)$ be a finite $\Pscr$-filtered graph.
    \cref{algorithm:vertex-reduction} computes the $0$th Betti table of $H_0(G,f; \kbb)$ in $O\big(\,|G| \,\big)$ time.
    \cref{algorithm:minimal-presentation} computes a minimal presentation (and hence the $0$th and $1$st Betti tables) of $H_0(G,f; \kbb)$ in $O(\,|G|^2\,)$ time.
\end{restatable}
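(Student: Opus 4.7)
The plan is to prove each of the two claims separately, with the $0$th Betti table as the easier starting point.

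For \cref{algorithm:vertex-reduction}, I would first establish the structural identity
\[\beta_0(H_0(G,f))_p \;=\; \#\bigl\{C : C \text{ is a connected component of } G_{\leq p} \text{ with } f(v) = p \text{ for all } v \in C \bigr\}.\]
This follows from the general formula $\beta_0(M)_p = \dim \coker\bigl(\mathrm{colim}_{q < p} M(q) \to M(p)\bigr)$ applied to $M = H_0(G,f;\kbb)$, together with the observation that a connected component of $G_{\leq p}$ lies in the image of the colimit map if and only if it contains some vertex $v$ with $f(v) < p$. Monotonicity of $f$ guarantees that a ``pure'' component (one all of whose vertices lie in $V_p := f^{-1}(p)$) has every internal edge of $f$-value $p$, and that any edge from such a component to $V_{<p}$ also has $f$-value exactly $p$. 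Accordingly, \cref{algorithm:vertex-reduction} groups simplices by grade, and for each grade $p$ runs union--find on $V_p$ using only edges of $f$-value $p$ between two $V_p$-vertices, marking a component as \emph{polluted} whenever an edge of $f$-value $p$ links it to $V_{<p}$. Counting the unpolluted components at each grade gives $\beta_0$, and the total cost is $O\bigl(\sum_p |V_p|+|E_p|\bigr) = O(|G|)$.

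For \cref{algorithm:minimal-presentation}, the plan is to process edges in a linear extension of $\Pscr$ via an augmented union--find in which each component carries a list of \emph{live} generator labels. Initially each generator produced by \cref{algorithm:vertex-reduction} is its own component with itself as unique label. When an edge of grade $q$ merges two components $C_1$ and $C_2$, the algorithm (i)~records new relations between certain pairs of live labels from $C_1$ and $C_2$ at grade $q$, and (ii)~prunes the combined label list so that only one representative per equivalence class induced by the new relations remains. The invariant to maintain is that a label in a component is \emph{live} at a given stage of processing exactly when its image in the colimit of the module up to that stage has not yet been identified with any other live label. Granting this invariant, no relation recorded at $q$ can lie in the image of the relations recorded at earlier grades, so the resulting presentation is minimal. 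The quadratic bound follows because each merge can cost up to $O(|C_1|\cdot|C_2|)$ label comparisons, and summing over all edges gives $O(|G|^2)$.

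The main obstacle will be proving the minimality of the relations produced by \cref{algorithm:minimal-presentation} over an \emph{arbitrary} poset $\Pscr$. Lacking meets or a lattice structure, one has to verify directly that each recorded relation at grade $q$ identifies a pair of live labels that were not already identified via some chain of edges in $G_{\leq q'}$ for any $q' < q$. This will proceed by induction on the linear extension order, with a careful analysis of the label-pruning step: one must show that any label which remains live at a given stage could not have been pruned earlier, so that every pair of live labels identified at $q$ corresponds to a genuinely new relation at that grade.
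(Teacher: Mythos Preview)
Your proposal has a fundamental mismatch with what the theorem actually asserts. The statement is not that \emph{some} algorithm computes $\beta_0$ in linear time and \emph{some} algorithm computes a minimal presentation in quadratic time; it is that the \emph{specific} procedures \cref{algorithm:vertex-reduction} and \cref{algorithm:minimal-presentation}, as written in the paper, do so. Your proof sketch invents and analyzes different algorithms: a union--find pass that marks components ``polluted'' when they touch a lower grade, and an augmented union--find with ``live'' generator labels processed along a linear extension of~$\Pscr$. Neither of these is what the paper's algorithms do. \cref{algorithm:vertex-reduction} runs a depth-first search from minimal vertices and \emph{collapses} a monotonic forest of collapsible edges, producing a vertex-minimal filtered graph whose vertex grades are then read off as $\beta_0$. \cref{algorithm:minimal-presentation} calls \cref{algorithm:vertex-reduction}, then iterates over the remaining edges in arbitrary order, testing deletability of each edge $e$ by a breadth-first search in $(G,f)_{f(e)}\setminus\{e\}$ and removing it if deletable; the presentation matrix is then the boundary map $\partial_1^{(G',f)}$ of the resulting minimal filtered graph. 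A correct proof must establish correctness and running time of \emph{these} procedures.

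The paper's route is quite different in structure from yours. It isolates the notion of a \emph{minimal filtered graph} (no collapsible or deletable edges) and proves (\cref{theorem:betti-tables-minimal-filtered-graph}) that for such a graph the tautological presentation $\partial_1^{(G,f)}$ is already minimal; correctness of the two algorithms then reduces to showing that they terminate with a vertex-minimal (resp.\ minimal) graph that is homology-equivalent to the input, which in turn rests on the lemmas that collapsing a collapsible edge and deleting a deletable edge preserve $H_0$ (\cref{lemma:vertex-reduction-same-homology,lemma:edge-reduction-H1}) and that these operations cannot create new collapsible or deletable edges (\cref{lemma:deleting-collapsin-does-not-change}). The quadratic bound for \cref{algorithm:minimal-presentation} comes simply from $|E'|$ many BFS calls, each $O(|G|)$. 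Your colimit formula for $\beta_0$ and your minimality-via-live-labels argument may well lead to valid alternative algorithms, but they do not prove the theorem as stated.
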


We also introduce a more efficient algorithm specialized to the two-parameter case, which computes all Betti tables, that is, $0$th, $1$st, and $2$nd, by Hilbert's syzygy theorem.

\begin{restatable}{theoremx}{loglineartime}
    \label{result:log-linear-time}
    Let $(G,f)$ be a finite $\Rbb^2$-filtered graph.
    \cref{algorithm:betti-tables}
    computes minimal presentations and the Betti tables of $H_0(G,f; \kbb)$ and $H_1(G,f; \kbb)$ in $O(\,|G|\, \log |G|\,)$ time.
\end{restatable}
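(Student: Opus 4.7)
The plan is to build on \cref{algorithm:minimal-presentation}, exploiting two features of the $\Rbb^2$-graded setting to lower the runtime from quadratic to log-linear. First, $\Rbb^2$ admits a total refinement of its partial order, enabling a sweep-line processing of the simplices of $G$. Second, Hilbert's syzygy theorem guarantees that every $\Rbb^2$-graded $\kbb[x_1,x_2]$-module has a minimal projective resolution of length at most $2$, so a minimal presentation determines all Betti tables.

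The algorithm first sorts the vertices and edges of $G$ by a lexicographic refinement of $f$, at cost $O(|G|\log|G|)$, and then sweeps through them in this order while maintaining a time-stamped union-find data structure supporting $\addattime$, $\mergeattime$, $\timeofmerge$, and $\representative$, each in amortized $O(\log|G|)$ time. For each vertex $v$, I would call $\addattime(v, f(v))$ and mark $v$ as a candidate generator of $H_0$. For each edge $e = (u,w)$, I would compare $\representative(u, f(e))$ with $\representative(w, f(e))$: if they differ, $e$ merges two components and yields a candidate relation of $H_0$, contributing to $\beta_1(H_0)$; if they coincide, $e$ closes a cycle and yields a candidate generator of $H_1$. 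The minimality tests for these candidates reduce, thanks to the lexicographic order, to a small number of union-find queries each.

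Correctness then reduces to \cref{result:quadratic-time-algo}: the sweep produces the same minimal presentation of $H_0$ as \cref{algorithm:minimal-presentation}, and the cycle bookkeeping produces a minimal presentation of $H_1$. The Betti tables are read off directly, with $\beta_2(H_0)$ recovered from syzygies among the relations, equivalently, from cycle-creating edges whose cycles decompose as sums of earlier ones. The main obstacle is the minimality check in the bigraded case, since a vertex or relation may be dominated by several pairwise-incomparable predecessors rather than by a single elder as in the one-parameter case. This is resolved by processing in lexicographic order and by time-stamping the union-find: at the moment each query is made, every strictly-smaller grade has already been inserted, so minimality reduces to comparing the current component of the simplex with the set of representatives carrying strictly smaller grades, which the data structure maintains implicitly. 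The total cost is therefore dominated by sorting together with $O(|G|)$ amortized-logarithmic operations, yielding $O(|G|\log|G|)$.
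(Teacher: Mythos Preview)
Your high-level plan---sort lexicographically, sweep, maintain some merge-time structure---matches the paper's \cref{algorithm:betti-tables}, but three genuine gaps remain.

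\textbf{The data structure and the $2$D-to-$1$D reduction.} You invoke a ``time-stamped union-find'' with $\timeofmerge$ and $\mergeattime$ in $O(\log|G|)$, but a standard union-find does not support these: during a lex sweep the $y$-times at which merges are inserted are \emph{not} monotone (an edge at grade $(3,1)$ is processed after one at $(2,5)$), so you must be able to merge at an arbitrary past time and still answer $\timeofmerge$ correctly. This needs the mergeable-tree / dynamic-dendrogram structure of \cite{georgiadis}, and you have not identified it. Relatedly, you write $\representative(u,f(e))$ with $f(e)\in\Rbb^2$, but no $2$D-indexed connectivity oracle is available in $O(\log|G|)$. The paper's actual mechanism is: process edges in lex order so that every previously seen edge has $x$-coordinate $\leq f_{\mathbf{x}}(e)$; then deletability of $e$ reduces to asking whether $e_0,e_1$ are already connected, via previously retained edges, at $y$-level $\leq f_{\mathbf{y}}(e)$. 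That is a one-parameter query, and it is exactly what the dynamic dendrogram answers. Your proposal never states this reduction, and without it the sweep does not yield correct deletability tests.

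\textbf{The computation of $\beta_2(H_0)$ and $\beta_0(H_1)$.} Your description (``cycle-creating edges whose cycles decompose as sums of earlier ones'') is not what happens. In \cref{algorithm:betti-tables}, an edge $e$ that is \emph{not} deletable may still be \emph{cycle-creating} in the sense of \cref{definition:cycle-creating}: its endpoints are connected by lex-earlier edges at some $y$-level $s>f_{\mathbf{y}}(e)$. Such an edge contributes to $\beta_1(H_0)$ at $f(e)$ \emph{and} to $\beta_2(H_0)=\beta_0(H_1)$ at $(f_{\mathbf{x}}(e),s)$, where $s$ is precisely the $\timeofmerge$ returned by the dendrogram. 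Correctness of this step is the content of \cref{theorem:second-betti-table}, which describes $\ker\partial_1^{(G,f)}$ via minimal witnesses; you cannot bypass it with a generic appeal to Hilbert's syzygy theorem. Separately, deletable edges contribute to $\beta_0(H_1)$ at $f(e)$ (by \cref{lemma:edge-reduction-H1}) but \emph{not} to $\beta_2(H_0)$; your proposal conflates these two sources.

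\textbf{Vertex minimality.} The paper does not test minimality of vertices during the sweep; it first runs \cref{algorithm:vertex-reduction} in linear time to produce a vertex-minimal graph, and only then sweeps. Your ``mark $v$ as a candidate generator \ldots\ minimality tests reduce to a small number of union-find queries'' is not substantiated, and in the bigraded case a vertex can be dominated along a path of collapsible edges that no single query detects.
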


Of note is the fact that Betti tables of $0$-dimensional two-parameter persistent homology can be computed in log-linear time, as in the one-parameter case.

\smallskip

As another main contribution, we establish a connection between minimal presentations and connectivity properties of filtered graphs (\cref{theorem:betti-tables-minimal-filtered-graph,theorem:second-betti-table}), which allows us to abstract away the algebraic problem, and focus on a simpler combinatorial problem.
The correctness of our algorithms is based on these results.

We conclude the paper by studying the dependence of Betti tables of zero-dimensional persistent homology on the field of coefficients.
Another interesting consequence of \cref{theorem:betti-tables-minimal-filtered-graph}
and the fact that the notion of minimal filtered graph is entirely combinatorial
is the following:
%that, for arbitrary posets, the $0$th and $1$st Betti tables of $0$-dimensional persistent homology are independent of the field of coefficients:

%\cref{algorithm:vertex-reduction,algorithm:minimal-presentation,algorithm:betti-tables,algorithm:minimal-resolution} are
%Using this, we deduce the following algebraic fact.
%From this, and an argument using filtered $2$-complexes, we deduce the following algebraic facts.

%\begin{resultx}
%    \label{result:main-result-independence-k}
%    If $\Pscr$ is a poset, and $(G,f)$ is a $\Pscr$-filtered graph, then the $0$th and $1$st Betti tables of $H_0(G,f; \kbb)$ are independent of the field $\kbb$.
%    If $(G,f)$ is an $\Rbb^2$-filtered graph, all Betti tables of $H_0(G,f; \kbb)$ are independent of the field $\kbb$.
%\end{resultx}

\begin{restatable}{theoremx}{mainresultindependencek}
    \label{result:main-result-independence-k}
    If $\Pscr$ is a poset and $(G,f)$ is a finite $\Pscr$-filtered graph, then the $0$th and $1$st Betti tables of $H_0(G,f; \kbb)$ are independent of the field $\kbb$.
    %If $\Pscr$ is a lattice or a finite poset (or more generally, upset-constructible), then the $2$nd Betti table of $H_0(G,f; \kbb)$ and the $0$th Betti table of $H_1(G,f; \kbb)$ are well-defined and independent of the field $\kbb$.
\end{restatable}

For a reader familiar with one-parameter persistence, \cref{result:main-result-independence-k} may seem entirely natural, and perhaps not surprising; after all, the barcode of $0$-dimensional one-parameter persistent homology is independent of the field of coefficients (since the elder rule is independent of the field of coefficients).
But, remarkably, this intuition from the one-parameter case does not generalize: in the case of persistent modules indexed by non-linear posets, higher Betti tables of $0$-dimensional persistent homology can depend on the characteristic of the field of coefficients (and in fact, only on the characteristic), as our final main result shows.
%We show this in \cref{proposition:field-dependence}, which says that there exists a finite poset $\Pscr$ and a $\Pscr$-filtered graph with the property that the Betti tables of its $0$-dimensional homology do depend on the characteristic of the field of coefficients.
%For this, we use a well known relationship between 
% repurpose an example originally due to Mitchell \cite{mitchell}.
%In the result, $\Ssf_p, \Psf_p : \Pscr \to \vect_\kbb$ denote, respectively, the simple and the indecomposable projective persistence module corresponding to an element $p \in \Pscr$.

\begin{restatable}{theoremx}{fielddependence}
    \label{result:field-dependence}
    For every $j \geq 2$, there exists a finite poset $\Pscr$ and a finite $\Pscr$-filtered graph $(G,f)$ such that $\beta_j(H_0(G,f;\kbb))$ depends on the characteristic of the field $\kbb$.
    However, if the characteristic of fields $\kbb$ and $\kbb'$ is the same, then $\beta_j(H_0(G,f;\kbb)) = \beta_j(H_0(G,f;\kbb'))$.
\end{restatable}

In particular, \cref{result:field-dependence} implies that \cref{result:main-result-independence-k} is tight, in the sense that the $0$th and $1$st Betti tables of $H_0(G,f;\kbb)$ are the only ones that are always independent of the field, when the indexing poset is arbitrary.
To prove \cref{result:field-dependence} we construct an example by hand for the case of $\beta_2(H_0(G,f;\kbb))$; in this example, we use the poset $\Pscr = \Rbb^4$, which in particular implies that field independence does not even hold in the familiar case of multiparameter persistence (i.e., when the indexing poset is $\Rbb^n$).
To prove \cref{result:field-dependence} for higher Betti tables, we use standard facts from the representation theory of finite dimensional algebras, and, in particular, a result of Igusa and Zacharia \cite[Theorem~1.2]{igusa-zacharia}, which relates Ext groups between simple poset representations to the reduced cohomology of the geometric realization of certain subposets.
\medskip

\noindent \textit{Implementation and experiments.}
Our final main contribution is the implementation of our main algorithms, and a computational comparison with RIVET~\cite{rivet} and mpfree~\cite{kerber-rolle} on several datasets, including synthetic 2D and 3D point clouds and real-world data from NYC taxi pickups.
Our experiments demonstrate that our algorithm is significantly faster and more memory-efficient than existing methods, particularly for large datasets. We also provide visualizations of Hilbert functions computed from Betti numbers, showcasing applications to clustering analysis.

% \luis{mention experiments, examples, code} Added below in line 473.

\paragraph{Structure of the paper and summary of the approach}
This paper is an extended version of \cite{short-version}.
The main body of the paper has five sections: one on background (\cref{section:background}), one on theoretical results (\cref{section:theory}), one on algorithms (\cref{section:algorithms}), one on dependence of the field of coefficients (\cref{section:field-dependence}), and one on experiments (\cref{section:experiments}).
\cref{section:proofs} contains some technical proofs.

In order to describe and prove the correctness of our algorithms, we introduce, in the theory section, the notion of a \emph{minimal filtered graph} (Definition \ref{definition:minimal-filtered-graph}).
Informally, a minimal filtered graph is one whose vertices and edges induce a minimal presentation of its $0$-dimensional persistent homology (see \cref{theorem:betti-tables-minimal-filtered-graph}).
A graph is not minimal if it has some edge that can be either contracted or deleted without changing its $0$-dimensional persistent homology (for example, the graph of \cref{fig:graded-graph} has both contractible and deletable edges, and is thus not minimal; see Examples \ref{example:resolution-of-H_0} and \ref{example:minimal-vs-not-minimal}).
The idea is that, by contracting and deleting edges that do not change $0$-dimensional persistent homology, one inevitably ends up with a minimal filtered graph, from which a minimal presentation can be easily extracted.
Our main theoretical contributions (\cref{theorem:betti-tables-minimal-filtered-graph,theorem:second-betti-table}) make this idea precise by giving an explicit minimal presentation of $H_0$ of any minimal filtered graph, and an explicit minimal resolution of $H_0$ of any minimal $\Rbb^2$-filtered graph.
Our main algorithmic contributions are efficient algorithms for contracting and deleting edges as necessary.
\cref{algorithm:betti-tables} makes use of a dynamic tree data structure~\cite{sleator-tarjan}, which is the main ingredient that allows us to compute the Betti tables of bifilitered graphs in log-linear time;
we give more details in \cref{section:algorithms}.

\paragraph{Remark about multi-critical filtrations}
The filtrations considered in this paper are \emph{$1$-critical}, meaning that each simplex (vertex or edge, in the case of graphs) appears at exactly one grade.
In \cref{section:multi-critical}, we describe a simple preprocessing step to turn 
a finite multi-critical filtration by a lattice into a $1$-critical filtration with the same $0$-dimensional homology.
In the two-parameter case, this construction does not change the input complexity, but for other lattices it may increase the input size.
Note that the construction does not (necessarily) preserve $1$-dimensional homology.

% \luis{How are we handling this with degree-Rips?} Same as in \cref{section:multi-critical}

\paragraph{Related work}
To the best of our knowledge, the only subcubic algorithm related to \cref{algorithm:betti-tables} is that of \cite{cai-kim-memoli-wang}, which, in particular, can be used to compute the Betti tables of $H_0$ of a function-Rips complex of a finite metric space $X$ in time $O(|X|^2\, \log |X|)$.
When applied to a function-Rips complex, our \cref{algorithm:betti-tables} has the same time complexity; however, our \cref{algorithm:betti-tables} applies to arbitrary bifiltered graphs, while function-Rips complexes are very special (and do not include arbitrary filtered graphs).
The inner workings of \cref{algorithm:betti-tables} are also different from those of~\cite{cai-kim-memoli-wang}: While we rely on dynamic trees, their algorithm relies on a dynamic minimum spanning tree, and, notably, on the fact that, in a function-Rips bifiltration, vertices are filtered exclusively by one of the two filtering functions (so that vertices are linearly ordered in the bifiltration).

The computation of minimal presentations of $\Rbb^n$-filtered complexes is studied in \cite{bender-gafvert-lesnick}.
Since they consider homology in all dimensions, their complexity is significantly worse than the quadratic complexity of \cref{algorithm:minimal-presentation}, which only applies to $0$-dimensional homology.
%\luis{Ask Mike about the status of this paper.}

%\luis{add ``A refinement of multidimensional persistence''. Kevin P. Knudson?}
%\luis{papers considering multiparameter $H_0$ specifically}

Part of the contributions in this paper could be rephrased using the language of discrete Morse theory for multiparameter filtrations \cite{allili-kaczynski-landi,scaramuccia,brouillette}, specifically, \cref{algorithm:local-components,algorithm:vertex-reduction} are essentially computing acyclic matchings which respect the filtration.
However, this point of view requires extra background, and, to the best of our understanding, it does not simplify the description of (the more interesting) \cref{algorithm:betti-tables}.
%Since we are only dealing with $0$-dimensional homology, we prefer to keep the article self-contained.

\section{Background}
\label{section:background}

As is common in persistence theory, we assume familiarity with very basic notions of category theory, specifically, that of a category and of a functor.
We let $\kbb$ denote a field, $\vect_\kbb$ denote the category of $\kbb$-vector spaces, and $\set$ denote the category of sets.
When the field $\kbb$ plays no role, we may denote $\vect_\kbb$ simply by $\vect$.

\paragraph{Graphs and filtered graphs}
%\luis{Reviewer 3 suggests using undirected graphs, I'm not sure this really matters.  Moreover, to compute homology, we need to orient the graph anyways, so we may as well use directed graphs.}
A \emph{graph} $G = (V,E,\partial)$ consists of finite sets $V$ and $E$, and a function $\partial : E \to V \times V$.
We refer to the elements of $V$ as \emph{vertices}, typically denoted $v,w,x,y \in V$, and to the elements of $E$ as \emph{edges}, typically denoted $e,d,h \in E$.
If $\partial e = (v,w)$, we write $e_0 = v$ and $e_1 = w$.
The \emph{size} of a graph $G$ is $|G| = |V| + |E|$.

A \emph{subgraph} of a graph $G = (V,E,\partial)$ is a graph $G' = (V',E',\partial')$, where $V'\subseteq V$, $E' \subseteq E$, and such that $\partial' = \partial|_{E'}$ takes values in $V' \times V' \subseteq V \times V$.
If $G'$ is a subgraph of $G$, we write $G' \subseteq G$.

If $E' \subseteq E$ is a set of edges of $G$, we let $G \setminus E'$ be the subgraph of $G$ with the same vertices and $E \setminus E'$ as set of edges.

Let $\Pscr$ be a poset.
A \emph{$\Pscr$-filtered graph} $(G,f^V,f^E)$ consists of a graph $G$ and functions $f^V : V \to \Pscr$ and $f^E : E \to \Pscr$ such that $f^V(e_0) \leq f^E(e)$ and $f^V(e_1) \leq f^E(e)$ for all $e \in E$.
When there is no risk of confusion, we refer to $\Pscr$-filtered graphs simply as filtered graph, and denote both $f^V$ and $f^E$ by $f$ and the filtered graph $(G,f^V,f^E)$ by $(G,f)$.

%Let $n \geq 1 \in \Nbb$.
%An \emph{$n$-filtered graph} is an $\Rbb^n$-filtered graph, where $\Rbb^n$ is endowed with the componentwise order inherited from $\Rbb$.
%$(G,f^V,f^E)$ consists of a graph $G$ and functions $f^V : V \to \Rbb^n$ and $f^E : E \to \Rbb^n$ such that $f^V(e_0) \leq f^E(e)$ and $f^V(e_1) \leq f^E(e)$ for all $e \in E$.
%When there is no risk of confusion, we refer to $n$-filtered graphs simply as filtered graph, and denote both $f^V$ and $f^E$ by $f$ and the filtered graph $(G,f^V,f^E)$ simply by $(G,f)$.

If $(G,f)$ is a $\Pscr$-filtered graph and $r \in \Pscr$, we let $(G,f)_r$ be the subgraph of $G$ with vertices $\{v \in V : f(v) \leq r\}$ and edges $\{e \in E : f(e) \leq r\}$.

%\luis{Reviwer 3 says: If a persistence module will be defined as a functor, isn't it better style to define a filtration as a functor? I don't think we are defining general filtrations anywhere, we just care about filtered graphs.}

\paragraph{Persistence modules and persistent sets}
%\todo{Reviewer 3 suggests not restricting to finite dimensional vector spaces}
Let $\Pscr$ be a poset.
%Let $n \geq 1 \in \Nbb$.
A \emph{$\Pscr$-persistence module} is a functor $\Pscr \to \vect$, where $\Pscr$ is the category associated with $\Pscr$.
Explicitly, a $\Pscr$-persistence module $M : \Pscr \to \vect$ consists of the following:
\begin{itemize}
    \item for each $r \in \Pscr$, a vector space $M(r)$;
    \item for each pair $r \leq s \in \Pscr$, a linear morphism $\phi^M_{r,s} : M(r) \to M(s)$; such that
    \item for all $r \in \Pscr$, the linear morphism $\phi^M_{r,r} : M(r) \to M(r)$ is the identity;
    \item for all $r \leq s \leq t \in \Pscr$, we have $\phi^M_{s,t} \circ \phi^M_{r,s} = \phi^M_{r,t} : M(r) \to M(t)$.
\end{itemize}
%\luis{Reviewer 3 suggests taking the ``such that'' above, out of the itemize. I think it's fine as it is, and we save a line.}
When there is no risk of confusion, we may refer to a $\Pscr$-persistence module as a persistence module.
An \emph{$n$-parameter persistence module} ($n \geq 1 \in \Nbb$) is an $\Rbb^n$-persistence module.

A \emph{morphism} $g : M \to N$ between persistence modules is a natural transformation between functors, that is, a family of linear maps $\{g_{r} : M(r) \to N(r)\}_{r \in \Pscr}$ with the property that
$\phi^N_{r,s} \circ g_r = g_s \circ \phi^M_{r,s} : M(r) \to N(s)$, for all $r \leq s \in \Pscr$.
Such a morphism is an \emph{isomorphism} if $g_r : M(r) \to N(r)$ is an isomorphism of vector spaces for all $r \in \Pscr$.

If $M,N : \Pscr \to \vect$ are persistence modules, their \emph{direct sum}, denoted $M \oplus N : \Pscr \to \vect$, is the persistence module with $(M\oplus N)(r) \coloneqq M(r) \oplus N(r)$ and with $\phi^{M\oplus N}_{r,s} \coloneqq  \phi^{M}_{r,s} \oplus \phi^{N}_{r,s} : M(r) \oplus N(r) \to M(s) \oplus N(s)$, for all $r \leq s \in \Pscr$.

\smallskip

Similarly, a \emph{$\Pscr$-persistent set} is a functor $\Pscr \to \set$.
%, where $\set$ is the category of finite sets.
The concepts of \emph{$n$-parameter persistent set}, and of \emph{morphism} and \emph{isomorphism} between persistent sets are defined analogously.

\paragraph{Persistent homology and connected components of filtered graphs}
\label{section:PH-graphs}
If $S \in \set$, we let $\langle S \rangle_\kbb \in \vect$ denote the free vector space generated by $S$; this defines a functor $\langle - \rangle_\kbb : \set \to \vect_\kbb$.
Let $G = (V,E,\partial)$ be a graph.
Consider the $\kbb$-linear map
\begin{align}
    \label{equation:definition-homology}
    \langle E\rangle_\kbb
      & \xrightarrow{\;\;d\;\;}
    \langle V \rangle_\kbb                         \\
    e & \xmapsto{\;\;\;\;\;\;} e_1 - e_0 \nonumber
\end{align}
The \emph{$0$-dimensional homology} of $G$, denoted $H_0(G; \kbb)$, is the $\kbb$-vector space $\coker(d)$, and the \emph{$1$-dimensional homology} of $G$, denoted $H_1(G; \kbb)$, is the $\kbb$-vector space $\ker(d)$.
In particular, every vertex $v \in V$ gives an element $[v] \in H_0(G; \kbb)$.
When there is no risk of confusion, we omit the field $\kbb$ and write $H_i(G)$ instead of $H_i(G;\kbb)$.

\smallskip

Homology is functorial, in the following sense.
%Let $V' \subseteq V$ and $E' \subseteq E$ such that $e'_0, e'_1 \in V'$ for all $e' \in E'$.
If $G' = (V',E',\partial')$ is a subgraph of $G$, we have a commutative square
\[
    \begin{tikzpicture}
        \matrix (m) [matrix of math nodes,row sep=2em,column sep=3em,minimum width=2em,nodes={text height=1.75ex,text depth=0.25ex}]
        {
            \langle E'\rangle_\kbb & \langle V' \rangle_\kbb                         \\
            \langle E\rangle_\kbb  & \langle V \rangle_\kbb \\};
        \path[line width=0.75pt, -{>[width=8pt]}]
        (m-1-1) edge [above] node {$d'$} (m-1-2)
        (m-2-1) edge [above] node {$d$} (m-2-2)
        (m-1-1) edge [right hook-{>[width=8pt]}] (m-2-1)
        (m-1-2) edge [right hook-{>[width=8pt]}] (m-2-2)
        ;
    \end{tikzpicture}
\]
induced by the inclusions $V' \subseteq V$ and $E' \subseteq E$.
This induces linear maps $H_0(G') \to H_0(G)$ and $H_1(G') \to H_1(G)$.
Moreover, the morphism $H_\bullet(G'') \to H_\bullet(G)$ induced by a subgraph $G'' \subseteq G' \subseteq G$ is equal to the composite $H_\bullet(G'') \to H_\bullet(G') \to H_\bullet(G)$.

If $(G,f)$ is a $\Pscr$-filtered graph, and $i \in \{0,1\}$, we get a $\Pscr$-persistence module $H_i(G,f) : \Pscr \to \vect$, with $H_i(G,f)(r) = H_i((G,f)_r)$, and with structure morphism $H_i(G,f)(r) \to H_i(G,f)(s)$ for $r \leq s \in \Pscr$ induced by the inclusion of graphs $(G,f)_r \subseteq (G,f)_s$.

\smallskip

The set of \emph{connected components} of $G$, denoted $\pi_0(G)$, is the quotient of $V$ by the equivalence relation $\sim$ where $v \sim w \in V$ if and only if there exists a path in $G$ between $v$ and $w$.
If $v \in V$, we let $[v] \in \pi_0(G)$ denote its connected component, so that $[v] = [w] \in \pi_0(G)$ if and only if $v$ and $w$ belong to the same connected component.

The set of connected components is also functorial with respect to inclusions $G' \subseteq G$, since $[v] = [w] \in \pi_0(G')$ implies $[v] = [w] \in \pi_0(G)$.
In particular, if $(G,f)$ is a $\Pscr$-filtered graph, we get a $\Pscr$-persistent set $\pi_0(G,f) : \Pscr \to \set$, with $\pi_0(G,f)(r) = \pi_0((G,f)_r)$, and with the structure morphism $\pi_0(G,f)(r) \to \pi_0(G,f)(s)$ for $r \leq s \in \Pscr$ induced by the inclusion of graphs $(G,f)_r \subseteq (G,f)_s$.

\smallskip

The following is straightforward to check.

\begin{lemma}
    \label{lemma:H0-is-pi0-linearized}
    If $G$ is a graph, then the map $\langle \pi_0(G) \rangle_\kbb \to H_0(G)$ sending a basis element $[v] \in \pi_0(G)$ to $[v] \in H_0(G)$ is well-defined and an isomorphism of vector spaces.
    In particular, if $(G,f)$ is a $\Pscr$-filtered graph, composing the persistent set $\pi_0(G,f) : \Pscr \to \set$ with the free vector space functor $\langle - \rangle_\kbb : \set \to \vect$ yields a persistence module isomorphic to $H_0(G,f) : \Pscr \to \vect$.
    \qed
\end{lemma}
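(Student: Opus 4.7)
The plan is to verify the isomorphism $\phi : \langle \pi_0(G)\rangle_\kbb \to H_0(G)$ for a single graph, making sure the construction is manifestly natural in subgraph inclusions, so that the persistence module statement will follow immediately from pointwise application of $\phi$ across the filtration.

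First I would check well-definedness: if $[v] = [w]$ in $\pi_0(G)$, then there is a path $v = v_0, v_1, \dots, v_n = w$ with connecting edges $e_1, \dots, e_n$, and the identity $w - v = \sum_{i=1}^n (v_i - v_{i-1})$ exhibits $w - v$ as a signed sum of terms $\pm d(e_i)$ (with signs depending on the orientation of each $e_i$ along the path), hence as an element of $\im(d)$. Therefore $[v] = [w]$ in $\coker(d) = H_0(G)$, and the rule $[v] \mapsto [v]$ extends linearly to a well-defined map $\phi$.

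For the isomorphism, the cleanest approach is to split both sides by connected components. The decompositions $\langle V\rangle_\kbb = \bigoplus_C \langle V_C\rangle_\kbb$ and $\langle E\rangle_\kbb = \bigoplus_C \langle E_C\rangle_\kbb$ are compatible with $d$, since both endpoints of any edge lie in the same component, so $H_0(G)$ splits correspondingly. It then suffices to show that for a connected graph $C$ and any vertex $v$, the map $\kbb \to H_0(C)$ sending $1 \mapsto [v]$ is an isomorphism. Surjectivity together with the fact that every class $[w]$ equals $[v]$ in $H_0(C)$ is precisely the well-definedness argument applied inside $C$. For injectivity I would invoke the augmentation $\epsilon : \langle V_C \rangle_\kbb \to \kbb$ that sends every vertex to $1$; since $\epsilon(d(e)) = \epsilon(e_1) - \epsilon(e_0) = 0$ for every edge, $\epsilon$ descends to $\bar\epsilon : H_0(C) \to \kbb$ with $\bar\epsilon([v]) = 1$, providing the required left inverse.

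The persistence module statement is then formal: the rule $[v] \mapsto [v]$ commutes with the functoriality of both $\pi_0$ and $H_0$ under subgraph inclusions, so applying $\phi$ pointwise to the inclusions $(G,f)_r \subseteq (G,f)_s$ yields a natural isomorphism of $\Pscr$-persistence modules. I do not anticipate any serious obstacle here; the only step requiring a brief check is the direct-sum decomposition along connected components, which rests on the simple combinatorial fact that edges do not cross components.
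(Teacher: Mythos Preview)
Your proposal is correct and complete. The paper itself omits the proof entirely, declaring the lemma ``straightforward to check'' and marking it with \qed, so there is no paper argument to compare against; your well-definedness via path sums, component-wise splitting with the augmentation $\epsilon$ for injectivity, and naturality check constitute exactly the kind of routine verification the authors had in mind.
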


\paragraph{Projective persistence modules}
Given $r \in \Pscr$, let $\Psf_r : \Pscr \to \vect$ be the persistence module with $\Psf_r(s) = \kbb$ if $r \leq s$ and $\Psf_r(s) = 0$ if $r \nleq s$, with all structure morphisms $\kbb \to \kbb$ being the identity.
Equivalently, one can define $\Psf_r$ to be $H_0(\{x\},\emptyset,\partial,f)$, with $f(x) = r$.

%\todo{
%    Reviewer 3 writes about the next remark: This text is unclear, and needs work.  I also think the notation could be made more concise.  What you are doing here is identifying a basis for M, and there is a canonical one.  Perhaps introduce bases explicitly...After reading the example at the end of section 2, it became more clear what you are trying to do notationally.  But it can be explained better.
%}

\begin{notation}
    If $I$ is a finite set and $f : I \to \Pscr$ is any function, we can consider the direct sum $M = \bigoplus_{i \in I} \Psf_{f(i)} : \Pscr \to \vect$.
    When we need to work with elements of such a direct sum, we distinguish summands by writing $M = \bigoplus_{i \in I} \left(\Psf_{f(i)} \cdot \{i\}\right)$, 
    so that 
    $\left(\Psf_{f(i)} \cdot \{i\}\right)(r) = 0$ if $r \ngeq f(i)$ and $\left(\Psf_{f(i)} \cdot \{i\}\right)(r)$ is equal to the free vector space generated by $\{i\}$, for $r \geq f(i)$.
\end{notation}

\begin{definition}
    \label{definition:projective}
    A persistence module $M : \Pscr \to \vect$ is \emph{projective of finite rank} if there exists a function $\beta^M : \Pscr \to \Nbb$ of finite support such that $M \cong \bigoplus_{r \in \Pscr^n} \Psf_r^{\beta^M(r)}$.
\end{definition}

Note that, drawing inspiration from commutative algebra, projective persistence modules are sometimes also called \emph{free}.
%; in this paper we do not use this terminology, however.
The following result justifies the term projective used in Definition \ref{definition:projective}; see, e.g., \cite[Section~3.1]{rotman} for the usual notion of projective module.

%\luis{Reviewer 3 writes about the next lemma: Why is the finite rank assumption needed here?  Along with the edit I suggested above to work in the category of all vector spaces, I think the finite rank assumption should be removed.
%The reason is that we haven't defined what it means to be projective not of finite rank.}

\begin{lemma}
    \label{lemma:lifting}
    Let $g : M \to N$ be a surjection between $\Pscr$-persistence modules, and let $h : P \to N$ with $P$ projective of finite rank.
    There exists a morphism $h' : P \to M$ such that $g \circ h' = h$.
\end{lemma}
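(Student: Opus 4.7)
The plan is to reduce to the case $P = \Psf_r$ for a single $r \in \Pscr$, by observing that morphisms out of a direct sum are determined componentwise, and then to use the fact that morphisms out of $\Psf_r$ are classified by elements of their target at grade $r$.

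More precisely, I would first establish the universal property of $\Psf_r$: for any persistence module $L : \Pscr \to \vect$, the map $\Hom(\Psf_r, L) \to L(r)$ sending a morphism $\alpha$ to $\alpha_r(1)$ is a bijection, with inverse sending $\ell \in L(r)$ to the morphism defined at $s \geq r$ by $1 \mapsto \phi^L_{r,s}(\ell)$ (and zero at grades $s \ngeq r$). Naturality of the inverse is immediate from the functoriality of $L$. This reduces the problem as follows: write $P = \bigoplus_{i \in I}\bigl(\Psf_{f(i)} \cdot \{i\}\bigr)$ using the Notation preceding Definition \ref{definition:projective}; then both $h$ and any candidate lift $h'$ are determined by their components on each summand, and in turn each component $\Psf_{f(i)} \cdot \{i\} \to N$ (resp.\ to $M$) is determined by a single element of $N(f(i))$ (resp.\ $M(f(i))$).

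Concretely, for each $i \in I$ let $n_i \coloneqq h_{f(i)}(i) \in N(f(i))$. Since $g$ is a surjection of persistence modules, $g_{f(i)} : M(f(i)) \to N(f(i))$ is surjective, so choose $m_i \in M(f(i))$ with $g_{f(i)}(m_i) = n_i$. By the universal property of $\Psf_{f(i)}$, the assignment $i \mapsto m_i$ extends to a unique morphism $h'_i : \Psf_{f(i)} \cdot \{i\} \to M$. Assemble these into $h' \coloneqq \bigoplus_{i \in I} h'_i : P \to M$.

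It remains to verify $g \circ h' = h$. By uniqueness in the universal property, it suffices to check that $(g \circ h')_{f(i)}(i) = h_{f(i)}(i)$ for each $i \in I$; this holds by construction since $(g \circ h')_{f(i)}(i) = g_{f(i)}(m_i) = n_i = h_{f(i)}(i)$. There is no real obstacle here: the lemma is the persistence-module analogue of the fact that free modules are projective in ordinary ring theory, and the only ingredient beyond bookkeeping is the (pointwise) surjectivity of $g_{f(i)}$, which is built into the hypothesis.
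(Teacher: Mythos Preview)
Your proof is correct and follows essentially the same approach as the paper's: write $P$ as a direct sum of the $\Psf_{f(i)}$, observe that a morphism out of $P$ is determined by where each generator $\{i\}$ is sent, and choose a preimage of $h_{f(i)}(\{i\})$ under the surjection $g_{f(i)}$. The only difference is that you spell out the universal property $\Hom(\Psf_r, L) \cong L(r)$ explicitly, whereas the paper simply asserts that specifying images of generators suffices; the underlying argument is identical.
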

\begin{proof}
    Let $P = \bigoplus_{i \in I} \Psf_{f(i)} \cdot \{i\}$.
    To define a morphism $P \to M$ it is sufficient to say, for each $j \in J$, to which element of $M(f(j))$ the element $\{j\} \in P(f(j))$ gets mapped to.
    Since $g$ is a surjection, let us choose $m_j \in M(f(j))$ in the preimage of $h_{f(j)}(\{j\}) \in N(f(j))$, for each $j \in J$.
    Let $h' : P \to M$ be defined by mapping $\{j\} \in P(f(j))$ to $m_j \in M(f(j))$.
    It is clear that this morphism satisfies $g \circ h' = h$.
\end{proof}

%\begin{lemma}
%    \label{lemma:finitely-presentable-abelian-category}
%    The kernel of any morphism between finitely presentable $n$-parameter persistence modules is finitely presentable.
%\end{lemma}
%\begin{proof}
%    \todo{}
%\end{proof}

\paragraph{Resolutions, presentations, and Betti tables}
The next result is standard; see, e.g., \cite[Proposition~6.24]{lesnick-course}.
%
%and follows from the uniqueness of decomposition into indecomposables:
%specifically, a non-zero persistence module is indecomposable if it is not isomorphic to the direct sum of two non-zero persistence modules; and any persistence module $M : \Pscr \to \vect$ can be written as a direct sum of indecomposable persistence modules in an essentially unique way.
%For details, see, e.g., \cite{botnan-crawleyboevey} or~\cite[Section~2]{bauer-scoccola}.
%
%\todo{Reviewer 3 writes of the following lemma: This is true without a finite rank or pointwise finite dimension assumption, and can be proven by an easy reduction to ordinary linear algebra (uniqueness of vector space dimension).  One does not need krull-schmidt-azumaya.  I don't know a published reference for this, but a proof appears in Proposition 6.24 of Lesnick's course notes on multiparameter persistence.}

\begin{lemma}
    \label{lemma:betti-table-characterization}
    If $M$ is projective of finite rank, then there exists exactly one function (necessarily of finite support) $\beta^M : \Pscr \to \Nbb$ such that $M \cong \bigoplus_{r \in \Pscr} \Psf_r^{\beta^M(r)}$.\qed
\end{lemma}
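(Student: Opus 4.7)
The existence half of the statement is immediate from Definition \ref{definition:projective}: a projective persistence module of finite rank admits by definition a decomposition with some $\beta^M : \Pscr \to \Nbb$ of finite support. The content of the lemma is therefore uniqueness, and the plan is to exhibit $\beta^M(r)$ as an invariant of $M$ extracted directly from the underlying functor.

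The invariant I would use, for every $r \in \Pscr$, is
\[
    \gamma^M(r) \;\coloneqq\; \dim_\kbb \Bigl( M(r) \Big/ \textstyle\sum_{s < r} \im(\phi^M_{s,r}) \Bigr).
\]
This is manifestly preserved by isomorphisms of persistence modules: any isomorphism $g : M \to N$ restricts at $r$ to a linear isomorphism $M(r) \to N(r)$ which, by naturality, carries $\im(\phi^M_{s,r})$ onto $\im(\phi^N_{s,r})$ for each $s < r$, so it descends to an isomorphism of quotients. The plan is thus to show that $\gamma^M = \beta^M$ whenever $M \cong \bigoplus_{r} \Psf_r^{\beta^M(r)}$; uniqueness of $\beta^M$ then follows from uniqueness of~$\gamma^M$.

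The computation of $\gamma^M$ reduces to two observations. First, the assignment $M \mapsto \gamma^M(r)$ is additive under direct sums, because at each grade the vector spaces and the structure maps of $M \oplus N$ split as direct sums, so the images and quotients do as well. Second, a direct calculation on a single summand $\Psf_r$ at an arbitrary grade $t \in \Pscr$ gives
\[
    \gamma^{\Psf_r}(t) \;=\; \begin{cases} 1, & t = r, \\ 0, & t \neq r.\end{cases}
\]
Combining additivity with this identity (applied to each summand of a chosen decomposition of $M$) yields $\gamma^M(t) = \beta^M(t)$ for every $t \in \Pscr$, as desired. Finite support of $\beta^M$ is automatic since $\gamma^M$ is finitely supported whenever some decomposition exists.

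The only place that requires care, and what I would treat as the main obstacle, is the grade-$t = r$ case of the computation above. There we need $\sum_{s < r} \im(\phi^{\Psf_r}_{s,r}) = 0$; this uses that $s < r$ forces $r \not\leq s$ (by antisymmetry of the partial order) and therefore $\Psf_r(s) = 0$. The case $t > r$ is handled similarly: choosing $s = r$ shows $\sum_{s < t} \im(\phi^{\Psf_r}_{s,t})$ contains the image of the identity map $\Psf_r(r) = \kbb \to \Psf_r(t) = \kbb$, which is all of $\Psf_r(t)$, so the quotient vanishes. The remaining cases ($t < r$ or $t$ incomparable with $r$) are trivial since already $\Psf_r(t) = 0$.
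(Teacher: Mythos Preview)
Your argument is correct and is the standard one: the invariant $\gamma^M(r) = \dim_\kbb\bigl(M(r)/\sum_{s<r}\im\phi^M_{s,r}\bigr)$ is exactly what one uses to pin down the multiplicities in a projective decomposition (in commutative-algebra language this is the dimension of the degree-$r$ part of $M \otimes \kbb$, where $\kbb$ is the residue field).

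There is nothing to compare against, however: the paper does not prove this lemma. It is stated with a \qed\ and introduced as ``standard'' with a citation to lecture notes, so no argument is given in the paper itself. Your write-up supplies a complete and correct proof where the paper supplies none.

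One very minor remark: the parenthetical ``(necessarily of finite support)'' in the lemma is asserting that \emph{any} function $\beta$ realizing the decomposition---not just one assumed a priori to have finite support---must be finitely supported. Your additivity claim therefore needs to cover possibly infinite direct sums. It does (images, sums of subspaces, and quotients all commute with arbitrary direct sums pointwise), but you might make this explicit in a final version.
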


\begin{definition}
    The \emph{Betti table} of a persistence module $M : \Pscr \to \vect$
    that is projective of finite rank is the function $\beta^M : \Pscr \to \Nbb$ of Lemma \ref{lemma:betti-table-characterization}.
\end{definition}

The following notation is sometimes convenient.

\begin{notation}
    If $r \in \Pscr$, we let $\delta_r : \Pscr \to \Nbb$ be the function defined by $\delta_r(r) = 1$ and $\delta_r(s) = 0$ if $s \neq r$.
    In particular, $\delta_r = \beta^{\Psf_r} : \Pscr \to \Nbb$.
\end{notation}

\begin{definition}
    Let $M : \Pscr \to \vect$ be a $\Pscr$-persistence module.
    A \emph{finite projective cover} of $M$ is a surjective morphism $P \to M$ where $P$ is projective of finite rank, and $\sum_{r \in \Pscr} \beta^P(r) \in \Nbb$ is minimal.
\end{definition}

%\begin{lemma}
%    If $M : \Rbb^n \to \vect$ is finitely presentable, then there exists a projective cover $P \to M$, and any other projective cover $P' \to M$ has the property that $P \cong P'$.
%\end{lemma}
%\luis{todo proof}

%\todo{Reviewer 3 writes of the next definition: This definition is idiosyncratic, in a way that does not sit well with me.  This is a standard notion, and the standard way of defining this is so much more concise.   Why not just give the definition of a resolution in the standard way (exact sequence with proscribed cokernel), and then explain minimal resolutions afterwards?  In particular, 240 and 241 needn't be separate conditions.  In keeping with other comments above about finiteness, I would also save the finiteness assumption until Def. 11.}

\begin{definition}
    Let $M : \Pscr \to \vect$ be a $\Pscr$-persistence module, and let $k \in \Nbb$.
    %A finite \emph{presentation} (resp.~\emph{minimal presentation}) of $M$, consists for morphisms $C_1 \to C_0 \to M$ composing to the zero morphism such that
    %\begin{itemize}
    %    \item $C_0$ and $C_1$ are finitely presentable;
    %    \item $C_0 \to M$ is surjective (resp.~a projective cover);
    %    \item $C_1 \to \ker(C_0 \to M)$ is surjective (resp.~a projective cover).
    %\end{itemize}
    A \emph{finite projective $k$-resolution} (resp.~\emph{minimal finite projective $k$-resolution}) of $M$, denoted $C_\bullet \to M$, is a sequence of morphisms
    $C_k \xrightarrow{\partial_{k}} C_{k-1} \xrightarrow{\partial_{k-1}} \cdots \xrightarrow{\partial_2} C_1 \xrightarrow{\partial_1} C_0 \xrightarrow{\partial_0} M$ satisfying 
    \begin{itemize}
        \item $C_0 \to M$ is surjective (resp.~a projective cover);
        \item $\partial_i \circ \partial_{i+1} = 0$, for every $0 \leq i \leq k-1$ (so that $\partial_{i+1}$ factors through $\ker \partial_i$);
        \item $C_{i+1} \xrightarrow{\partial_{i+1}} \ker(\partial_i)$ is surjective (resp.~a projective cover), for every $0 \leq i \leq k-1$;
        \item $C_i$ is projective of finite rank, for every $0 \leq i \leq k$.
    \end{itemize}
\end{definition}

In particular, a minimal finite projective $0$-resolution is simply a projective cover.

\begin{notation}
    Since we only consider finite resolutions, we omit the word ``finite'' and simply say projective $k$-resolution.
    A \emph{(minimal) projective presentation} is a \emph{(minimal) projective $1$-resolution}.
\end{notation}

\begin{definition}
    Let $k \in \Nbb$.
    A persistence module $M : \Pscr \to \vect$ is \emph{finitely $k$-resolvable} if it admits a finite projective $k$-resolution.
\end{definition}

%\luis{I think we actually don't need to know that resolutions have length at most $n$; double-check!}
%\luis{adjust the following for $\Pscr$ vs $\Rbb^n$}

\begin{definition}
    \label{definition:betti-table}
    Let $M : \Pscr \to \vect$ be finitely $k$-resolvable and let $0 \leq i \leq k$.
    The \emph{$i$th Betti table} of $M$ is the function $\beta^M_i : \Pscr \to \Nbb$ (necessarily of finite support) defined as $\beta^M_i \coloneqq \beta^{C_i}$, where $C_\bullet \to M$ is a minimal $k$-resolution of $M$.
\end{definition}

\begin{notation}
When convenient, we write $\beta_i(M)$ instead of $\beta_i^M$ for the Betti tables of a persistence module $M$.
\end{notation}

The Betti tables of $M$ are also sometimes called the \emph{(multigraded) Betti numbers} of $M$.
The Betti tables of $M$, as defined in Definition \ref{definition:betti-table}, are independent of the choice of minimal presentation or resolution, thanks to the following standard result.

%\todo{Reviewer 3 writes of the next lemma: Need a citation for Lemma 13.  (But I agree that it is correct).}
%\todo{Also says: For context/accessibility, it would be helpful to mention the stronger form of the result, at least in passing, i.e., that any resolution has the minimal one as a summand, and the non-minimal summands are "balls".}

\begin{lemma}
    \label{lemma:minimal-resolution-uniqueness}
    Let $k \in \Nbb$.
    If $M : \Pscr \to \vect$ is finitely $k$-resolvable, then it admits a minimal projective $k$-resolution $C_\bullet \to M$.
    Moreover, any other projective $k$-resolution $C'_\bullet \to M$ has the property that $\beta^{C_i} \leq \beta^{C'_i}$ for all $0 \leq i \leq k$.
    %Moreover, $C_i = 0$ for all $i > n$.
\end{lemma}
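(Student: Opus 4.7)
The plan is to prove existence of minimal resolutions by induction on the homological degree, and to prove the minimality bound via a splitting argument. The main tool throughout is the functor $N \mapsto N\langle r\rangle \coloneqq N(r)/\sum_{s \lneq r}\im(\phi^N_{s,r})$, which detects Betti numbers on projectives: a direct calculation on summands gives $\dim P\langle r\rangle = \beta^P(r)$ when $P$ is projective of finite rank, and for any surjection $g : P \twoheadrightarrow M$ the induced map $g\langle r\rangle$ remains surjective.

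First I would establish the characterization that a surjection $g : P \twoheadrightarrow M$ with $P$ projective of finite rank is a projective cover if and only if $g\langle r\rangle$ is an isomorphism for every $r \in \Pscr$; in particular, the Betti table of any projective cover of $M$ equals the intrinsic invariant $r \mapsto \dim M\langle r\rangle$ and is therefore uniquely determined by $M$. To construct a projective cover from an arbitrary surjection $P \twoheadrightarrow M$, pick a basis of $M\langle r\rangle$ at each of the finitely many grades in its support, lift each basis element to $P(r)$, expand the lift in the summand decomposition $P = \bigoplus_i \Psf_{r_i}$, and take the direct summand $P' \subseteq P$ spanned by the summands that actually appear. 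One then checks that $P' \to M$ is surjective by comparing $P'\langle s\rangle \to M\langle s\rangle$ at each $s$ and inducting on the (finite) number of $\lneq$-predecessors in the support.

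Existence of a minimal $k$-resolution now follows by induction on the degree $i$: assuming $\partial_i : C_i \to \ker(\partial_{i-1})$ has been constructed as a projective cover, the assumed projective $k$-resolution $C'_\bullet \to M$ together with the splitting lemma below shows that $\ker(\partial_i)$ is finitely generated, so we may pick a projective cover of $\ker(\partial_i)$ and use Lemma~\ref{lemma:lifting} to lift it to a map $\partial_{i+1} : C_{i+1} \to C_i$ factoring through the inclusion $\ker(\partial_i) \hookrightarrow C_i$; the process terminates at degree $k$.

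For the minimality bound, the key ingredient is a splitting lemma: every surjection $g : P \twoheadrightarrow M$ with $P$ projective of finite rank decomposes as $P \cong P_0 \oplus P_1$ with $g|_{P_0}$ a projective cover and $P_1 \subseteq \ker g$. Applied inductively at each degree of an arbitrary projective $k$-resolution $C'_\bullet \to M$, this produces a minimal sub-resolution $C_\bullet \subseteq C'_\bullet$ whose Betti tables necessarily satisfy $\beta^{C_i} \leq \beta^{C'_i}$, since the uniqueness part of the characterization forces all minimal resolutions to have the same Betti tables. The main obstacle is carrying out the splitting coherently across degrees, so that the retained parts still assemble into a genuine resolution: after splitting $C'_i \cong C_i \oplus D_i$, one must arrange that the unwanted summand $D_i \subseteq \ker(\partial'_{i-1} \circ g^{-1})$ can be matched against a complementary summand of $C'_{i+1}$. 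This is precisely where Lemma~\ref{lemma:lifting} does the work, by allowing us to lift the chosen splittings through the differentials $C'_{i+1} \twoheadrightarrow \ker(\partial'_i)$ level by level.
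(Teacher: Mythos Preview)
Your approach via the functor $N \mapsto N\langle r\rangle$ is correct in outline and is the standard Nakayama-style route to minimal resolutions, but it differs from the paper's more elementary argument. The paper constructs a projective cover simply by removing summands from a given surjection $P \twoheadrightarrow M$ one at a time until no more can be removed, and then proves the key splitting $Q \cong P \oplus P'$ directly: lifting through each surjection (Lemma~\ref{lemma:lifting}) gives maps $g' : P \to Q$ and $h' : Q \to P$ with $g \circ h' = h$ and $h \circ g' = g$, and Lemma~\ref{lemma:projective-cover-minimal-elements} (uniqueness of preimages of generators under a cover) forces $h' \circ g' = \mathrm{id}_P$, so $g'$ splits. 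Both proofs then need the same inductive unwinding across degrees that you describe; the paper simply asserts that step. Your $\langle r\rangle$ characterization is cleaner conceptually and connects to the usual $\Tor$ description of Betti numbers, while the paper's argument avoids any auxiliary functor and is shorter.

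One technical issue: your construction of the projective cover (``expand the lift in the summand decomposition \ldots\ take the direct summand spanned by the summands that actually appear'') can overshoot. If $P = \Psf_r \oplus \Psf_r \to M = \Psf_r$ sends both generators to the same element, a lift of the single basis vector of $M\langle r\rangle$ could be $\{1\} + \{2\}$, whence both summands ``appear'' and $P' = P$, which is not a cover. The fix is easy---for each $r$, extract a basis of $M\langle r\rangle$ from the spanning set $\{g\langle r\rangle([\{i\}]) : r_i = r\}$ and keep only those summands, or just use the paper's removal procedure---but the sentence as written does not quite work.
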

\begin{proof}
    We start by proving that any finitely $0$-resolvable $M$ admits a projective cover.
    By definition, there exists a surjection $P \to M$ from a projective of finite rank $P$.
    If $\sum_{r \in \Pscr}\beta^P(r)\in \Nbb$ is not minimal, we can remove at least one direct summand of $P$ to get $P' \to M$ still a surjection.
    Proceeding in this way, we necessarily end up with a projective cover of $M$.

    Now, a straightforward inductive argument shows that any $k$-resolvable module admits a minimal projective $k$-resolution.

%    \smallskip
%    
%    The existence of minimal projective resolutions for the case $\Pscr = \Rbb^n$ is well-known (see, e.g., \cite[Corollary~10.6(2)]{chacholski-jin-tombari}, noting that, in the reference, \emph{$\infty$-resolvable} means admitting a projective resolution by finitely presentable modules, and \emph{tame} means finitely presentable).

    \smallskip

    We now prove that, if $g : P \to N$ is a projective cover and $h : Q \to N$ is a surjection from a finitely presentable projective, then $\beta^P \leq \beta^{Q}$.
    The inequality in the statement of the result involving Betti tables follows from this.

    By Lemma \ref{lemma:lifting}, there exist morphisms $g' : P \to Q$ and $h' : Q \to P$ such that $h \circ g' = g$ and $g \circ h' = h$.
    We claim that $h' \circ g' : P \to P$ is the identity; this is sufficient since then $Q \cong P \oplus P'$ for some finitely presentable projective $Q$, and thus $\beta^P \leq \beta^P + \beta^{P'} = \beta^{Q}$.

    Let $P = \bigoplus_{i \in I} \Psf_{f(i)} \cdot \{i\}$.
    Let $j \in J$.
    By Lemma \ref{lemma:projective-cover-minimal-elements}, we have that $(h' \circ g')_{f(j)}(\{j\}) = \{j\} \in P(f(j))$, since $g = g \circ (h' \circ g')$ and $\{j\}$ is the unique element of $P(f(j))$ mapping to $g_{f(j)}(\{j\}) \in M(f(j))$.
    Since the elements $\{j\} \in P(f(j))$ generate $P$, this implies that $h' \circ g'$ is the identity.
    %\todo{missing Hilbert's syzygy thm}
\end{proof}

%\todo{Reviewer 3 writes:  I would consider switching the order of Def. 12 and Lemma 13.  Regardless, Notation 14 should come right after Definition 12.}

%Drawing inspiration from commutative algebra and topology,
%; in this paper we do not use this terminology.

\begin{example}
    \label{example:resolution-of-H_0}
    Consider the graph $G = (V,E,\partial)$ with $V = \{x,y\}$, $E = \{a,b\}$, and $\partial(a) = \partial(b) = (x,y)$.
    Consider the filtration $f : G \to \Rbb^2$ with $f(x) = f(y) = (0,0)$, $f(a) = (0,1)$, and $f(b) = (1,0)$.
    Then, a minimal resolution of $H_0(G,f)$ is given by
    \[
    0 \to \Psf_{(1,1)} \cdot \{\alpha\} \xrightarrow{\partial_2} \Psf_{(1,0)} \cdot \{a\} \oplus \Psf_{(0,1)} \cdot \{b\} \xrightarrow{\partial_1} \Psf_{(0,0)} \cdot \{x\} \oplus \Psf_{(0,0)} \cdot \{y\} \xrightarrow{\partial_0} H_0(G,f),
    \]
    where $\partial_0(\{x\}) = [x]$, $\partial_0(\{y\}) = [y]$, $\partial_1(\{a\}) = \partial_1(\{b\}) = \{y\} - \{x\}$, and $\partial_2(\{\alpha\}) = \{a\} - \{b\}$.
    In particular, $\beta_0(H_0(G,f)) = \delta_{(0,0)} + \delta_{(0,0)}$, $\beta_1(H_0(G,f)) = \delta_{(1,0)} + \delta_{(0,1)}$, and $\beta_2(H_0(G,f)) = \delta_{(1,1)}$.
    This can be easily checked by hand, but it also follows from \cref{theorem:betti-tables-minimal-filtered-graph,theorem:second-betti-table}, since $(G,f)$ is a minimal filtered graph, in the sense of Definition \ref{definition:minimal-filtered-graph}, which we give in the next section.
\end{example}

%\luis{add example of $\beta^M_k$}
%
%\luis{I don't know if we will need the following, but it was here so I'll leave it for now}
%
%The \emph{Hilbert function} of a persistence module $M : \Rbb^n \to \vect$ is the function $\Hil^M : \Rbb^n \to \Zbb$ given by $\Hil^M(r) = \dim(M(r))$.
%
%\begin{lemma}
%    Let $M : \Rbb^n \to \vect$ be finitely presentable.
%    There exists a unique function $\mu^M : \Rbb^n \to \Zbb$ of finite support such that, for every $r \in \Rbb^n$, we have
%    \[
%        \Hil^M(r) = \int_{s \leq r} \mu^M(s) \; \dsf s.
%    \]
%    Moreover, $\mu^M = \sum_{i \in \Nbb} (-1)^i \beta_i^M$.
%\end{lemma}
%\begin{proof}
%    \todo{}
%\end{proof}

\section{Theory}
\label{section:theory}

\begin{comment}
\paragraph{Minimal filtered graphs}
\end{comment}

We start with a simple, standard result which says that a projective presentation of $0$-dimensional homology is given by using the grades of vertices as generators, and the grades of edges as relations.

\begin{lemma}
    \label{definition:H0-H1}
    Let $\Pscr$ be a poset, let $(G,f)$ be a $\Pscr$-filtered graph, and consider the following morphism of projective modules
    \begin{align*}
        \label{equation:standard-presentation}
        \bigoplus_{e \in E} \Psf_{f(e)} \cdot \{e\}
        \;\;\;       & \xrightarrow{\;\;\;\partial_1^{(G,f)}\;\;\;}\;\;\;
        \bigoplus_{v \in V} \Psf_{f(v)} \cdot \{v\}                                  \\
        \{e\} \;\;\; & \xmapsto{\;\;\;\;\;\;\;\;\;\;\;\;\;\;\;\;} \;\;\;\{e_1\} - \{e_0\}\nonumber
    \end{align*}
    Then $H_0(G,f) \cong \coker\left(\partial_1^{(G,f)}\right)$ and $H_1(G,f) \cong \ker\left(\partial_1^{(G,f)}\right)$.
    In particular, $H_0(G,f)$ is a finitely presentable persistence module.
\end{lemma}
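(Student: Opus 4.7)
The plan is to reduce the statement to the pointwise definition of $H_0$ and $H_1$ on each subgraph $(G,f)_r$, and then appeal to the fact that kernels and cokernels in the functor category $\vect^\Pscr$ are computed pointwise.

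First, I would evaluate the morphism $\partial_1^{(G,f)}$ at each $r \in \Pscr$. Since $\Psf_{f(v)}(r) = \kbb$ when $f(v) \leq r$ and $0$ otherwise, the codomain at $r$ is naturally identified with $\langle V_r \rangle_\kbb$, where $V_r$ denotes the vertex set of $(G,f)_r$; similarly the domain at $r$ is $\langle E_r \rangle_\kbb$. The definition of $\partial_1^{(G,f)}$ on generators then matches the map $d$ from equation (\ref{equation:definition-homology}) applied to the subgraph $(G,f)_r$. Hence the map $\partial_1^{(G,f)}_r$ is exactly the boundary map whose cokernel and kernel are, by definition, $H_0((G,f)_r)$ and $H_1((G,f)_r)$.

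Second, I would verify naturality in $r$. For $r \leq s$, the structure morphisms of $\bigoplus_{v \in V} \Psf_{f(v)} \cdot \{v\}$ and $\bigoplus_{e \in E} \Psf_{f(e)} \cdot \{e\}$ are, under the identifications above, the maps $\langle V_r \rangle_\kbb \to \langle V_s \rangle_\kbb$ and $\langle E_r \rangle_\kbb \to \langle E_s \rangle_\kbb$ induced by the inclusions $V_r \subseteq V_s$ and $E_r \subseteq E_s$. These are precisely the maps induced by the subgraph inclusion $(G,f)_r \subseteq (G,f)_s$, which in turn induce the structure morphisms of the persistence modules $H_0(G,f)$ and $H_1(G,f)$ as defined in \cref{section:PH-graphs}.

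Third, I would invoke the standard fact that in $\vect^\Pscr$ kernels and cokernels are computed pointwise (since limits and colimits in functor categories into $\vect$ are pointwise). Combining this with the previous two steps yields natural isomorphisms $\coker(\partial_1^{(G,f)})(r) \cong H_0((G,f)_r)$ and $\ker(\partial_1^{(G,f)})(r) \cong H_1((G,f)_r)$, which assemble into isomorphisms of persistence modules. Finite presentability of $H_0(G,f)$ is then immediate: the source and target of $\partial_1^{(G,f)}$ are projective of finite rank since $E$ and $V$ are finite, so $\partial_1^{(G,f)}$ itself is a finite projective presentation. There is no real obstacle here; the only step that requires a moment's care is the bookkeeping identifying the evaluation of each direct sum of $\Psf_{-}$'s at $r$ with the free vector space on the sub-vertex/edge set $V_r$ or $E_r$.
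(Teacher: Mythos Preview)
Your proposal is correct and follows essentially the same approach as the paper's proof, which simply observes that evaluating $\partial_1^{(G,f)}$ at any $r \in \Pscr$ yields the boundary map of \cref{equation:definition-homology} for the subgraph $(G,f)_r$. You have spelled out the bookkeeping (the identification of the evaluated projectives with $\langle V_r\rangle_\kbb$ and $\langle E_r\rangle_\kbb$, the naturality check, and the pointwise nature of (co)kernels) that the paper leaves implicit, but the argument is the same.
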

\begin{proof}
    It is straightforward to see that the evaluation of the natural transformation $\partial_1^{(G,f)}$ at $r \in \Pscr$ coincides with morphism Equation \ref{equation:definition-homology}, where the graph is $\left(\{v \in V : f(v) \leq r\}, \{e \in E : f(e) \leq r\}, \partial\right)$.
    Then $H_0(G,f)$ is finitely presentable by definition.
\end{proof}

The point of minimal filtered graphs, which we now introduce, is that
%Minimal filtered graphs are graphs where contracting any edge will produce a change in $H_0$ or $H_1$, and where deleting any edge will produce a change in $H_0$; the point being that minimal filtered graphs
they make the presentation in Lemma \ref{definition:H0-H1} minimal (see \cref{theorem:betti-tables-minimal-filtered-graph}).

\begin{definition}
    \label{definition:minimal-filtered-graph}
    Let $(V, E, \partial,f)$ be a filtered graph and let $e \in E$.
    \begin{itemize}
        \item The edge $e$ is \emph{collapsible} if $e_0 \neq e_1$ and $f(e) = f(e_i)$ for some $i \in \{0,1\}$.
        \item The edge $e$ is \emph{deletable} if $[e_0] = [e_1] \in \pi_0(V, E\setminus \{e\}, \partial, f)(f(e))$.
    \end{itemize}
    A filtered graph $(G,f)$ is
    \begin{itemize}
        \item \emph{vertex-minimal} if it does not contain any collapsible edges.
        %if $|[v]_{\pi_0(G,f)(f(v))}| = 1$, for every $v \in V$;
        \item \emph{edge-minimal} if it does not contain any deletable edges.
        %if $[e_0] \neq [e_1] \in \pi_0(G\setminus \{e\}, f)(f(e))$, for every $e \in E$.
        \item \emph{minimal} if it is vertex-minimal and edge-minimal.
    \end{itemize}
\end{definition}

\begin{example}
    \label{example:minimal-vs-not-minimal}
    The graph of Example \ref{example:resolution-of-H_0} is minimal, as neither of the two edges is collapsible or deletable.
    On the other hand, the graph of \cref{fig:graded-graph} is not minimal since $h_1$, $h_2$, $d_1$, $d_2$, and $d_3$ are collapsible, and $e_3$ is deletable.
\end{example}

\begin{comment}
\paragraph{Collapsing and deleting edges}
\end{comment}

As their name suggests, collapsible and deletable edges can be collapsed or deleted:

\begin{construction}
    \label{construction:collapse-deletion}
    Let $(G,f) = (V, E, \partial,f)$ be a filtered graph.
    \begin{itemize}
        \item If $e \in E$ is collapsible, let $i \in \{0,1\}$ be such that $f(e) = f(e_i)$.
        Define the \emph{simple collapse} $G\downarrow e \coloneqq (V\setminus \{e_i\}, E \setminus \{e\}, \partial')$, where $\partial' = (\phi \times \phi) \circ \partial$ and $\phi : V \to V$ is given by $\phi(v) = v$ if $v \neq e_i$ and $\phi(v) = e_{1-i}$ if $v = e_i$.
        \item If $e \in E$ is any edge, define the \emph{simple deletion} $G\setminus e \coloneqq (V, E\setminus \{e\}, \partial)$.
    \end{itemize}
\end{construction}

\begin{comment}
\luis{do we need this observation here?}
An elementary inductive argument shows that, starting from any filtered graph, any maximal sequence of simple collapses is finite and ends in a vertex-minimal graph.
Similarly, starting from any vertex-minimal graph, any maximal sequence of simple deletions is finite and ends in a minimal graph.
\end{comment}

We now study the effect of collapsing collapsible edges, and of deleting deletable edges.
We start with a useful observation, whose proof is straightforward.

\begin{lemma}
    \label{lemma:deleting-collapsin-does-not-change}
    Let $(G,f)$ be a filtered graph and let $e$ and $d$ be two different edges of $G$.
    \begin{enumerate}
        \item Assume that $e$ is collapsible.
        If $d$ is not collapsible (resp.~deletable) in $(G,f)$, then $d$ is not collapsible (resp.~deletable) in $(G\downarrow e, f)$.
        \item If $d$ is not collapsible (resp.~deletable) in $(G,f)$, then $d$ is not collapsible (resp.~deletable) in $(G\setminus e, f)$.\qed
    \end{enumerate}
\end{lemma}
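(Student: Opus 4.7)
The plan is to prove each of the four implications directly from the definitions of collapsible and deletable edges, working via the contrapositive in each case. Part~(2) is essentially immediate: passing from $(G,f)$ to $(G\setminus e, f)$ does not change the vertex set, the endpoints of $d$, or any filtration values, so collapsibility of $d$ is unaffected; for deletability, a path from $d_0$ to $d_1$ at time $f(d)$ in $(V, (E\setminus\{e\})\setminus\{d\}, \partial, f)$ is a fortiori such a path in $(V, E\setminus\{d\}, \partial, f)$. I would dispatch both of these in one sentence each.

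The substantive content is Part~(1). Let $i \in \{0,1\}$ be such that $f(e) = f(e_i)$ and let $\phi \colon V \to V\setminus\{e_i\}$ be the identification sending $e_i \mapsto e_{1-i}$ and fixing all other vertices, so that the endpoints of $d$ in $G\downarrow e$ are $\phi(d_0), \phi(d_1)$. For the collapsibility clause, I will show that if $d$ is collapsible in $(G\downarrow e, f)$, then it was already collapsible in $(G,f)$. Assume $\phi(d_0) \neq \phi(d_1)$ (which forces $d_0 \neq d_1$) and $f(d) = f(\phi(d_j))$ for some $j$. If $d_j \neq e_i$, then $\phi(d_j) = d_j$ and we are done. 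Otherwise $d_j = e_i$ and $\phi(d_j) = e_{1-i}$; then $f(d) = f(e_{1-i}) \leq f(e) = f(e_i) = f(d_j) \leq f(d)$ by the filtration property of $(G,f)$, so equality holds throughout and $f(d) = f(d_j)$, as required.

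For the deletability clause, I will again argue the contrapositive: starting from a path $\phi(d_0) = v_0, a_1, v_1, \dots, a_k, v_k = \phi(d_1)$ in $(G\downarrow e)\setminus\{d\}$ with $f(a_\ell) \leq f(d)$, I lift it to a path in $G\setminus\{d\}$ at time $\leq f(d)$ by inserting the edge $e$ whenever a consecutive pair of edges in the path meets at $e_{1-i}$ in $G\downarrow e$ but at $e_i$ in $G$. I then prepend or append the edge $e$ (if $d_0 = e_i$, respectively $d_1 = e_i$) to connect $d_j$ with $\phi(d_j)$. Each inserted use of $e$ is legal because $f(e) = f(e_i)$; when an edge $a_\ell$ has $e_i$ as an endpoint in $G$, the filtration condition gives $f(e) = f(e_i) \leq f(a_\ell) \leq f(d)$, and when $d_j = e_i$ we have $f(e) = f(d_j) \leq f(d)$. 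The resulting walk witnesses $[d_0] = [d_1] \in \pi_0(V, E\setminus\{d\}, \partial, f)(f(d))$, contradicting the assumption that $d$ is not deletable in $(G,f)$.

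The main obstacle is the path-lifting step in the deletability argument, since one must keep careful track of when $e$ needs to be inserted and check that every insertion is filtration-compatible; the case analysis on collapsibility is comparatively mechanical. Once the above two arguments are in place, all four implications follow, completing the proof.
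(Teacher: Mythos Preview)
Your proof is correct. The paper does not actually give a proof of this lemma: it is introduced as ``a useful observation, whose proof is straightforward'' and the statement simply ends with a \qed. Your argument is exactly the kind of direct verification the authors have in mind, and you have correctly identified the only place that requires care, namely lifting a path in $(G\downarrow e,f)_{f(d)}$ to one in $(G,f)_{f(d)}$ by inserting copies of $e$; your justification that each inserted $e$ satisfies $f(e)=f(e_i)\le f(a_\ell)\le f(d)$ (or $f(e)=f(d_j)\le f(d)$ at the endpoints) is the key point, and it is sound.
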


%See \cref{section:reducing-graphs}.

\begin{lemma}
    \label{lemma:vertex-reduction-same-homology}
    If $(G,f) = (V,E,\partial,f)$ is a filtered graph and $e \in E$ is collapsible, then $H_i(G,f) \cong H_i(G\downarrow e,f)$ for $i \in \{0,1\}$.
\end{lemma}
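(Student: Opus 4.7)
\smallskip
\noindent\textbf{Proof proposal.}
The plan is to produce an explicit chain map between the two-term presentations of Lemma~\ref{definition:H0-H1}, show that its kernel complex is acyclic, and then appeal to the long exact sequence in homology. Fix the collapsible edge $e$ and, without loss of generality, assume $i = 0$, so that $f(e) = f(e_0)$ (and hence also $f(e_1) \leq f(e) = f(e_0)$, using the filtration axiom). Let $\phi : V \to V \setminus \{e_0\}$ be the identity on vertices $\neq e_0$ and send $e_0 \mapsto e_1$; this is precisely the vertex identification defining $G \downarrow e$.

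First, I would define a morphism of chain complexes of persistence modules
\[
\psi_\bullet : C(G,f)_\bullet \;\longrightarrow\; C(G \downarrow e, f)_\bullet,
\]
where $C(G,f)_\bullet$ denotes the two-term complex of Lemma~\ref{definition:H0-H1}. On vertex generators set $\psi_0(\{v\}) = \{\phi(v)\}$, and on edge generators set $\psi_1(\{e\}) = 0$ and $\psi_1(\{d\}) = \{d\}$ for $d \neq e$. The inequalities $f(\phi(v)) \leq f(v)$ (which uses $f(e_1) \leq f(e_0)$ when $v = e_0$) guarantee that $\psi_0$ respects grades, and since $\partial'$ in $G \downarrow e$ is defined by $(\phi \times \phi) \circ \partial$, a one-line check confirms $\psi_\bullet$ commutes with the differentials. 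By construction $\psi_\bullet$ is also surjective in each degree, since every generator of $C(G\downarrow e,f)_\bullet$ is hit by itself.

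Next, I would compute $K_\bullet := \ker(\psi_\bullet)$ pointwise. In degree $1$, only $\{e\}$ is killed, so $K_1 \cong \Psf_{f(e)} \cdot \{e\}$. In degree $0$, at a grade $r$ the map $\psi_0(r)$ is injective when $r \ngeq f(e_0)$, and when $r \geq f(e_0)$ (so that $e_1$ is also present) the kernel is one-dimensional, spanned by $\{e_0\} - \{e_1\}$; hence $K_0 \cong \Psf_{f(e_0)} = \Psf_{f(e)}$. The restriction of $\partial_1^{(G,f)}$ to $K_1$ sends $\{e\} \mapsto \{e_1\} - \{e_0\} = -(\{e_0\}-\{e_1\})$, yielding an isomorphism $K_1 \xrightarrow{\sim} K_0$ of persistence modules (both isomorphic to $\Psf_{f(e)}$). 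Consequently, $K_\bullet$ is acyclic.

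Finally, applying the long exact sequence in homology associated with the short exact sequence of two-term complexes
\[
0 \;\to\; K_\bullet \;\to\; C(G,f)_\bullet \;\to\; C(G\downarrow e,f)_\bullet \;\to\; 0
\]
(which is a short exact sequence pointwise in $\vect$ and natural in the poset variable), the vanishing of $H_0(K_\bullet)$ and $H_1(K_\bullet)$ forces $H_i(C(G,f)_\bullet) \cong H_i(C(G \downarrow e, f)_\bullet)$ for $i \in \{0,1\}$, and Lemma~\ref{definition:H0-H1} identifies these with $H_i(G,f)$ and $H_i(G\downarrow e,f)$. The main thing to be careful about is that everything is natural in the grade, so that $K_\bullet$ really is acyclic as a chain complex of persistence modules and not merely pointwise; this reduces to checking that the structure maps of $K_1$ and $K_0$ and the differential $K_1 \to K_0$ are all induced by the identity $\Psf_{f(e)} \to \Psf_{f(e)}$, which is immediate from the explicit generators written above.
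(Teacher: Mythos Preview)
Your argument is correct. You build the surjective chain map $\psi_\bullet$, identify its kernel as the acyclic two-term complex $\Psf_{f(e)} \xrightarrow{\;\cong\;} \Psf_{f(e)}$, and invoke the long exact sequence; all the grade checks (in particular $f(e_1)\leq f(e_0)$ so that $\psi_0$ is a morphism of persistence modules, and so that $K_0$ really is $\Psf_{f(e_0)}$) are verified. One cosmetic remark: your use of ``$i=0$'' for the collapsibility index clashes with the $i\in\{0,1\}$ in the statement, which denotes the homology degree; it would read more clearly to pick a different letter.

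The paper's proof packages the same content slightly differently: instead of passing to a quotient and using the long exact sequence, it writes down an explicit chain \emph{isomorphism}
\[
C(G,f)_\bullet \;\cong\; C(G\downarrow e,f)_\bullet \;\oplus\; \bigl(\Psf_{f(e)}\xrightarrow{1}\Psf_{f(e_0)}\bigr)
\]
via a triangular change of basis (their vertical maps), and then reads off that the extra summand contributes nothing to homology. In other words, the paper exhibits the splitting of your short exact sequence directly, whereas you argue with the long exact sequence; since all the modules involved are projective, the two viewpoints are equivalent. Your route is marginally more conceptual (no explicit inverse needed), while the paper's is more hands-on (an explicit Gaussian-elimination isomorphism). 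Either is perfectly adequate here.
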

\begin{proof}
    This follows at once from the homotopy-invariance property of homology; however, in order to make the exposition self-contained, let us give an algebraic proof, using the definition of homology for graphs given in \cref{section:PH-graphs}.

    Let $e \in E$ be collapsible, and assume that $f(e) = f(e_0)$; the case $f(e) = f(e_1)$ is analogous.
    Consider the following diagram
    \[
        \small
    \begin{tikzpicture}
        \matrix (m) [matrix of math nodes,row sep=8em,column sep=10em,minimum width=2em,nodes={text height=1.75ex,text depth=0.25ex}]
        {
        \displaystyle \left(\bigoplus_{d \in E \setminus \{e\}} \Psf_{f(d)} \cdot \{d\}\right) \oplus \left(\Psf_{f(e)} \cdot \{e\}\right) &
        \displaystyle \left(\bigoplus_{v \in V \setminus \{e_i\}} \Psf_{f(v)} \cdot \{v\}\right) \oplus \left(\Psf_{f(e_i)} \cdot \{e_0\} \right)\\
        \displaystyle \bigoplus_{d \in E} \Psf_{f(d)} \cdot \{d\} &
        \displaystyle \bigoplus_{v \in V} \Psf_{f(v)} \cdot \{v\}\;.\\};
        \path[line width=0.75pt, -{>[width=8pt]}]
        (m-1-1) edge [above] node {
            \scriptsize
            $\left[
            \begin{array}{c|c}
                \partial_1^{(G\downarrow e, f)} & 0\\ \hline
                0 & 1
            \end{array}
            \right]$
            } (m-1-2)
        (m-2-1) edge [below] node {$\partial_1^{(G,f)}$} (m-2-2)
        ($(m-1-1)-(0,0.7)$) edge [right] node {
            \scriptsize
            $\left[
            \begin{array}{c}
                \{d\}\mapsto
                \begin{cases}
                    \{d\} & \text{if $d_1 \neq e_0$}\\
                    \{d\} + \{e\} & \text{else}
                \end{cases}
                 \\\hline
                \{e\} \mapsto \{e\} 
            \end{array}
            \right]^T$
        } (m-2-1)
        ($(m-1-2)-(0,0.7)$) edge [left] node {
            \scriptsize
            $\left[
            \begin{array}{c}
                \{v\}\mapsto \{v\} \\\hline
                \{e_0\} \mapsto \{e_1\}-\{e_0\} 
            \end{array}
            \right]^T$
        } (m-2-2)
        ;
    \end{tikzpicture}
    \]
    %\[
    %    \left(\bigoplus_{d \in E \setminus \{e\}} \Psf_{f(d)}\right) \oplus \Psf_{f(e)}
    %    \xrightarrow{
    %        \left[
    %        \begin{array}{c|c}
    %            \partial_1^{(G\downarrow e, f)} & 0\\ \hline
    %            0 & 1
    %        \end{array}
    %        \right]
    %    }
    %    \left(\bigoplus_{v \in V \setminus \{e_i\}} \Psf_{f(v)}\right) \oplus \Psf_{f(e_i)}.
    %\]
    It is clear that the cokernel (resp.~kernel) of the top horizontal morphism is isomorphic to the cokernel (resp.~kernel) of $\partial_1^{(G\downarrow e, f)}$, which in turn is isomorphic to the $0$th (resp.~$1$st) persistent homology of $(G\downarrow e, f)$, by Lemma \ref{definition:H0-H1}.
    So it is sufficient to prove that the diagram is commutative, and that the vertical morphisms are isomorphisms.
    The commutativity of the diagram is a straightforward check, using the definition of $G\downarrow e$ (\cref{construction:collapse-deletion}) in the case where $d \in E$ is such that $d_1 = e_0$.
    The fact that the vertical morphisms are isomorphisms is proven by a simple Gaussian elimination using the matrix representation of the vertical morphisms.
\end{proof}

\begin{lemma}
    \label{lemma:edge-reduction-H1}
    If $(G,f) = (V,E,\partial,f)$ is a filtered graph and $e \in E$ is deletable, then
    $H_0(G,f) \cong H_0(G \setminus e,f)$ and $H_1(G,f) \cong H_1(G \setminus e ,f) \oplus \Psf_{f(e)}$.
\end{lemma}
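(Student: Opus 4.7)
The plan is to compare the presentations of $H_0$ and $H_1$ for $G$ and $G \setminus e$ given by Lemma \ref{definition:H0-H1}, using that deletability produces a chain-level identity in $G \setminus e$. Write $A = \bigoplus_{v \in V}\Psf_{f(v)}\cdot\{v\}$, $B = \bigoplus_{e' \in E\setminus\{e\}}\Psf_{f(e')}\cdot\{e'\}$, and $B' = B \oplus \Psf_{f(e)}\cdot\{e\}$, so that $\partial := \partial_1^{(G\setminus e,f)} : B \to A$ and $\partial' := \partial_1^{(G,f)} : B' \to A$ agree on $B$ and satisfy $\partial'(\{e\}) = \{e_1\} - \{e_0\}$.

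The first step is to encode deletability chain-theoretically. Since $[e_0] = [e_1]$ in $\pi_0(G\setminus e, f)(f(e))$, there is a path $e_0 = v_0, v_1, \dots, v_n = e_1$ in $(G\setminus e, f)_{f(e)}$ with edges $d_1, \dots, d_n \in E\setminus\{e\}$ all of grade $\leq f(e)$. Orienting each $d_j$ along the path with a sign $\epsilon_j \in \{\pm 1\}$, the telescoping sum produces $s := \sum_j \epsilon_j \{d_j\} \in B(f(e))$ with $\partial(s) = \{e_1\} - \{e_0\}$ at grade $f(e)$.

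For $H_0$, it suffices to show $\im(\partial) = \im(\partial')$ at every grade. At $r \not\geq f(e)$ the extra summand contributes nothing, so the two images trivially agree; at $r \geq f(e)$ the only new generator $\partial'(\{e\}) = \{e_1\} - \{e_0\}$ equals $\partial(s)$ and so already lies in $\im(\partial)(r)$ via the structure map from grade $f(e)$. Hence the two cokernels agree and Lemma \ref{definition:H0-H1} gives $H_0(G,f) \cong H_0(G\setminus e, f)$.

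For $H_1$, I would construct an explicit splitting. The element $\{e\} - s \in B'(f(e))$ satisfies $\partial'(\{e\} - s) = (\{e_1\} - \{e_0\}) - \partial(s) = 0$ and therefore defines a morphism $\phi : \Psf_{f(e)} \to H_1(G,f)$. Composing the inclusion $H_1(G,f) \hookrightarrow B'$ with the canonical projection onto $\Psf_{f(e)}\cdot\{e\}$ yields a morphism $\psi : H_1(G,f) \to \Psf_{f(e)}$ with $\psi \circ \phi = \mathrm{id}$, while $\ker(\psi)$ consists of those cycles of $\partial'$ supported on $B$, which is exactly $\ker(\partial) = H_1(G\setminus e, f)$. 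The resulting split short exact sequence gives $H_1(G,f) \cong H_1(G\setminus e, f) \oplus \Psf_{f(e)}$. The only subtle point is verifying that $\phi$ is a morphism of persistence modules, not merely a map of vector spaces at grade $f(e)$; this is automatic because $\Psf_{f(e)}$ is freely generated by a single element at $f(e)$, so specifying its image there determines $\phi$ at all larger grades by naturality of the structure maps of $B'$.
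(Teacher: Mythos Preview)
Your argument is correct. The $H_0$ part is essentially the same as the paper's, just phrased at the chain level rather than via $\pi_0$: both hinge on the existence of the path $s$ with $\partial(s)=\{e_1\}-\{e_0\}$ at grade $f(e)$.

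For $H_1$ your route is genuinely different from the paper's. The paper sets up the obvious short exact sequence of two-term chain complexes coming from $B'\cong B\oplus\Psf_{f(e)}$, applies the snake lemma to obtain the long exact sequence
\[
0 \to H_1(G\setminus e,f) \to H_1(G,f) \to \Psf_{f(e)} \to H_0(G\setminus e,f) \to H_0(G,f) \to 0,
\]
uses the already-established $H_0$ isomorphism to see that the connecting map vanishes, and then invokes projectivity of $\Psf_{f(e)}$ to split the resulting short exact sequence. You instead bypass the snake lemma entirely by writing down the section $\phi$ (via the explicit cycle $\{e\}-s$) and the retraction $\psi$ (via projection onto the $\{e\}$-coordinate) by hand, and identifying $\ker(\psi)$ directly. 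Your approach is more elementary and self-contained, and makes the splitting completely explicit; the paper's approach is more structural and would generalize more readily (e.g.\ to longer chain complexes or higher homology), but at the cost of importing the snake lemma and an abstract splitting argument.
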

\begin{proof}
    Lemma \ref{lemma:deletable-pi-0-iso}, together with Lemma \ref{lemma:H0-is-pi0-linearized}, implies that the morphism $H_0(G,f) \to H_0(G \setminus e,f)$ induced by the inclusion of graphs is an isomorphism of persistence modules.

    We now prove that $H_1(G,f) \cong H_1(G \setminus e ,f) \oplus \Psf_{f(e)}$.
    Consider the following diagram:
    \[
    \begin{tikzpicture}
        \matrix (m) [matrix of math nodes,row sep=4em,column sep=3em,nodes={asymmetrical rectangle,text height=1.75ex,text depth=0.25ex}]
        {
            0
            & H_1(G\setminus e ,f)
            & H_1(G,f)
            & \Psf_{f(e)} \\
             0
            & \displaystyle \bigoplus_{d \in E \setminus \{e\}} \Psf_{f(d)}
            & \displaystyle \bigoplus_{d \in E} \Psf_{f(d)}
            & \Psf_{f(e)}
            & 0\\
             0
            & \displaystyle \bigoplus_{v \in V} \Psf_{f(v)}
            & \displaystyle \bigoplus_{v \in V} \Psf_{f(v)}
            & 0
            & 0\\
            & H_0(G \setminus e, f)
            & H_0(G, f)
            & 0 \\};
        \path[line width=0.75pt, -{>[width=8pt]}]
        (m-1-1) edge (m-1-2)
        (m-1-2) edge (m-1-3)
        (m-1-3) edge (m-1-4)
        (m-2-1) edge (m-2-2)
        (m-2-2) edge (m-2-3)
        (m-2-3) edge (m-2-4)
        (m-2-4) edge (m-2-5)
        (m-3-1) edge (m-3-2)
        (m-3-2) edge (m-3-3)
        (m-3-3) edge (m-3-4)
        (m-3-4) edge (m-3-5)
        (m-4-2) edge [above] node {$\cong$} (m-4-3)
        (m-4-3) edge (m-4-4)
        (m-1-2) edge (m-2-2)
        ($(m-2-2)-(0,0.8)$) edge [right] node {$\partial_1^{(G\setminus e,f)}$} (m-3-2)
        (m-3-2) edge (m-4-2)
        (m-1-3) edge (m-2-3)
        ($(m-2-3)-(0,0.8)$) edge  [right] node {$\partial_1^{(G,f)}$} (m-3-3)
        (m-3-3) edge (m-4-3)
        (m-1-4) edge (m-2-4)
        (m-2-4) edge (m-3-4)
        (m-3-4) edge (m-4-4)
        ;
        \draw[-{>[width=8pt]},rounded corners,dashed]
        (m-1-4) -| ($(m-2-5)+(0.5,0)$) |- ($(m-2-3)-(0,0.7)$) -| ($(m-3-1) - (0.5,0)$) |- (m-4-2);
    \end{tikzpicture}
    \]
    The first and last modules in the second and third column are, respectively, a kernel and a cokernel, by Lemma \ref{definition:H0-H1}, and the second and third row are clearly exact.
    Thus, the snake lemma \cite[Corollary~6.12]{rotman} implies the existence of the dashed morphism and the exactness of the sequence of modules which includes the dashed morphism.
    Moreover, the statement about $H_0$ proven in the first part of the proof implies the isomorphism in the bottom row.
    We thus have that $0 \to H_1(G\setminus e, f) \to H_1(G,f ) \to \Psf_{f(e)} \to 0$ is a short exact sequence, which necessarily splits since $\Psf_{f(e)}$ is projective \cite[Proposition~3.3]{rotman}.
    The result follows.
\end{proof}

\begin{comment}
\paragraph{Zeroth and first Betti tables of filtered graphs}
\end{comment}

Next is a construction of a minimal presentation of $H_0$ of a minimal filtered graph.

%\todo{Reviewer 3 writes about item 2, next: To make the theorem statement a bit more self-contained, perhaps say that it is a minimal presentation of $H_0$.}

\begin{theorem}
    \label{theorem:betti-tables-minimal-filtered-graph}
    Let $\Pscr$ be a poset, let $(G,f)$ is a $\Pscr$-filtered graph, and let $\partial_1^{(G,f)} : C_1 \to C_0$ be the morphism of Lemma \ref{definition:H0-H1}.
    \begin{enumerate}
        \item If $(G,f)$ is vertex-minimal, then the map $C_0 \to \coker(\partial_1^{(G,f)})$ is a projective cover.
        In this case, $\beta_0\big(H_0(G,f)\big) = \sum_{v \in V} \delta_{f(v)}$.
        \item If $(G,f)$ is a minimal filtered graph, then $C_1 \xrightarrow{\partial_1^{(G,f)}} C_0 \to \coker(\partial_1^{(G,f)})$ is a minimal presentation.
        In this case, we also have 
        $\beta_1\big(H_0(G,f)\big) = \sum_{e \in E} \delta_{f(e)}$.
    \end{enumerate}
\end{theorem}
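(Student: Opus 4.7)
The plan is first to establish a general lower bound: for any surjection $\varphi:P\twoheadrightarrow M$ with $P$ projective of finite rank, $\beta^P(r)\geq \dim_\kbb\bigl(M(r)/M(r)_<\bigr)$, where $M(r)_<:=\sum_{s<r}\im(\phi^M_{s,r})$. This follows because, writing $P\cong\bigoplus_s \Psf_s^{\beta^P(s)}$, the elementary identity ``the set of $s$ with $s\leq s'$ for some $s'<r$ equals the set of $s$ with $s<r$'' yields $P(r)/P(r)_<\cong \kbb^{\beta^P(r)}$, and $\varphi$ descends to a surjection onto $M(r)/M(r)_<$. Combined with \cref{lemma:minimal-resolution-uniqueness}, a surjection $P\twoheadrightarrow M$ is then a projective cover whenever its total rank equals $\sum_r\dim_\kbb\bigl(M(r)/M(r)_<\bigr)$.

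For part (1), vertex-minimality implies that for every $v\in V$ with $f(v)=r$, every edge of $(G,f)_r$ incident to $v$ is a loop (otherwise, monotonicity $f(v)\leq f(e)\leq r=f(v)$ forces $f(e)=f(v)$, making $e$ collapsible). Thus $\{v\}$ is the entire component of $[v]\in\pi_0((G,f)_r)$, and via \cref{lemma:H0-is-pi0-linearized} a direct count gives $\dim_\kbb\bigl(H_0(G,f)(r)/H_0(G,f)(r)_<\bigr)=|V_r|$, where $V_r:=\{v\in V:f(v)=r\}$. Summing over $r$ and applying the lower bound shows any surjection onto $H_0(G,f)$ has total rank at least $|V|=\sum_r\beta^{C_0}(r)$; therefore $C_0\to H_0(G,f)$ is a projective cover and $\beta_0\bigl(H_0(G,f)\bigr)=\sum_v\delta_{f(v)}$.

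For part (2), set $K:=\ker(\partial_0)=\im(\partial_1^{(G,f)})$; it remains to show $C_1\twoheadrightarrow K$ is a projective cover, which together with part (1) makes the sequence a minimal presentation. A direct count using \cref{lemma:H0-is-pi0-linearized} gives $\dim_\kbb K(r)=|V_{\leq r}|-|\pi_0((G,f)_r)|$, and identifying $K(r)_<$ with the span of $\{v\}-\{w\}$ for $v,w$ in the same component of the strictly-lower subgraph $(V_{<r},E_{<r})$ yields $\dim_\kbb K(r)_< = |V_{<r}|-|\pi_0(V_{<r},E_{<r})|$. Edge-minimality then supplies the combinatorial identity $|E_r|=|V_r|+|\pi_0(V_{<r},E_{<r})|-|\pi_0((G,f)_r)|$: starting from $(V_{\leq r},E_{<r})$, which has $|\pi_0(V_{<r},E_{<r})|+|V_r|$ components, each edge $e\in E_r$ strictly merges two components when added, because for any $E'\subseteq E_r\setminus\{e\}$ the non-deletability of $e$ implies $e_0,e_1$ lie in different components of $(V_{\leq r},E_{<r}\cup E')\subseteq (V_{\leq r},E_{\leq r}\setminus\{e\})$. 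Substituting yields $\dim_\kbb\bigl(K(r)/K(r)_<\bigr) = |E_r|$, and the lower bound closes the argument.

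The hardest part is the combinatorial bookkeeping in part (2): one must verify that edge-minimality persists under inductively adding edges of $E_r$, and correctly identify $K(r)_<$ as coming from the strictly-lower subgraph $(V_{<r},E_{<r})$---the latter relies on vertex-minimality to guarantee that endpoints of non-loop edges at grade $r$ lie strictly below $r$, so that the relations contributed by such edges do not already live in $K(r)_<$.
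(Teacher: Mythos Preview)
Your proof is correct and takes a genuinely different route from the paper. The paper argues directly by contradiction: if $C_0 \to H_0(G,f)$ were not a projective cover, one could drop a summand $\Psf_{f(v')}\cdot\{v'\}$ and still surject, but then $[v']$ would lie in the image of some other $[v]$ at grade $f(v')$, contradicting vertex-minimality; similarly for part (2), dropping an edge summand $\Psf_{f(e')}\cdot\{e'\}$ would force $H_0(G\setminus e',f)\cong H_0(G,f)$, contradicting edge-minimality. Your approach instead establishes the pointwise lower bound $\beta^P(r)\geq\dim_\kbb\bigl(M(r)/M(r)_<\bigr)$ and then verifies equality by dimension-counting. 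This is the standard ``Nakayama-style'' characterization of minimal covers from graded commutative algebra, and it has the conceptual advantage of identifying the Betti tables with an intrinsic invariant of the module; the paper's argument is shorter and avoids the combinatorial bookkeeping. One small comment: your final paragraph suggests that identifying $K(r)_<$ with the strictly-lower subgraph relies on vertex-minimality, but in fact that identification follows from monotonicity alone (vertices in $V_r$ are automatically isolated in $(V_{\leq r},E_{<r})$ since every incident edge has grade $\geq r$); vertex-minimality is only needed for part (1), and edge-minimality alone drives the merging count in part (2).
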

\begin{proof}
    The claim about the Betti tables follows directly from the claim that the presentation $\partial_1^{(G,f)}$ is minimal, so let us prove this claim.

    We start by proving (1).
    Let $G = (V,E,\partial)$ be vertex-minimal.
    We first prove that
    $h : Q \coloneqq \bigoplus_{v \in V} \Psf_{f(v)} \cdot \{v\} \to \coker{\partial_1^{(G,f)}} \cong H_0(G,f) \cong \langle \pi_0(G,f)(f(v))\rangle_\kbb$ is a projective cover, where in the last isomorphism we used Lemma \ref{lemma:H0-is-pi0-linearized}.
    This morphism maps $\{v\} \in Q(f(v))$ to $[v] \in \langle \pi_0(G,f)(f(v))\rangle_\kbb$, for every $v \in V$.
    If the morphism were not a projective cover, then there would exist $v' \in V$ such that the restriction
    $h' : \bigoplus_{v \in V \setminus \{v'\}} \Psf_{f(v)} \cdot \{v\} \to \coker{\partial_1^{(G,f)}} \cong H_0(G,f) \cong \langle \pi_0(G,f)(f(v))\rangle_\kbb$ would still be surjective.
    This is not possible: $(G,f)$ is vertex-minimal, so $[v]\neq [v'] \in \pi_0(G,f)(f(v'))$ for every $v \in V\setminus \{v'\}$, and thus $h'$ would not be surjective, since $[v'] \in \langle \pi_0(G,f)(f(v))\rangle_\kbb$ would not be in its image.

    The claim about $\beta_0$ follows at once from the definition of Betti table (Definition \ref{definition:betti-table}).

    \medskip

    We now prove (2).
    Let $G = (V,E,\partial)$ be a minimal filtered graph.
    We prove that $g : \bigoplus_{e \in E} \Psf_{f(e)} \cdot \{e\} \to \ker(h)$ is a projective cover.
    If it were not, then there would exist $e' \in E$ such that the restriction $g' : P \coloneqq \bigoplus_{e \in E \setminus \{e'\}} \Psf_{f(e)} \cdot \{e\} \to \ker(h)$ of $g$ would still be surjective.
    Let $G' = (V, E\setminus \{e\}, \partial)$.
    In particular, the last two vertical morphisms in the following commutative diagram would be isomorphisms
    \[
        \begin{tikzpicture}
            \matrix (m) [matrix of math nodes,row sep=3em,column sep=3em,minimum width=2em,nodes={text height=1.75ex,text depth=0.25ex}]
            {\displaystyle
                \bigoplus_{e \in E \setminus \{e'\}} \Psf_{f(e)} \cdot \{e\}
                 & \displaystyle\bigoplus_{v \in V} \Psf_{f(v)} \cdot \{v\}
                 & H_0(G',f) & \langle \pi_0(G',f)\rangle_\kbb \\
                \displaystyle\bigoplus_{e \in E} \Psf_{f(e)} \cdot \{e\}
                 & \displaystyle\bigoplus_{v \in V} \Psf_{f(v)} \cdot \{v\} & H_0(G,f) & \langle \pi_0(G,f)\rangle_\kbb\, . \\};
            \path[line width=0.75pt, -{>[width=8pt]}]
            (m-1-1) edge [above] node {$\partial_1^{(G',f)}$} (m-1-2)
            (m-2-1) edge [above] node {$\partial_1^{(G,f)}$} (m-2-2)
            (m-1-2) edge (m-1-3)
            (m-1-3) edge [above] node {$\cong$} (m-1-4)
            (m-2-2) edge (m-2-3)
            (m-2-3) edge [above] node {$\cong$} (m-2-4)
            (m-1-1) edge [right hook-{>[width=8pt]}] (m-2-1)
            (m-1-2) edge [double equal sign distance,-] (m-2-2)
            (m-1-3) edge (m-2-3)
            (m-1-4) edge (m-2-4)
            ;
        \end{tikzpicture}
    \]
    This is not possible, since the last vertical morphism maps $[e'_0], [e'_1]\in \langle  \pi_0(G',f)(f(e')) \rangle_\kbb$ to
    $[e'_0], [e'_1]\in \langle  \pi_0(G,f)(f(e')) \rangle_\kbb$, and,
    $[e'_0] = [e'_1]\in \langle  \pi_0(G,f)(f(e')) \rangle_\kbb$, since $e'$ is an edge of $G$, while $[e'_0] \neq [e'_1] \in \langle  \pi_0(G',f)(f(e')) \rangle_\kbb$, by the fact that $(G,f)$ is edge-minimal.
    
    As for the case of (1), the claim about $\beta_1$ follows from the definition of Betti table.
\end{proof}

\begin{comment}
\paragraph{Second Betti table of $\Rbb^2$-filtered graphs}
\end{comment}

We now focus on the case of $\Rbb^2$-filtered graphs.
If $(G,f)$ is an $\Rbb^2$-filtered graph, let $f_{\mathbf x},f_{\mathbf y} : G \to \Rbb$ denote the first and second coordinates of $f$, respectively.

\begin{definition}
    \label{definition:cycle-creating}
    Let $(V,E,\partial,f)$ be a minimal $\Rbb^2$-filtered graph, and let $\prec$ be a total order on the set of edges $E$ that refines the lexicographic order induced by $f$ (i.e., $e \prec e'$ implies that $f_{\mathbf x}(e) < f_{\mathbf x}(e')$ or $f_{\mathbf x}(e) = f_{\mathbf x}(e')$ and $f_{\mathbf y}(e) \leq f_{\mathbf y}(e')$).
    An edge $d \in E$ is \emph{cycle-creating} with respect to $\prec$ if $[d_0] = [d_1] \in \pi_0(V, \{e \in E : e \prec d\}, \partial)$.
\end{definition}

Thus, $d \in E$ is cycle-creating if there exists a list of edges $e^1, \dots, e^k \in E$ with $e^i \prec d$ for $1 \leq i \leq k$, and a list of signs $s^1, \dots, s^k \in \{+,-\}$, such that $s^1 e^1, \dots, s^k e^k$ is a directed path from $d_0$ to $d_1$.
Such a path $w = (s^\bullet, e^\bullet)$ is called a \emph{witness} for $d$, and we denote $f(w) \coloneqq \left(\,f_{\mathbf x}(d)\,,\, \max_{1 \leq i \leq k} f_{\mathbf y}(e^i) \, \right)$.
%\todo{Reviewer 3 and PC remark: What if the $f_y(d)$ is greater than the y-grade of f(w)?  Then, I think Theorem 25 can be incorrect.  So in the definition of the second coordinate of f(w), also let $f_y(d)$ be an argument of the max.}
%\luis{Point out that that cannot happen, and in fact we always have $f(w) > f(d)$ with the current definition.}
A \emph{minimal witness} $w$ of a cycle-creating edge $d$ is one for which $f_{\mathbf y}(w) \in \Rbb$ is as small as possible.
%When considering more than one cycle-creating edge $d$, we write $w_d$ for any chosen minimal witness of $w$.

Since $\prec$ refines the lexicographic order, we have $f_{\mathbf x}(e^i) \leq f_{\mathbf x}(d)$ for all $i$; thus, if $\max_{1 \leq i \leq k} f_{\mathbf y}(e^i) \leq f_{\mathbf{y}}(d)$, we would have that $[d_0] = [d_1] \in \pi_0(V,E \setminus \{d\},\partial,f)(f(d))$, which contradicts the fact that the filtered graph is minimal.
This means that if $w$ is a witness for~$d$,
then $\max_{1 \leq i \leq k} f_{\mathbf y}(e^i) > f_{\mathbf{y}}(d)$, and thus $f(d) < f(w)$.

\begin{theorem}
    \label{theorem:second-betti-table}
    Let $(G,f) = (V,E,\partial, f)$ be a minimal $\Rbb^2$-filtered graph, and $\prec$ a total order on $E$ refining the lexicographic order.
    For each $d \in E$ cycle-creating, let $w_d = (s_d^\bullet, e_d^\bullet)$ be a minimal witness wrt to $\prec$.
    The kernel of the morphism $\partial_1^{(G,f)}$ of Lemma \ref{definition:H0-H1} is given by:
    \begin{align*} \bigoplus_{\substack{d \in E\\\text{cycle-creating}}}
        \Psf_{f(w_d)} \cdot \{d\}
        \;\;\;       & \xrightarrow{\;\;\kappa^{(G,f)}\;\;}\;\;\;
        \bigoplus_{e \in E} \Psf_{f(e)} \cdot \{e\}                                                         \\
        \{d\} \;\;\; & \xmapsto{\;\;\;\;\;\;\;\;\;\;\;\;\;\;} \;\;\;\{d\} - \big(s_d^1 \{e_d^1\} + \cdots + s_d^k\{e_d^k\}\big).
    \end{align*}
    It follows that
    \begin{itemize}
        \item $\beta_0(H_0(G,f)) = \sum_{v \in V} \delta_{f(v)}$;
        \item $\beta_1(H_0(G,f)) = \sum_{e \in E} \delta_{f(e)}$;
        \item $\beta_0(H_1(G,f)) = \beta_2(H_0(G,f)) = \sum_{d \in E, \text{cycle-creating}} \delta_{f(w_d)}$;
        \item $\beta_i(H_0(G,f)) = 0$ for $i \geq 3$, and $\beta_i(H_1(G,f)) = 0$ for $i \geq 1$.
    \end{itemize}
\end{theorem}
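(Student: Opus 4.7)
The plan is to prove that $\kappa^{(G,f)}$ factors through $\ker(\partial_1^{(G,f)})$ as an isomorphism of persistence modules, from which every claimed formula will follow. Well-definedness of $\kappa^{(G,f)}$ at grade $f(w_d)$ reduces to the inequalities $f_{\mathbf x}(e_d^i) \leq f_{\mathbf x}(d) = f_{\mathbf x}(w_d)$ (since $e_d^i \prec d$ and $\prec$ refines the lexicographic order) and $f_{\mathbf y}(e_d^i) \leq f_{\mathbf y}(w_d)$ (by definition of $f(w_d)$), together with $f(d) < f(w_d)$ noted just before the theorem. The identity $\partial_1^{(G,f)} \circ \kappa^{(G,f)} = 0$ is immediate once one observes that $\sum_i s_d^i (\{e_{d,1}^i\} - \{e_{d,0}^i\})$ telescopes to $\{d_1\} - \{d_0\} = \partial_1^{(G,f)}(\{d\})$, precisely because $(s_d^\bullet, e_d^\bullet)$ is a directed path from $d_0$ to $d_1$.

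The main step is to show that, at each grade $r \in \Rbb^2$, the cycles $\kappa^{(G,f)}(\{d\})$ indexed by cycle-creating edges $d$ with $f(w_d) \leq r$ form a basis of $\ker(\partial_1^{(G,f)})(r)$, i.e., of the cycle space of $(G,f)_r$. I would first establish the following combinatorial identification: a cycle-creating edge $d$ satisfies $f(w_d) \leq r$ if and only if $d$ is \emph{locally cycle-creating} in $(G,f)_r$, meaning $f(d) \leq r$ and $d_0, d_1$ are connected by edges $\prec d$ inside $(G,f)_r$. The forward direction uses that the minimal witness of $d$ already sits inside $(G,f)_r$; the backward direction notes that a $\prec$-witness confined to $(G,f)_r$ certifies $f_{\mathbf y}(w_d) \leq r_{\mathbf y}$.

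Granting this identification, the usual union-find / spanning-forest argument (process the edges of $(G,f)_r$ in $\prec$-order and track connected components) shows that the locally cycle-creating edges of $(G,f)_r$ number exactly $|E_r| - |V_r| + k_r$, where $k_r = |\pi_0((G,f)_r)|$; this equals the dimension of the cycle space. Linear independence of $\{\kappa^{(G,f)}(\{d\})\}$ is a triangular argument in $\prec$: the coefficient of $d$ in $\kappa^{(G,f)}(\{d\})$ is $1$, while all remaining nonzero coefficients lie on edges $e \prec d$. Together, this shows $\kappa^{(G,f)}$ is a grade-wise isomorphism onto $\ker(\partial_1^{(G,f)})$.

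From this, $H_1(G,f) \cong \ker(\partial_1^{(G,f)}) \cong \bigoplus_{d \text{ cycle-creating}} \Psf_{f(w_d)} \cdot \{d\}$ is projective, giving $\beta_0(H_1) = \sum_{d \text{ cycle-creating}} \delta_{f(w_d)}$ and $\beta_i(H_1) = 0$ for $i \geq 1$. The formulas for $\beta_0(H_0)$ and $\beta_1(H_0)$ are supplied by \cref{theorem:betti-tables-minimal-filtered-graph}, which also furnishes a minimal presentation of $H_0(G,f)$; appending the isomorphism onto $\ker(\partial_1^{(G,f)})$ extends this to a minimal projective $2$-resolution of $H_0(G,f)$, so $\beta_2(H_0) = \beta_0(H_1)$ and $\beta_i(H_0) = 0$ for $i \geq 3$. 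I expect the main obstacle to be the combinatorial identification of $\{d : f(w_d) \leq r\}$ with the locally cycle-creating edges of $(G,f)_r$; once that is in place, everything else follows from classical spanning-forest counting and elementary linear algebra.
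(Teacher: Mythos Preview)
Your argument is correct, and it takes a genuinely different route from the paper's. The paper proves exactness of the bottom row of a two-row diagram by induction on $|E|$: it removes the $\prec$-maximum edge $h$, observes that $(G\setminus h,f)$ is still minimal, and then for each $(x,y)\in\Rbb^2$ distinguishes three cases according to whether $(x,y)\geq f(h)$ and (when $h$ is cycle-creating) whether $(x,y)\geq f(w_h)$; the middle case uses a small linear-algebra claim about appending a single nonzero vector to an exact four-term sequence. You instead work directly at each grade $r$: your identification of $\{d\text{ cycle-creating}: f(w_d)\leq r\}$ with the edges that create a cycle when the edges of $(G,f)_r$ are processed in $\prec$-order is valid (forward because the minimal witness lies in $(G,f)_r$; backward because any $\prec$-witness inside $(G,f)_r$ bounds $f_{\mathbf y}(w_d)$ by $r_{\mathbf y}$, and $f_{\mathbf x}(w_d)=f_{\mathbf x}(d)\leq r_{\mathbf x}$), the spanning-forest count then matches $|\{d:f(w_d)\leq r\}|$ with $\dim H_1((G,f)_r)$, and the echelon structure in $\prec$ (pivot of $\kappa(\{d\})$ at $d$, all other support $\prec d$) gives linear independence. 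Your approach is more elementary and makes the role of the order $\prec$ transparent; the paper's inductive approach, on the other hand, mirrors the edge-by-edge processing of \cref{algorithm:betti-tables} and so dovetails with the algorithmic section. The one place where you could tighten the exposition is the ``expected obstacle'': the combinatorial identification is in fact immediate from the definitions, as sketched above, so you can state it as a lemma rather than flag it as a difficulty.
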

\begin{proof}
    The claim about the Betti tables follows at once from the first statement of the theorem together with \cref{theorem:betti-tables-minimal-filtered-graph}.
    Note in particular that we need not prove that $\kappa^{(G,f)}$ and $\partial_1^{(G,f)}$ form a minimal resolution, since we know that $\partial_1^{(G,f)}$ is a minimal presentation.
    So let us prove the claim about the kernel.

    We proceed by induction on the number of edges in the graph.
    Let $h \in E$ be the maximum of $E$ with respect to the total order $\prec$.
    %Let $E = \{e_1, \dots, e_\ell\}$ ordered lexicographically, so that, for every $1 \leq i < \ell$, $f_{\mathbf x}(e_i) \leq f_{\mathbf x}(e_{i+1})$, and $f_{\mathbf y}(e_i) \leq f_{\mathbf y}(e_{i+1})$ if $f_{\mathbf x}(e_i) = f_{\mathbf x}(e_{i+1})$.
    The filtered graph $(G',f) \coloneqq (G\setminus h, f)$ is minimal, by Lemma \ref{lemma:deleting-collapsin-does-not-change}(2), so the result holds for this graph.

    We consider the following diagram, obtained by adding the edge $h$ to $(G',f)$:
    \[
    \scriptsize
    \hspace*{-5px}
        \begin{tikzpicture}
            \matrix (m) [matrix of math nodes,row sep=4em,column sep=3em,minimum width=2em,nodes={text height=1.75ex,text depth=0.25ex}]
            {
            0   & \displaystyle \bigoplus_{\text{\emph{cc}}\, d \in E \setminus \{h\}} \Psf_{f(w_d)} \cdot \{d\}
                & \displaystyle \bigoplus_{e \in E \setminus \{h\}} \Psf_{f(e)} \cdot \{e\}
                & \displaystyle\bigoplus_{v \in V} \Psf_{f(v)} \cdot \{v\}
                & H_0(G',f) & 0 \\
            0   & \displaystyle \bigoplus_{\text{\emph{cc}}\, d \in E} \Psf_{f(w_d)} \cdot \{d\}
                & \displaystyle\bigoplus_{e \in E} \Psf_{f(e)} \cdot \{e\}
                & \displaystyle\bigoplus_{v \in V} \Psf_{f(v)} \cdot \{v\} & H_0(G,f) & 0\, , \\};
            \path[line width=0.75pt, -{>[width=8pt]}]
            (m-1-1) edge (m-1-2)
            (m-1-2) edge [above] node {$\kappa^{(G',f)}$} (m-1-3)
            (m-2-1) edge (m-2-2)
            (m-2-2) edge [above] node {$\kappa^{(G,f)}$} (m-2-3)
            (m-1-3) edge [above] node {$\partial_1^{(G',f)}$} (m-1-4)
            (m-2-3) edge [above] node {$\partial_1^{(G,f)}$} (m-2-4)
            (m-1-4) edge (m-1-5)
            (m-2-4) edge (m-2-5)
            (m-1-3) edge [right hook-{>[width=8pt]}] (m-2-3)
            (m-1-2) edge [right hook-{>[width=8pt]}] (m-2-2)
            (m-1-4) edge [double equal sign distance,-] (m-2-4)
            (m-1-5) edge [->>] (m-2-5)
            (m-1-5) edge (m-1-6)
            (m-2-5) edge (m-2-6)
            ;
        \end{tikzpicture}
    \]
    where \emph{cc} stands for ``cycle-creating''.
    Since $h$ is the maximum according to $\prec$, the cycle-creating edges of $(G',f)$ with respect to $\prec$ are simply the cycle-creating edges of $(G,f)$, possibly minus $h$ in the case where $h$ is cycle-creating.

    The top row of the diagram is exact by inductive hypothesis, and we need to prove that the bottom row is exact as well.

    \medskip

\noindent \begin{minipage}{0.75\textwidth}
    \hspace{12px} In order to prove this, we evaluate the persistence modules and morphisms of the diagram at each $(x,y) \in \Rbb^2$, and show that the bottom row is exact.
    We consider three cases:
    \begin{enumerate}
        \item $(x,y) \not\geq f(h)$;
        \item $(x,y) \geq f(h)$, and either $h$ is not cycle-creating or $(x,y) \not\geq f(w_h)$;
        \item $(x,y) \geq f(w_h)$, which is only considered if $h$ is cycle-creating.
    \end{enumerate}
\end{minipage}
\begin{minipage}{0.2\textwidth}
    \hspace*{10pt} \includegraphics[width=\textwidth]{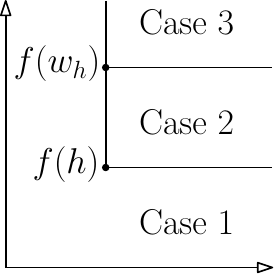}
\end{minipage}

    \medskip

    Case (1) is immediate, since the top and bottom row are equal when evaluated at $(x,y)$, as $h$ is not an edge of $(G,f)_{(x,y)}$.
    
    \smallskip

    Case (2) is slightly more interesting.
    In this case, the difference between the two rows, when evaluated at $(x,y)$, is that the edge $h$ has appeared in the second row, and it is connecting the connected components of $h_0$ and $h_1$, which would otherwise be distinct.
    This is because either $h$ is not cycle-creating, or $(x,y) \not\geq f(w_h)$, so that $y < f_{\mathbf y}(w_h)$, which is the smallest grade at which $h_0$ and $h_1$ become connected in $\{(x',y') \in \Rbb^2 : (x',y') \geq f(h)\}$, by the fact that $w_h$ is a minimal witness.
    In going from the top row to the bottom row, both evaluated at $(x,y)$, the dimension of $H_0$ is dropping by one, and the dimension of the vector space generated by edges is going up by one.
    Since, in the bottom row, the generator $\{h\}$ is mapping to $\{h_1\}-\{h_0\}$, which is non-zero in $H_0(G',f)(x,y)$, this case follows from the following basic claim from linear algebra, whose proof is straightforward.

    \smallskip

    \noindent\textit{Claim.}
    Let $0 \to \ker \alpha \to A \xrightarrow{\,\,\alpha\,\,} B \to \coker \alpha \to 0$ be an exact sequence of vector spaces, and let $b \neq 0 \in B$.
    Then, the following is an exact sequence of vector spaces
    $0 \to \ker \alpha \to A \oplus \kbb \xrightarrow{(\alpha,b)} B \to \coker (\alpha,b) \to 0$.

    \smallskip

    Finally, we consider case (3).
    Thus, in going from the first to the second row, when evaluated at $(x,y)$, the dimension of the vector space generated by edges is going up by one, and the dimension of the space generated by vertices remains the same.
    In this case, in $H_0(G',f)(x,y)$, the elements $[h_0]$ and $[h_1]$ are already equal, since there is a path between $h_0$ and $h_1$, namely, the witness $w_h$; thus, the dimension of $\ker \partial_1(x,y)$ is going up by one.
    In order to account for this, $\kappa^{(G,f)}$ is mapping $\{h\}$ to $\{h\} - (s^1_h\{e^1_d\} + \cdots + s^k_h\{e^k_d\})$, which is clearly non-zero, and in the kernel of $\partial_1(x,y)$.
    This concludes the proof.
\end{proof}

\section{Algorithms}
\label{section:algorithms}

\subsection{Outline and background}

We first introduce some technical notions and overview the algorithms.

Two $\Pscr$-filtered graph $(G,f)$ and $(G',f')$ are \emph{homology-equivalent} (over~$\kbb$) if $H_0(G,f;\kbb) \cong H_0(G',f';\kbb)$ and $H_1(G,f;\kbb) \cong H_1(G',f';\kbb)$.

\cref{algorithm:vertex-reduction} reduces the input filtered graph to a vertex-minimal filtered graph by collapsing edges; it relies on \cref{algorithm:local-components}, which collapses local collapsible edges.
A \emph{local collapsible edge} of $(G,f)$ is an edge $e$ such that $e_0 \neq e_1$ and $f(e) = f(e_0) = f(e_1)$.
\cref{algorithm:vertex-reduction} then identifies \emph{minimal vertices}, that is, vertices with no adjacent collapsible edge decreasing the grade, and runs a depth-first search from each of these to build and collapse a tree of collapsible edges.

\cref{algorithm:minimal-presentation} first calls \cref{algorithm:vertex-reduction} to perform all collapses and then deletes deletable edges until there are no more deletable edges.
It then builds the presentation of Lemma \ref{definition:H0-H1}.

\cref{algorithm:betti-tables} first calls \cref{algorithm:vertex-reduction} to perform all collapses, and then goes through all edges, with respect to the lex order on their grades, and identifies deletable edges, non-deletable edges, and cycle-creating edges.

\paragraph{Dynamic dendrograms}
Since this is used in \cref{algorithm:betti-tables}, we now introduce the dynamic dendrogram data structure, which, informally,
%\luis{I don't know if we need this in the end, it may be simpler to just use dynamic trees}
%\dmitriy{I think we should rephrase all of this. It's not immediate that the
%    vertex trick below works. (It does, but it will require much more
%    explaining).}
represents a dendrogram where elements merge as time goes from $-\infty$ to $\infty$, and is dynamic in that one is allowed to change the dendrogram by making two elements merge earlier.

%Let $(G,f) = (V,E,\partial,f)$ be a $[0,\infty)$-filtered graph.
%The \emph{dendrogram} of $(G,f)$ is the function $\nu(G,f) : [0,\infty) \to \mathsf{Parts}(X)$ given by mapping $r \in [0,\infty)$ to $\pi_0(G,f)(r)$.

\begin{definition}
    Let $(G,f) = (V,E,\partial,f)$ be a $[-\infty,\infty)$-filtered graph.
    A \emph{dynamic dendrogram} $D$ for $(G,f)$ is a data structure supporting the following operations:
    \begin{itemize}
        %\item If $v \in \Nbb$ and $t \geq 0$, the operation $D.\addattime(v,t)$ makes the vertex $v$ appear at time $t$ in the merge tree.
        \item If $v,w \in V$, the operation $D.\timeofmerge(v,w)$ returns the smallest $r \in [-\infty,\infty)$ such that $[v] = [w] \in \pi_0(G,f)(r)$, or $\infty$ if $[v] \neq [w] \in \pi_0(G,f)(r)$ for all $r \in [-\infty,\infty)$.
        \item If $v,w \in V$ and $t \in [-\infty, \infty)$, the operation $D.\mergeattime(v,w,t)$ modifies the dendrogram so that it is a dynamic dendrogram for $(V, E \sqcup \{e\}, \partial', f')$, with $f'|_{E} = f$, $\partial'|_{E} = \partial$, $f'(e) = t$, and $\partial'(e) = (v,w)$.
    \end{itemize}
\end{definition}

A dynamic dendrogram $D$ can easily and efficiently be implemented using a \emph{mergeable tree} $T$ in the sense of~\cite{georgiadis}.
These trees store heap-ordered forests, i.e., a collection of rooted labeled trees, where the labels decrease (or in our case increase) along paths to the root. The data structures support a wealth of operations for dynamic updates. We need only three of them: $\mathsf{insert}(v,t)$ inserts node $v$ with label $t$ into the forest; $\mathsf{nca}(v,w)$ finds the nearest common ancestor of two nodes $v$ and $w$; $\mathsf{merge}(v,w)$ merges the paths of $v$ and $w$ to their root(s) while preserving the heap order.
We use these operations to implement dynamic dendrograms as follows:
\begin{itemize}
    %\item \textit{Initialization.}
    %For each $v \in \{1, \dots, n\}$, do $M.\mathsf{insert}(v,0)$.
    %\item To implement $D.\addattime(v,t)$, do $M.\mathsf{insert}(v,t)$.
    \item To implement $D.\timeofmerge(v,w)$, return the label of $T.\mathsf{nca}(v,w)$.
    \item To implement $D.\mergeattime(v,w,t)$, let $h$ be a new vertex not already in the mergeable tree $T$, do $T.\mathsf{insert}(h, t)$, then $T.\mathsf{merge}(v,h)$, and $T.\mathsf{merge}(w,h)$.
          %\item \luis{I don't know how to implement $\representative$}
\end{itemize}

% \todo{Say more about mergeable trees}

%\luis{explain roughly how the algo goes}

%\todo{I added the next paragraph on the representation for presentation matrices that we used. The parts of the algorithms related to this should be re-read}
\paragraph{Presentation matrices}
In the algorithms, to represent a Betti table $\beta_i(M)$ we use a list of elements of the indexing poset.
If we have represented the $0$th and $1$st Betti tables of $M$ with lists $\beta_0$ and $\beta_1$, we represent a minimal presentation for $M$ with a sparse matrix using coordinate format, that is, with a list of triples $(i,j,v)$ representing the fact that $v$ is the entry at row $i$ and column $j$, and where $i$ represents the $i$th element of $\beta_0$ and $j$ represents the $j$th element of $\beta_1$.

\medskip

\begin{algorithm}[H]
    \caption{Collapse local collapsible edges}
    \label{algorithm:local-components}
    \SetAlgoLined
    \DontPrintSemicolon
    \SetKwInput{Input}{Input}
    \SetKwInput{Output}{Output}
    \Input{Filtered graph $(G,f) = (V,E,\partial,f)$}
    \Output{Filtered graph $(G',f)$ homology-equivalent to $(G,f)$ and without local collapsible edges}
    Initialize dictionary $\phi$ with identity map, $\phi[v] = v$ for $v \in V$ \;
    Initialize empty set $\var{visited}$ \;
    Initialize $E' \gets E$ \;
    \For(\Comment{Run depth-first search from $v$ on local edges}){$v \in V$}{
        Initialize $\var{stack}$ with $(v,\emptyset,v)$ \;
        \While{$\var{stack}$ is not empty}{
            $(v,e,u) \gets \var{stack}.\op{pop}()$ \;
            \If{$u \notin \var{visited}$}{
                add $u$ to $\var{visited}$ \;
                \If{$e \neq \emptyset$}{
                    $\phi[u] \gets v$ \;
                    remove $e$ from $E'$ \;
                }
                \For(\Comment{local collapsible}){$e \in E$ with $\{e_0,e_1\} = \{u,x\}$ and $f(e) = f(x) = f(u)$}{
                    \op{push} $(v,e,x)$ onto $\var{stack}$ \;
                }
            }
        }
    }
    $V' \gets \{v \in V : \phi[v] = v \}$ \;
    \Return $(G',f) = \left(V',E',(\phi \times \phi) \circ \partial, f\right)$
\end{algorithm}

\begin{algorithm}[H]
    \caption{Collapse to vertex-minimal filtered graph}
    \label{algorithm:vertex-reduction}
    \SetAlgoLined
    \DontPrintSemicolon
    \SetKwInput{Input}{Input}
    \SetKwInput{Output}{Output}
    \Input{Filtered graph $(G,f) = (V,E,\partial,f)$}
    \Output{
        Vertex-minimal filtered graph $(G',f)$ homology-equivalent to $(G,f)$, and $\beta_0(H_0(G,f))$
    }
    $(G_i,f) = (V_i,E_i,\partial_i,f) \gets \textrm{\cref{algorithm:local-components}}(G,f)$ \;
    Initialize dictionary $\phi$ with identity map, $\phi[v] = v$ for $v \in V_i$ \;
    Initialize empty set $\var{visited}$ \;
    Initialize $E' \gets E_i$ \;
    \For(\Comment{Run depth-first search from minimal vertices}){$v$ in $V_i$}{
        \If(\Comment{$v$ is not minimal}){$\exists$ edge $e \in E_i$ with $\{e_0,e_1\} = \{u,v\}$ and $f(e) = f(v) > f(u)$}{
            \op{continue} \;
        }
        Initialize $\var{stack}$ with $(v, \emptyset, v)$ \;
        \While{$\var{stack}$ is not empty}{
            $v,e,u \gets \var{stack}.\op{pop}()$ \;
            \If{$u \notin \var{visited}$}{
                add $u$ to $\var{visited}$ \;
                \If{$e \neq \emptyset$}{
                    $\phi[u] \gets v$ \;
                    remove $e$ from $E'$ \;
                }
                \For{every edge $e \in E_i$ with $\{e_0,e_1\} = \{u,x\}$ with $f(e) = f(x) > f(u)$}{
                    \op{push} $(v,e,x)$ onto $\var{stack}$ \;
                }
            }
        }
    }
    $V' \gets \{v \in V_i : \phi[v] = v \}$ \;
    \Return $(G',f) = \left(V', E', (\phi \times \phi) \circ \partial_i, f\right)$
    and $\beta_0 = [f(v) : v \in V'] $ \;
\end{algorithm}

\begin{algorithm}[H]
    \caption{Minimal presentation of $\Pscr$-filtered graph}
    \label{algorithm:minimal-presentation}
    \SetAlgoLined
    \DontPrintSemicolon
    \SetKwInput{Input}{Input}
    \SetKwInput{Output}{Output}
    \Input{Filtered graph $(G,f) = (V,E,\partial,f)$}
    \Output{Minimal presentation of $H_0(G,f;\kbb)$}
    Initialize $\beta_0, \beta_1$ with empty lists\;
    Initialize empty sparse matrix $\Mcal$ and empty dictionary $\mathsf{row\_idx}$\;
    $(V',E',\partial',f) \gets \textrm{\cref{algorithm:vertex-reduction}}(G,f)$\;
    \For(\Comment{Check each edge, and delete it if it is deletable}){$e \in E'$ \label{line:for-deletable}}{
        Define set of vertices
        $V_e \gets \{v \in V' : f(v) \leq f(e)\}$\;
        Define set of edges 
        $E_e \gets \{d \in E' : f(d) \leq f(e)\} \setminus e$\;
        Run breadth-first search on $(V_e,E_e,\partial')$ starting from $e_0$\;
        \If(\Comment{$[e_0] = [e_1] \in \pi_0(V_e,E_e,\partial')$ so $e$ is deletable}){$e_1$ is reachable from $e_0$}{
            $E' \gets E' \setminus \{e\}$\;
        }
    }
    \For{$v \in V'$}{
        $\beta_0.\append(f(v))$\;
        $\mathsf{row\_idx}[v] \gets |\beta_0|$\;
    }
    \For(\Comment{The morphism $\partial_1^{(V',E',\partial')}}$ of Lemma \ref{definition:H0-H1}){$e \in E'$}{
        $\beta_1.\append(f(e))$\;
        $\Mcal.\append(\,\mathsf{row\_idx}[e_0]\,,\,|\beta_1|\,,\,-1\,)$\;
        $\Mcal.\append(\,\mathsf{row\_idx}[e_1]\,,\,|\beta_1|\,,\,1\,)$\;
    }
    \Return $\beta_0, \beta_1, \Mcal$
\end{algorithm}

\begin{algorithm}[H]
    \caption{Betti tables and minimal presentation of $\Rbb^2$-filtered graph}
    \label{algorithm:betti-tables}
    \SetAlgoLined
    \DontPrintSemicolon
    \SetKwInput{Input}{Input}
    \SetKwInput{Output}{Output}
    \Input{Filtered graph $(G,f) = (V,E,\partial,f)$}
    \Output{Betti tables $\beta_0,\beta_1,\beta_2$ and minimal presentation of $H_0(G,f;\kbb)$, and $\beta_0^1 \coloneqq \beta_0(H_1(G,f;\kbb))$}
    Initialize $\beta_0, \beta_1, \beta_2, \beta^1_0$ with empty lists\;
    Initialize empty sparse matrix $\Mcal$ and empty dictionary $\mathsf{row\_idx}$\;
    $(V',E',\partial',f) \gets \textrm{\cref{algorithm:vertex-reduction}}(G,f)$\;
    Let $D$ be a dynamic dendrogram on $(V', \emptyset, \partial, g)$, with $g(v) = -\infty$ for all $v \in V'$
    %\dmitriy{It's confusing to name this function $f$ too, especially since it maps to $\Rbb$, not $\Rbb^2$. Also that 0 should probably be $-\infty$.}
    \;
    %Initialize empty sparse matrix $\Mcal$ and empty dictionary $\mathsf{row\_idx}$\;
    %\luis{avoid using notion of vertex reduction scheme}
    %Using \cref{algorithm:vertex-reduction}, let $(\phi,E')$ be a minimal vertex-reduction scheme of $(G,f)$\;
    %Initialize empty sparse matrix $\Mcal$ with rows indexed by $V'$ and columns indexes by $E'$\;
    \For(\Comment{Visit grades lexicographically}){$(x,y) \in \Rbb^2$ s.t.~$f^{-1}(x,y) \neq \emptyset$ in lex order}{
    \For(\Comment{All vertices belong to the projective cover}){$v \in V' \text{ with } f(v) = (x,y)$
    }{
    $\beta_0.\append(\,(x,y)\,)$\;
    $\mathsf{row\_idx}[v] \gets |\beta_0|$\;
    %Append row of zeros to $M$\;
    }
    \For{$e \in E' \text{ with } f(e) = (x,y)$ \label{line:for-edges}}{
    $s \gets D.\timeofmerge(e_0,e_1)$\;
    \If(\Comment{The edge is deletable, so it only affects $H_1$}){$s \leq y$ \label{line:if-s-y}}{
        $\beta^1_0.\append(\,(x,y)\,)$\;
    }
    \Else(\Comment{Edge is not deletable, so belongs to relations in resolution}){
        $D.\mergeattime(e_0,e_1,y)$\;
        $\beta_1.\append(\,(x,y)\,)$\;
        %Append column of zeros to $M$\;
        $\Mcal.\append(\,\mathsf{row\_idx}[e_0]\,,\,|\beta_1|\,,\,-1\,)$\;
        $\Mcal.\append(\,\mathsf{row\_idx}[e_1]\,,\,|\beta_1|\,,\,1\,)$\;
        %$\Mcal[\, \mathsf{row\_idx}[\,e_1\,]\,,\,|\beta_1|\,] \gets 1$\;
        \If(\Comment{The edge is cycle-creating}){$s < \infty$}{
            $\beta_2.\append(\,(x,s)\,)$\; \label{line:betti-2-added}
            $\beta^1_0.\append(\,(x,s)\,)$\;
            %\If{$\mathsf{output\_resolution}$}{
            %    \luis{to do}\;
            %}
        }
    }
    }
    }
    %\If{$\mathsf{output\_resolution}$}{
    %    \luis{to do}\;
    %}
    %\Else{
        \Return $\beta_0, \beta_1, \beta_2, \beta^1_0, \Mcal$
    %}
\end{algorithm}

%\todo{Reviewer 3 says:
%The use of $row_idx$ seems unnecessary.  Can't you just assume the vertices are given an arbitrary order (say, are integers {1,...,m}, and that each vertex knows its function value?
%}
%\todo{Also:
%410.  Arguably, the sparse matrix M should be defined at line 418, after E' is minimized, so that the dimensions are clear.  Relatedly, it would be good to specify what your model of sparse matrix is.  I feel that it is best for the sparse matrices to have proscribed row and column dimensions; otherwise, the data structure does not unambiguously specify a single matrix.}

\subsection{Complexity and correctness}
We conclude the paper by using the theoretical results of \cref{section:theory} to prove the main results in the introduction.

We start by with a convenient lemma, which gives conditions under which one can collapse an entire subgraph and produce a homology-equivalent filtered graph.
The following definition describes the type of subgraph that can be collapsed.

\begin{definition}
    \label{definition:monotonic-forest}
    Let $(G,f) = (V,E,\partial, f)$ be a filtered graph.
    A subset $E' \subseteq E$ is a \emph{monotonic forest} of $(G,f)$ if:
    \begin{enumerate}
        \item The subgraph of $G$ spanned by the edges in $E'$ is a forest.
        \item For every vertex $v$ in the forest, there exists at most one edge $e \in E'$ such that $\{e_0,e_1\} = \{v,w\}$ and $f(w) < f(v)$.
    \end{enumerate}
\end{definition}

\begin{lemma}
    \label{lemma:collapsible-forest}
    Let $(G,f) = (V,E,\partial,f)$ be a filtered graph, and let $E' \subseteq E$ be a set of collapsible edges, which forms a monotonic forest.
    If $e \in E'$, then all the edges in $E'\setminus\{e\}$ are collapsible in $(G\downarrow e, f)$.
    In particular, the whole forest $E'$ can be collapsed (in any order) to obtain a filtered graph that is homology-equivalent to $(G,f)$.
\end{lemma}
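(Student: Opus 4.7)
The strategy is to reduce the whole statement to a one-step claim: for any $e \in E'$, every other $d \in E' \setminus \{e\}$ remains collapsible in $(G \downarrow e, f)$. Once this is in hand, iterating together with \cref{lemma:vertex-reduction-same-homology} yields the homology-equivalence conclusion.

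For the one-step claim I proceed by case analysis on the number of shared vertices between $d$ and $e$. Since $E'$ is a forest, they share at most one vertex. If they share none, $d$'s endpoints and filtration value are unchanged, so $d$ stays collapsible. Otherwise write $e = \{v^*, u\}$ and $d = \{v^*, w\}$ with $u \neq w$ (a $2$-cycle in the forest is impossible). The only delicate case is when the simple collapse removes $v^*$ and identifies it with $u$, so that $f(e) = f(v^*)$ and $f(u) \leq f(v^*)$; in $(G \downarrow e, f)$ the edge $d$ then has endpoints $\{u, w\}$ with the same filtration value $f(d)$. Collapsibility of $d$ in $(G, f)$ gives $f(d) \in \{f(v^*), f(w)\}$. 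If $f(d) = f(w)$, then $d$ is immediately collapsible in the new graph. The subtle sub-case is $f(d) = f(v^*) > f(w)$: here $d$ is a downward edge at $v^*$ in the sense of \cref{definition:monotonic-forest}, and monotonicity forbids $e$ from also being strictly downward at $v^*$, forcing $f(u) = f(v^*) = f(d)$, which again makes $d$ collapsible.

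The ``in any order'' statement then follows by induction on $|E'|$: pick any $e \in E'$, apply the one-step claim to collapse it, and apply the induction hypothesis to $E' \setminus \{e\}$ in $(G \downarrow e, f)$. \cref{lemma:vertex-reduction-same-homology} ensures that $H_0$ and $H_1$ are preserved at each stage, so the cumulative effect is a filtered graph homology-equivalent to $(G, f)$. The main obstacle is verifying that the induction hypothesis survives each collapse: the forest property is clearly preserved by edge contraction, but monotonicity requires a careful count of how the downward edges at the merged vertex $u$ inherit from both $u$ and $v^*$ in $(G, f)$, using the monotonic-forest bound at both vertices simultaneously. In general this may require choosing the order of collapses sensibly, and is most transparent in the two regimes that actually arise in our algorithms, namely forests of only local collapsible edges (as produced by \cref{algorithm:local-components}) and forests of only non-local ones (as produced by \cref{algorithm:vertex-reduction}).
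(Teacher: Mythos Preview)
Your one-step argument---that every $d \in E' \setminus \{e\}$ remains collapsible in $(G \downarrow e, f)$---follows the same case analysis as the paper's proof, and your handling of the sub-case where the removed vertex $v^*$ satisfies $f(u) = f(v^*)$ is in fact more careful: the paper's second case tacitly assumes that $e$ is strictly downward at $e_1$ when it invokes condition~(2) of \cref{definition:monotonic-forest}, whereas you split explicitly on whether this holds and recover collapsibility via the endpoint $u$ when it does not.

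Your caution about the iteration is also well-founded, and in fact the ``in particular'' clause does not follow from the one-step claim alone. After a collapse, $E' \setminus \{e\}$ need not remain a monotonic forest: contracting a local edge can merge two vertices each carrying its own strictly-downward edge. Concretely, take the $\Rbb$-filtered path on four vertices $w_1, v^*, u, w_2$ (in this linear order) with $f(w_1) = f(w_2) = 1$, $f(v^*) = f(u) = 2$, and all three edges at grade $2$. The three edges form a monotonic forest of collapsible edges, yet in every order the third collapse fails: after any two simple collapses the surviving edge joins $w_1$ to $w_2$ at grade $2$ and is not collapsible. So the ``in any order'' statement is too strong as written, and the paper's proof shares the gap you identified. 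Your proposed fix---restricting to forests consisting entirely of local edges or entirely of strictly non-local edges, exactly as produced by \cref{algorithm:local-components} and \cref{algorithm:vertex-reduction}---is what the applications actually need; in each of those two regimes one checks directly that monotonicity (and the regime itself) is preserved by every collapse, so the induction on $|E'|$ goes through.
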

\begin{proof}
    Collapsing a single collapsible edge produces a homology-equivalent filtered graph, by Lemma \ref{lemma:vertex-reduction-same-homology}.
    So we need to prove that collapsing an edge of the monotonic forest $E'$ consisting of collapsible edges does not affect the collapsibility of other edges in $E'$.

    Let $e,d \in E'$.
    We prove that $d$ is collapsible in $(G\downarrow e, f)$.
    Without loss of generality, we assume that $f(d) = f(d_1)$.
    Without loss of generality, we also assume that $f(e) = f(e_1)$ and that the simple collapse deletes $e_1$ and $e$, and makes every edge adjacent to $e_1$ now be adjacent to $e_0$.

    Let $G' = (V', E\setminus \{e\}, \partial') = G\downarrow e$.
    For notational convenience, let $(d'_0, d'_1) = \partial'(d)$.
    We prove that $d'_0 \neq d'_1$ and $f(d'_0) = f(d)$ or $f(d'_1) = f(d)$, which implies that $d$ is collapsible in $(G\downarrow d, f)$, by definition.
    The fact that $d'_0 \neq d'_1$ in $G\downarrow e$ is clear, since $d_0 \neq d_1$, so $d'_0 = d'_1$ would mean that $\{e_0, e_1\} = \{d_0,d_1\}$, which is not possible because $E'$ is a forest and thus contains no loops.

    We prove $f(d'_0) = f(d)$ or $f(d'_1) = f(d)$ by considering two cases.
    The first case is when we have $d_1 \neq e_1$.
    In this case $d'_1 = d_1$, so $f(d'_1) = f(d_1) = f(d)$.
    %The second case is when $d_1 = e_1$.
    %In this case, $d_0 \neq e_1$, since otherwise $d$ would be a self-loop, and hence $d'_0 = d_0$ so that $f(d'_0) = f(d_0) = f(d)$.
    The second case is when $d_1 = e_1$.
    In this case, we cannot have $f(d_0) < f(d)$, as this would violate condition (2) of Definition \ref{definition:monotonic-forest} for the vertex $v = e_1$, so we must have $f(d_0) = f(d)$.
    But since $d_0 \neq d_1 = e_1$, we have that $d'_0 = d_0$, so $f(d'_0) = f(d_0) = f(d)$.
    %so that, in $G\downarrow e$ we have $d_0 = e_0$.
    %In this case $f(d_0) = f(e_1) \leq f(e) = f(e_0)$.
\end{proof}

\begin{lemma}
    \label{lemma:algo-1-correct}
    Let $(G,f)$ be a $\Pscr$-filtered graph.
    \cref{algorithm:local-components} outputs
    a filtered graph homology-equivalent to $(G,f)$ and without local collapsible edges in time $O(|G|)$.
    %in linear time a vertex map $\phi$
    %and a set of edges $E'$, such that $\phi$ applied to the vertices of $G$ and
    %edges $E'$ produces graph $G'$ with isomorphic homology groups
    %\dmitriy{better wording} and no local collapsible edges.
\end{lemma}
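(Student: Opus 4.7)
The plan is to verify three properties of \cref{algorithm:local-components}: that its output is homology-equivalent to $(G,f)$, contains no local collapsible edges, and is produced in $O(|G|)$ time.

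For homology-equivalence, I will exhibit the set $F \coloneqq E \setminus E'$ of edges removed by the algorithm as a monotonic forest of collapsible edges and then apply \cref{lemma:collapsible-forest}. Every $e$ added to $F$ is traversed by the inner loop, so $f(e) = f(e_0) = f(e_1)$ with $e_0 \neq e_1$, hence $e$ is collapsible. The set $F$ is a forest: inside each outer iteration rooted at some $v$, the $\var{visited}$ guard prevents the DFS from closing a cycle and yields a tree, and distinct outer iterations yield vertex-disjoint trees because once a vertex is in $\var{visited}$ it is never re-expanded. Monotonicity in the sense of \cref{definition:monotonic-forest} is automatic: every $e \in F$ satisfies $f(e_0) = f(e_1)$, so the inequality $f(w) < f(v)$ cannot hold for any incident vertex. \cref{lemma:collapsible-forest} then guarantees that the graph obtained by collapsing $F$ (in any order) is homology-equivalent to $(G,f)$, and a direct check confirms that this iterative collapse coincides with the quotient realized by $(\phi \times \phi) \circ \partial$ at the end of the algorithm: within each DFS tree, every vertex is identified with its root.

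For the absence of local collapsible edges in the output, I will take any edge $e \in E'$ with $f(e) = f(e_0) = f(e_1)$ in $(G',f)$ and argue that $e_0 = e_1$ after the quotient. Since $e \notin F$, when the DFS examined $e$ from one of its endpoints the other endpoint was already in $\var{visited}$; because the DFS only moves along local edges, any such already-visited vertex lies in the same DFS tree as the current one. Therefore $\phi[e_0] = \phi[e_1]$, so $(\phi \times \phi) \circ \partial$ sends $e$ to a self-loop, disqualifying it as a local collapsible edge.

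For the time bound, the $\var{visited}$ set ensures each vertex is expanded at most once across all outer iterations, and each expansion scans only its incident local edges; with standard adjacency-list storage of $G$, the total cost is $O(|V| + |E|) = O(|G|)$. The main obstacle in the plan, and essentially the crux of the previous paragraph, is justifying that each connected component of the local subgraph of $(G,f)$ is traversed in full by a single outer iteration, so that every one of its surviving local edges becomes a self-loop after $\phi$. This reduces to the observation that within a single DFS the $\var{visited}$ marker propagates along every local edge reachable from the root before the outer loop advances, which follows immediately from the DFS invariant that any neighbor of an expanded vertex is pushed onto the stack.
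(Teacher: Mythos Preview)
Your argument is correct and follows essentially the same plan as the paper: exhibit the removed edges $F = E \setminus E'$ as a monotonic forest of (local) collapsible edges, apply \cref{lemma:collapsible-forest} for homology-equivalence, and cite the linear DFS bound.

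The one place you diverge is in showing that no local collapsible edges survive. The paper argues indirectly: the DFS handles every local collapsible edge of the input, collapsing cannot create new collapsible edges (\cref{lemma:deleting-collapsin-does-not-change}(1)), and collapsing a \emph{local} edge leaves all endpoint $f$-values unchanged, so it cannot turn a non-local collapsible edge into a local one. You instead argue directly that any surviving local edge becomes a self-loop in $G'$, because its two endpoints lie in the same connected component of the local subgraph and hence in the same DFS tree. Your route is a little more concrete and avoids the auxiliary lemma; the paper's route is more modular. One small point worth making explicit: when you write ``$f(e) = f(e_0) = f(e_1)$ in $(G',f)$'', the endpoints of $e$ in $G'$ are $\phi(e_0),\phi(e_1)$, so you are implicitly using $f \circ \phi = f$ (immediate, since each DFS tree sits at a single grade) to pass from ``local in $G'$'' back to ``local in $G$'' before invoking your DFS-tree argument.
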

\begin{proof}
    The algorithm performs a depth-first search on the local collapsible edges.
    It removes the edges of the depth-first search forest and makes each edge adjacent to a vertex in the forest now be adjacent to the root of a tree in the forest, using the map $\phi$.
    This is equivalent to collapsing the forest, which produces a homology-equivalent graph, by Lemma \ref{lemma:collapsible-forest}.
    %By \cref{lemma:vertex-reduction-same-homology}, the homology groups of the original and collapsed graphs are isomorphic.
    To see that the output contains no local collapsible edges, note that all local collapsible edges in the initial graph are identified by the algorithm, and collapsing an edge cannot make any previously non-collapsible edge be collapsible, by Lemma \ref{lemma:deleting-collapsin-does-not-change}(1).
    Moreover, since collapsing a local collapsible edge does not change the $f$-value of the endpoints of any edge, a local collapse cannot make a previously non-local, collapsible edge become local collapsible.

    The time complexity of the depth-first search is linear.
\end{proof}

\begin{proposition}
    \label{proposition:vertex-reduction-correctness}
    Let $(G,f)$ be a $\Pscr$-filtered graph.
    \cref{algorithm:vertex-reduction} outputs a vertex-minimal filtered graph homology-equivalent to $(G,f)$ and $\beta_0(H_0(G,f))$ in $O(|G|)$ time.
\end{proposition}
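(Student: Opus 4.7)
\textbf{Proof plan for Proposition~\ref{proposition:vertex-reduction-correctness}.}
The plan is to first observe that the call to \cref{algorithm:local-components} gives, by \cref{lemma:algo-1-correct}, a filtered graph $(G_i,f)$ homology-equivalent to $(G,f)$ with no local collapsible edges, in $O(|G|)$ time. It therefore suffices to prove the proposition with $(G_i,f)$ in place of $(G,f)$. The main body of \cref{algorithm:vertex-reduction} starts a DFS at each ``minimal'' vertex $v$ (no adjacent edge $e$ with $\{e_0,e_1\}=\{u,v\}$ and $f(e)=f(v)>f(u)$), traversing only edges $e$ with $f(e)=f(x)>f(u)$ when moving from $u$ to the other endpoint $x$. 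I will denote by $T\subseteq E_i$ the set of edges removed from $E'$ across all DFS calls (the ``tree edges'').

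Next I would check that $T$ is a monotonic forest of collapsible edges in $(G_i,f)$. Each edge $e\in T$ is the unique edge that first visits some vertex $x$ via an edge with $f(e)=f(x)>f(u)$, so in particular $e_0\neq e_1$ and $f(e)=f(x)$, making $e$ collapsible. Because each visited non-root vertex gets exactly one tree edge and the DFS marks vertices as visited, $T$ is acyclic as a subgraph, hence a forest; moreover, for a visited vertex $v$, the unique tree edge pointing to $v$ is the unique edge of $T$ incident to $v$ whose other endpoint has strictly smaller $f$-value, fulfilling condition~(2) of Definition~\ref{definition:monotonic-forest}. An application of \cref{lemma:collapsible-forest} then shows that collapsing all of $T$ (in any order) yields a filtered graph homology-equivalent to $(G_i,f)$, and an inspection of the definition of simple collapse confirms that this is exactly what the map $\phi$ encodes: the returned graph $(V',E',(\phi\times\phi)\circ\partial_i,f)$ coincides with the result of collapsing every edge of $T$ one at a time.

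The main obstacle is showing that the output $(G',f)$ is vertex-minimal, i.e., that every collapsible edge of $(G_i,f)$ either lies in $T$ or fails to be collapsible in $(G',f)$. Let $e\in E_i\setminus T$ be collapsible. Since $(G_i,f)$ has no local collapsible edges, we may assume $f(e_0)<f(e_1)=f(e)$ after relabeling. Then $e_1$ is not a minimal vertex (the edge $e$ witnesses this), so $e_1$ is first discovered by DFS via a tree edge; all tree edges strictly increase $f$-value, hence any tree path from a root $r$ to $e_1$ has $f(r)<f(e_1)$ (the path is non-empty because $e_1$ is not itself a root). Therefore $\phi[e_1]=r$ satisfies $f(\phi[e_1])<f(e)$, and $f(\phi[e_0])\le f(e_0)<f(e)$. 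Hence both endpoints of the image of $e$ have $f$-value strictly below $f(e)$, so $e$ cannot be collapsible in $(G',f)$. By \cref{lemma:deleting-collapsin-does-not-change}, collapsing the tree edges creates no new collapsible edges, so $(G',f)$ is vertex-minimal. Applying \cref{theorem:betti-tables-minimal-filtered-graph}(1) then gives $\beta_0(H_0(G',f))=\sum_{v\in V'}\delta_{f(v)}$, which is precisely the list $\beta_0$ returned by the algorithm, and equals $\beta_0(H_0(G,f))$ by homology-equivalence.

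Finally, for the time bound, each vertex $v$ is examined at most once to test minimality at cost $O(\deg(v))$, and each edge is pushed onto a DFS stack at most twice (once from each endpoint), with constant work per pop. Assuming the input is given with an adjacency-list representation, the total work is $O(|V_i|+|E_i|)=O(|G|)$, which combined with the $O(|G|)$ cost of \cref{algorithm:local-components} gives the claimed $O(|G|)$ overall running time.
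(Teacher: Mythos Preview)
Your proof is correct and follows essentially the same approach as the paper: reduce to the locally-collapse-free case via \cref{lemma:algo-1-correct}, verify that the DFS tree edges form a monotonic forest of collapsible edges so that \cref{lemma:collapsible-forest} applies, and then use \cref{lemma:deleting-collapsin-does-not-change}(1) together with \cref{theorem:betti-tables-minimal-filtered-graph}(1) to conclude vertex-minimality and correctness of $\beta_0$. The one step you use implicitly but do not justify is that every non-minimal vertex of $(G_i,f)$ is eventually visited by some DFS (needed for ``$e_1$ is first discovered by DFS via a tree edge''); this follows by a short induction, since any non-minimal vertex $x$ has a neighbor $u$ with $f(u)<f(x)=f(e)$, and when $u$ is first visited the edge to $x$ is pushed on the stack.
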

\begin{proof}
    The guiding principle for collapsing collapsible edges followed by the algorithm is simple: first identify a maximal monotonic tree (Definition \ref{definition:monotonic-forest}) of collapsible edges; then collapse that subtree, which does not change the homology by Lemma \ref{lemma:collapsible-forest}; and then iterate this.
    In order to do this efficiently, note that any maximal monotonic tree of collapsible edges contains exactly one local connected component of the graph (that is, a connected component in the subgraph of vertices and edges having the same $f$-value) that is minimal (that is, that has no adjacent collapsible edge decreasing the grade).
    Since the input is assumed to not have any locally collapsible edges,
    minimal local components are the same as minimal vertices,
    which are easy to identify, since these are simply vertices with no adjacent collapsible edge decreasing the grade.

    The algorithm iterates over minimal vertices, and collapses the corresponding DFS tree of collapsible edges going up in the filtration.
    Since collapsing edges cannot produce more collapsible edges
    (Lemma \ref{lemma:deleting-collapsin-does-not-change}(1)), nor more minimal vertices, at the end of the main for-loop there are no collapsible edges left.

    The output Betti table $\beta_0$ is correct by \cref{theorem:betti-tables-minimal-filtered-graph}(1), and the time complexity of the depth-first search is linear.
\end{proof}

\begin{proposition}
    \label{proposition:edge-deletion-correctness}
    Let $(G,f)$ be a finite $\Pscr$-filtered graph.
    \cref{algorithm:minimal-presentation} outputs a minimal presentation of $H_0(G,f)$ in $O(\,|G|^2\,)$ time.
\end{proposition}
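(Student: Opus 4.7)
The strategy is to verify that after \cref{algorithm:vertex-reduction} runs and the subsequent edge-deletion for-loop completes, the filtered graph $(V', E'_{\mathrm{final}}, \partial', f)$ is a \emph{minimal} filtered graph whose $0$-dimensional persistent homology is isomorphic to $H_0(G,f)$. The output triple $(\beta_0,\beta_1,\mathcal M)$ encodes exactly the morphism $\partial_1^{(V',E'_{\mathrm{final}},\partial',f)}$ of \cref{definition:H0-H1}, which by \cref{theorem:betti-tables-minimal-filtered-graph}(2) is a minimal presentation of $H_0$ of the final graph; combined with the isomorphism on $H_0$ this is also a minimal presentation of $H_0(G,f)$.

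I would proceed in three steps. First, by \cref{proposition:vertex-reduction-correctness}, the call to \cref{algorithm:vertex-reduction} returns in $O(|G|)$ time a vertex-minimal filtered graph $(V',E',\partial',f)$ with $H_0(V',E',\partial',f) \cong H_0(G,f)$. Second, each iteration of the deletion for-loop either leaves the graph unchanged or removes a currently deletable edge: the BFS on $(V_e, E_e, \partial')$ tests exactly the condition $[e_0]=[e_1]\in\pi_0(V', E'\setminus\{e\},\partial',f)(f(e))$ of \cref{definition:minimal-filtered-graph} applied to the current $E'$, so removed edges are indeed deletable; by \cref{lemma:edge-reduction-H1} each such removal preserves $H_0$, and by \cref{lemma:deleting-collapsin-does-not-change}(2) it preserves vertex-minimality. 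Third, the final graph is edge-minimal by a monotonicity argument: if $e$ failed the reachability test when processed, then $[e_0]\neq[e_1]$ in $(V_e, E'_{\mathrm{then}}\setminus\{e\},\partial')$; since later iterations only shrink $E'$, connectivity can only weaken, so $[e_0]\neq[e_1]$ in the final subgraph at grade $f(e)$ as well, i.e., $e$ is not deletable in the final graph. Combining the three steps, $(V',E'_{\mathrm{final}},\partial',f)$ is minimal with the same $H_0$ as $(G,f)$, and \cref{theorem:betti-tables-minimal-filtered-graph}(2) closes correctness.

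For the complexity, \cref{algorithm:vertex-reduction} contributes $O(|G|)$. The main loop iterates over at most $|E'| = O(|G|)$ edges; each iteration builds $V_e$ and $E_e$ with a single pass over $V'$ and $E'$ and runs a BFS on them in $O(|V'|+|E'|) = O(|G|)$ time, and with $E'$ stored as a hash set the removal is $O(1)$. The final two loops that assemble $\beta_0$, $\beta_1$, and the sparse matrix $\mathcal M$ are $O(|G|)$. The dominant cost is the deletion loop, yielding the claimed $O(|G|^2)$ bound.

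The main delicacy is the monotonicity argument in the third step: since the algorithm processes edges in an arbitrary order (rather than, say, by grade), one might worry that a deletable edge could survive to the end, or conversely that a non-deletable edge could be removed. The fix is the observation that once $e$ has been tested and kept, further deletions can only reduce connectivity and hence keep $e$ non-deletable; this in turn relies on the reachability test being run against the \emph{current} $E'$ and not the original one.
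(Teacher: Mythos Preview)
Your proposal is correct and follows essentially the same approach as the paper's proof. Both argue that the final graph is minimal (vertex-minimal from \cref{algorithm:vertex-reduction}, edge-minimal because non-deletable edges remain non-deletable under further deletions) and then invoke \cref{theorem:betti-tables-minimal-filtered-graph}, with the per-iteration BFS giving the $O(|G|^2)$ bound. Your write-up is in fact more explicit than the paper's: you spell out $H_0$-preservation via \cref{lemma:edge-reduction-H1} and give the monotonicity argument directly, whereas the paper simply cites \cref{lemma:deleting-collapsin-does-not-change}(2) for the same conclusion and leaves the $H_0$-preservation implicit.
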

\begin{proof}
    We start with the claim about the time complexity.
    \cref{algorithm:vertex-reduction} is linear by Proposition \ref{proposition:vertex-reduction-correctness}, so it remains to prove that we can check the condition $[e_0] = [e_1] \in \pi_0(V_e,E_e,\partial')$ in $O(|G|)$ time.
    To do this, we run breadth-first search on $G'$, starting from $e_0$, and simply ignore every vertex and every edge whose $f$-value is not less than or equal to $f(e)$; and we also ignore $e$.
    If at some point we encounter $e_1$, then $[e_0] = [e_1] \in \pi_0(V_e,E_e,\partial')$, and otherwise $[e_0] \neq [e_1] \in \pi_0(V_e,E_e,\partial')$.

    We now prove the correctness of the algorithm.
    By \cref{theorem:betti-tables-minimal-filtered-graph}, we only need to check
    that, at the end of the for-loop in Line \ref{line:for-deletable}, the graph $(V',E',\partial',f)$ contains no deletable edges.
    Clearly, the for-loop visits all edges, so correctness follows from
    Lemma \ref{lemma:deleting-collapsin-does-not-change}(2).
\end{proof}

\quadratictimealgo*
\begin{proof}
    This follows directly from Propositions \ref{proposition:vertex-reduction-correctness} and \ref{proposition:edge-deletion-correctness}.
\end{proof}

\loglineartime*
\begin{proof}
    The output $\beta_0$ is correct by the correctness of \cref{algorithm:vertex-reduction}.
    To prove correctness of the rest of the algorithm, we observe the following:
    by iterating over edges one-by-one in a way that respects the lexicographic order, we are in fact iterating over edges according to a total order $\prec$ on $E'$ that refines the lexicographic order induced by $f$.
    We consider the following invariant:
    \begin{itemize}[leftmargin=20pt]
        \item[$(\ast)$] At the end of the iteration of the main for-loop in Line \ref{line:for-edges} corresponding to the edge $e \in E'$, the dynamic dendrogram $D$ is a dynamic dendrogram for the filtered graph $(V',\{e' \in E' : e' \preceq e\},\partial',g)$,
        %where $E'_{(x,y)} \coloneqq \{e \in E' : f(e) \leq (x,y) \text{ or } f_{\mathbf x}(e) < x\}$,
        where $g(v) \coloneqq -\infty$ for all $v \in V'$, and $g(e') \coloneqq f_{\mathbf y}(e')$.
    \end{itemize}
    It is clear that the algorithm maintains invariant $(\ast)$, since $\prec$ is, by definition, the total order in which we visit the edges.

    We now address the correctness of the outputs $\beta_0^1$, $\beta_1$, and $\Mcal$.
    An element of $\beta_0^1$ is added either in the if-clause in Line
    \ref{line:if-s-y} checking for $s \leq y$, or when an element of $\beta_2$ is added in Line \ref{line:betti-2-added}.
    The if-clause corresponds exactly to an edge that is connecting two vertices in the same connected component, that is, to a deletable edge.
    Thus, by Theorem \ref{theorem:second-betti-table} and Lemma \ref{lemma:edge-reduction-H1}, the output $\beta_0^1$ is correct as long as the output for $\beta_2$ is.
    It also follows that the output $\beta_1$ is correct, as deletable edges are being identified correctly, and that the output presentation matrix $\Mcal$ is correct, by \cref{theorem:betti-tables-minimal-filtered-graph}.

    To prove that the output $\beta_2$ is correct, we need to show that cycle-creating edges are being identified correctly.
    To see this, note that the check $s < \infty$ is being performed only after knowing that the edge $e$ is not deletable (and thus after knowing that the edge will be part of the minimal graph, by Lemma \ref{lemma:deleting-collapsin-does-not-change}), and that the if-clause $s < \infty$ is true exactly when $e_0$ and $e_1$ belong to the same connected component at some grade $(x',y')$ with $x' < f_{\mathbf x}(e)$, by the invariant $(\ast)$.
    This is precisely the case when $e$ is cycle-creating, by Definition \ref{definition:cycle-creating}, and we have $s = f_{\mathbf y}(w_e)$, where $w_e$ is a minimal witness of $e$, also by the invariant $(\ast)$.
    If follows that the output $\beta_2$ is correct.

    To conclude, we analyze the time complexity.
    Since the algorithm consists of for-loops visiting each vertex and each edge exactly once, we only need to make sure that the operations involving $D$ are in $\log(|G|)$ which is proven in \cite[Section~4]{georgiadis}.
\end{proof}

\subsection{Multi-critical filtrations}
\label{section:multi-critical}

%\todo{this section should be double-checked}
%If $\Pscr$ is a poset, let $\Ucal(\Pscr)$ be the poset of non-empty upsets of $\Pscr$.
%That is, the elements of $\Ucal(\Pscr)$ are the non-empty sets $U \subseteq \Pscr$ such that $x \in U$ and $x \leq y$ implies $y \in U$; with the poset relation in $\Ucal(\Pscr)$ being by reversed-inclusion.
If $X$ is a set, let $\mathsf{Parts}_f(X)$ denote the set of finite, non-empty subsets of $X$.
A \emph{finite multi-critical filtration} of a graph $G = (V,E)$ by a poset $\Pscr$ consists of functions $f_V : V \to \mathsf{Parts}_f(\Pscr)$  and $f_E : E \to \mathsf{Parts}_f(\Pscr)$ such that, for every $e \in E$ and $p \in f_E(e)$ there exists $q_0 \in f_V(e_0)$ and $q_1 \in f_V(e_1)$ with $q_0,q_1 \leq p$.

Let $\Pscr$ be a lattice, and let $(G,f)$ be a finite multi-critical filtration.
There exist algorithms for reducing the computation of the homology of multi-critical filtrations to that of critical filtrations \cite{chacholski-scolamiero-vaccarino}, but these cannot, a priori, be directly combined with our algorithms.
We thus describe a custom procedure for constructing a ($1$-critical) filtered graph with the same $0$-dimensional persistent homology as $(G,f)$.

In words, we proceed as follows:
for each vertex in the original filtration, we add a new copy of the vertex at all the grades at which the vertex is born and an edge between copies of the same vertex at all the minimal joins of the birth grades;
and for each edge in the original filtration, we add a new copy of the edge (connecting any one copy of its endpoints) at all the grades at which the edge is born.
More precisely, we proceed as follows.
As set of vertices, we take
\[
    V' \coloneqq \{(v, p) : v \in V, p \in f_V(v)\}.
\]
In order to identify all the different copies of a vertex $v$, we first define
\[
    E''_v \coloneqq \{(v, p,p') : p \neq p' \in f_V(v)\}\; \text{ with }\; \partial'(v,p,p') = ((v,p), (v,p')).
\]
Consider the function $\phi : E''_v \to \Pscr$ given by $\phi(v,p,p') = p \vee p'$.
We then let $E'_v \subseteq E''_v$ be such that the cardinality of $E'_v$ is minimal and $\phi(E'_v)$ contains all the minimal elements of~$\phi(E''_v)$.
The idea is that, in order to identify the different copies of a vertex $v$, it is sufficient to use the edges in $E'_v$.
%\; \text{ and }\; E'_v \subseteq E''_v \text{ }
To account for the edges of $G$, let
\[
    E'_{edg} \coloneqq \{(e,p) : e \in E, p \in f_E(e)\} \; \text{ and } \; \partial'(e,p) = ((e_0,q_0), (e_1,q_1)),
\]
where $q_0 \in f_V(e_0)$ and $q_1 \in f_V(e_1)$ satisfy $q_0,q_1 \leq p$.
As set of edges, we take $E' \coloneqq \left(\bigcup_{v \in V} E'_v\right) \cup E'_{edg}$, and define the filtration as follows
\[
    f_{V'}(v,p) = p, \;\; f_{E'}(v,p,p') \coloneqq p \vee p'\,\,,\, \text{ and }\; f_{E'}(e,p) \coloneqq p.
\]
We leave it to the reader to check that the $0$-dimensional persistent homology of the $1$-critical filtered graph $(V', E', \partial', f_{V'}, f_{E'})$ is isomorphic to that of $(G,f)$.
One way to see this, is to show that $\pi_0(V', E', \partial', f_{V'}, f_{E'}) \cong \pi_0(G,f)$ using the continuous function from the geometric realization of $(V', E', \partial')$ to that of $(V,E)$ defined by $|(v,p)| \mapsto |v|$, $|(v,p,p')| \mapsto |v|$, $|(e,p)| \mapsto |e|$.

\section{Dependence on the field}
\label{section:field-dependence}

\subsection{Field-independence}

The following is a standard characterization of minimal resolutions.

\begin{lemma}
    \label{lemma:characterization-minimal-resolution}
    Let $M : \Pscr \to \vect$ be a $\Pscr$-persistence module, and let $C_\bullet \to M$ be a finite resolution.
    The resolution $C_\bullet$ is minimal if and only if, for every $a \in \Pscr$ and every $j \in \Nbb$, we have that, if $\Psf_a$ is a direct summand of both $C_{j+1}$ and $C_{j}$, then the differential $C_{j+1} \to C_{j}$ composed with any isomorphisms $C_{j+1} \cong C_{j+1}' \oplus \Psf_a$ and $C_j \cong C_j' \oplus \Psf_a$ restricts to the zero morphism $\Psf_a \to \Psf_a$.
\end{lemma}
\begin{proof}
    If the restriction to $\Psf_a \to \Psf_a$ is not zero, then, using the lifting property of $\Psf_a$ (Lemma \ref{lemma:lifting}), the resolution $C_\bullet$ can be decomposed as a direct sum of a resolution $C'_\bullet$ and a \emph{trivial resolution}, i.e., one which is zero everywhere except on degrees $j$ and $j+1$, where it is $\Psf_a$, with the differential being the identity.
    In that case, the resolution would not be minimal.

    If the resolution is not minimal, then it decomposes as a direct sum of a resolution $C'_\bullet$ and a trivial resolution \cite[Theorem~11.26]{lesnick-course}, which then implies that there exist $a \in \Pscr$, $j \in \Nbb$, and isomorphisms $C_{j+1} \cong C_{j+1}' \oplus \Psf_a$ and $C_j \cong C_j' \oplus \Psf_a$ such that the restriction $\Psf_a \to \Psf_a$ is non-zero.
\end{proof}

\begin{proposition}
    \label{proposition:same-characteristic-same-betti}
    Let $\Pscr$ be a poset and let $(G,f)$ be a finite $\Pscr$-filtered graph.
    If $\kbb \subseteq \kbb'$ is a field extension, and $H_0(G,f;\kbb)$ is finitely resolvable, then $H_0(G,f;\kbb')$ is finitely resolvable and $\beta_j(H_0(G,f;\kbb)) = \beta_j(H_0(G,f;\kbb'))$ for every $j \in \Nbb$.
\end{proposition}
\begin{proof}
    By the universal coefficient theorem, we have $H_0(G,f;\kbb') = H_0(G,f;\kbb) \otimes_{\kbb} \kbb'$, and since tensoring by a field extension is exact, any minimal resolution of $H_0(G,f;\kbb)$ tensored with $\kbb'$ gives a resolution of $H_0(G,f;\kbb')$.
    To conclude, we must show that tensoring by a field extension preserves minimality of resolutions, which follows directly from Lemma \ref{lemma:characterization-minimal-resolution}.
\end{proof}

It follows from Proposition~\ref{proposition:same-characteristic-same-betti} that the Betti tables of zero-dimensional persistent homology can only depend on the characteristic of the field.

%The next result says that the isomorphism type can also only depend of the characteristic of the field.
%
%\begin{proposition}
%    \label{proposition:same-characteristic-same-iso-classes}
%    Let $\Pscr$ be a poset and let $(G,f)$ and $(G',f')$ be finite $\Pscr$-filtered graph.
%    Let $\kbb \subseteq \kbb'$ be a field extension.
%    We have that $H_0(G,f;\kbb) \cong H_0(G',f';\kbb)$ if and only if $H_0(G,f;\kbb') \cong H_0(G',f';\kbb')$.
%\end{proposition}
%\begin{proof}
%    The left-to-right implication follows directly from the universal coefficient theorem.
%\end{proof}

\mainresultindependencek*
\begin{proof}
    The statement holds for minimal filtered graphs, by \cref{theorem:betti-tables-minimal-filtered-graph}.
    If the graph is not minimal, one can iteratively collapse edges and then delete edges to make the filtered graph minimal.
    The notions of collapsible and deletable edge, and of simple collapse and simple deletion, do not depend on the field, so the result for non-minimal graphs follows from Lemmas \ref{lemma:vertex-reduction-same-homology} and \ref{lemma:edge-reduction-H1}, which imply that the effect on Betti tables of collapses and deletions is also independent of the field.
\end{proof}

\subsection{Field-dependence of second Betti table}

The key idea is summarized in the following construction and lemma.

Consider the finite poset $\Pscr$ with underlying set $\{a,b,c,d\}$ with $a$, $b$, and $c$ incomparable, and all of them smaller than $d$.
Consider the following diagram $M : \Pscr \to \Ab$ of Abelian groups indexed by $\Pscr$:

\[
    \begin{tikzpicture}
        \matrix (m) [matrix of math nodes,row sep=3em,column sep=3em,minimum width=2em,nodes={text height=1.75ex,text depth=0.25ex}]
        {
            \Zbb & \Zbb^3 & \Zbb \\
            \, & \Zbb & \, \\};
        \path[line width=0.75pt, -{>[width=8pt]}]

        (m-1-1) edge [above] node
        {\scalebox{0.6}{
            $\begin{bmatrix}
            0 \\ 1 \\ 1
        \end{bmatrix}$}}
        (m-1-2)

        (m-2-2) edge [left] node
        {\scalebox{0.6}{
            $\begin{bmatrix}
            1 \\ 0 \\ 1
        \end{bmatrix}$}} (m-1-2)

        (m-1-3) edge [above] node
        {\scalebox{0.6}{
            $\begin{bmatrix}
            1 \\ 1 \\ 0
        \end{bmatrix}$}} (m-1-2)
        ;
    \end{tikzpicture}
\]

By tensoring the Abelian groups in the above diagram with a field $\kbb$ we obtain a $\Pscr$-persistence module, which we denote by $M_\kbb : \Pscr \to \vect_\kbb$.

\begin{lemma}
    \label{lemma:simple-example-field-dependence}
    If the characteristic of $\kbb$ is not $2$, then a minimal projective presentation of $M_\kbb$ is given by $0 \to \Psf_a \oplus \Psf_b \oplus \Psf_c$.
    If the characteristic of $\kbb$ is $2$, then a minimal projective presentation of $M_\kbb$ is given by $(1, 1, 1, 0)^T : \Psf_d \to \Psf_a \oplus \Psf_b \oplus \Psf_c \oplus \Psf_d$.
    In particular, the zeroth and first Betti tables of $M_\kbb$ depend on the characteristic of $\kbb$.
\end{lemma}
\begin{proof}
    Regardless of the field $\kbb$, the codomain of a projective presentation of $M_\kbb$ must have $\Psf_a$, $\Psf_b$, and $\Psf_c$ as direct summands, since $M_\kbb(a) = M_\kbb(b) = M_\kbb(c) = \kbb \neq 0$, and $a$, $b$, and $c$ are all minimal elements of the poset.
    If the characteristic of $\kbb$ is different from $2$, then $M_\kbb(d)$ is generated by the images of the structure morphisms corresponding to $a \leq d$, $b \leq d$, and $c \leq d$, so $M_\kbb$ is projective and the minimal projective presentation is as in the statement.
    
    If the characteristic of $\kbb$ is $2$, the images of the structure morphisms corresponding to $a \leq d$, $b \leq d$, and $c \leq d$ only generate a $2$-dimensional subspace of $M_\kbb(d)$, so the codomain of any projective presentation must also have $\Psf_d$ as a direct summand, and it is straightforward to check that the minimal projective presentation is as in the statement.
\end{proof}

In order to use the idea above to produce an $\Rbb^4$-filtered graph with the property that the second Betti of its zero-dimensional persistent homology depends on the field, we first define a diagram of Abelian groups $N : \Rbb^4 \to \Ab$ as the left Kan-extension of $M$ along the inclusion $\iota : \Pscr \to \Rbb^4$ mapping $a\mapsto (0,1,1,1)$, $b \mapsto (1,0,1,1)$, $c \mapsto (1,1,0,1)$, and $d \mapsto (1,1,1,1)$.
As before, we define $N_\kbb : \Rbb^4 \to \vect_\kbb$ by tensoring $N$ with $\kbb$.

The same proof as that of \cref{lemma:simple-example-field-dependence} shows the following.
\begin{lemma}
    \label{lemma:R4-example-field-dependence}
    The zeroth and first Betti tables of $N_\kbb$ depend on the characteristic of $\kbb$.\qed
\end{lemma}

We now construct an $\Rbb^4$-filtered graph whose second Betti table depends on
the characteristic of the field; see \cref{fig:graph-field-characteristic}.
Let $G = (V,E,\partial)$ with $V = \{x,y\}$, $E = \{r,s,t,u\}$, and $\partial : E \to V \times V$ constantly $(x,y)$.
Define $f : G \to \Rbb^4$ by
\begin{align*}
    x &\mapsto (0,0,0,0),\;
    y \mapsto (0,0,0,0),\\
    r \mapsto (0,0,1,1),\;
    s &\mapsto (0,1,0,1),\;
    t \mapsto (1,0,0,1),\;
    u \mapsto (1,1,1,0).
\end{align*}

\begin{figure}
    \begin{center}
    \includegraphics{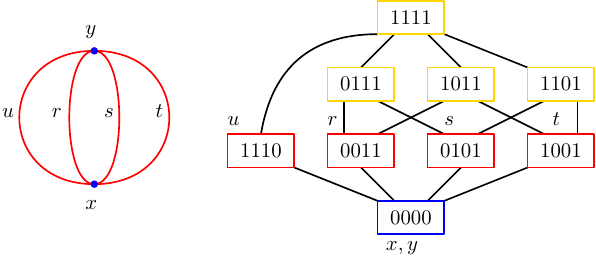}
    \end{center}
    \caption{\emph{Left.} A graph with labeled vertices and edges.
    \emph{Right.} An $\Rbb^4$-filtration of the graph on the left depicted as the Hasse diagram of the subposet of $\Rbb^4$ spanned by grades at which a vertex or an edge appears.}
    \label{fig:graph-field-characteristic}
\end{figure}

\begin{lemma}
    The second and third Betti tables of $H_0(G,f; \kbb)$ depend on the field $\kbb$.
\end{lemma}
\begin{proof}
    By inspection, the graph $(G,f)$ is a minimal graph.
    A straightforward computation shows that the kernel of the morphism $\partial_1^{(G,f)}$ of \cref{definition:H0-H1} is isomorphic to $N_\kbb$; indeed, taking coefficients over the integers, the kernel of $\partial_1^{(G,f)}$ is isomorphic to $N$.
    Thus, by \cref{lemma:R4-example-field-dependence}, the indecomposable summands in the second and third degree of the minimal resolution of $H_0(G,f; \kbb)$ depend on the field $\kbb$, as required.
\end{proof}

\subsection{Field-dependence of higher Betti tables}
Let $\Pscr$ be a finite poset, and let $p \in \Pscr$.
Let $\Ssf_p : \Pscr \to \vect_\kbb$ be the $\Pscr$-persistence module with $\Ssf_p(p) = \kbb$ and $\Ssf_p(r) = 0$ for every $r \neq p \in \Pscr$, with the only possible structure morphisms.

Now, let $n(p) = \min\left( \{p\}^\uparrow \setminus \{p\} \right)$, and define the $\Pscr$-filtered graph $\Gsf_p \coloneqq (V,E,\partial,f)$ by:
\begin{itemize}
    \item $V = \{a,b\}$;
    \item $E = n(p)$;
    \item $\partial : E \to V\times V$ maps every $e \in E$ to $(a,b)$;
    \item $f : V \to \Pscr$ maps $a$ and $b$ to $p$;
    \item $f : E \to \Pscr$ maps $e \in E = n(p) \subseteq \Pscr$ to $e \in \Pscr$.
\end{itemize}

\begin{lemma}
    \label{lemma:hitting-simples}
    Let $\Pscr$ be a finite poset, and let $p \in \Pscr$.
    We have $H_0(\Gsf_p; \kbb) \cong \Ssf_p \oplus \Psf_p$.
\end{lemma}
\begin{proof}
    Let $S \subseteq H_0(\Gsf_p; \kbb) : \Pscr \to \vect_\kbb$ be the submodule generated by $[a] - [b] \in H_0(\Gsf_p;\kbb)(p)$, and let $P \subseteq H_0(\Gsf_p; \kbb) : \Pscr \to \vect_\kbb$ be the submodule generated by $[a] \in H_0(\Gsf_p;\kbb)(p)$.
    It is clear that $S \oplus P = H_0(\Gsf_p; \kbb)$, and that $S \cong \Ssf_p$ and $P \cong \Psf_p$.
\end{proof}

\cref{lemma:hitting-simples} is also related to \cite[Theorem~5.9]{brodzki-burfitt-pirashvili}, which exhibits a large class of persistence modules that can be obtained as a direct summand of the zero dimensional persistent homology of a filtered graph.

Recall that, given persistent modules $M$ and $N$, a projective resolution $C_\bullet \to M$, and $i \in \Nbb$, the vector space $\Ext^i(M,N)$ is defined as the $i$th homology of the chain complex $\Hom(C_\bullet, N)$.
It is a fact that, with this definition, the isomorphism type of $\Ext^i(M,N)$ is independent of the chosen projective resolution.
For more details about this construction, we refer the reader to standard textbooks in homological algebra, such as \cite{rotman}.

For our purposes, the following well-known characterization of Betti tables is sufficient;
for a proof, see, e.g., \cite[Lemma~5.20]{botnan-oppermann-oudot-scoccola}.

\begin{lemma}
    \label{lemma:betti-is-ext}
    Let $\Pscr$ be a finite poset, let $M : \Pscr \to \vect_\kbb$, and let $i \in \Nbb$.
    Then, $\beta_i^M(p) = \dim_\kbb \Ext^i_{\vect_\kbb^\Pscr}(M, \Ssf_p)$.
    \qed
\end{lemma}

The following result is due to Igusa and Zacharia \cite{igusa-zacharia}; it is a variation of a well-known result of Gerstenhaber and Shack, which relates the Hochschild cohomology of a poset algebra to the simplicial cohomology of the geometric realization of the poset \cite{gerstenhaber-shack}.
In order to state the result, we recall a few basic definitions from algebraic topology; for more details, we refer the reader to standard textbooks, such as \cite{dieck}.
The geometric realization of a poset $\Pscr$, denoted $|\Pscr|$, is the geometric realization of the simplicial complex whose simplices are the chains in $\Pscr$ (see \cite[Example~8.1.2]{dieck}).
The $i$th reduced cohomology of a topological space $X$ with coefficients in the field $\kbb$, denoted $\overline{H}^i(X;\kbb)$, is a $\kbb$-vector space, which has the property that, if $X$ and $Y$ are homeomorphic, then $\overline{H}^i(X;\kbb) \cong \overline{H}^i(Y;\kbb)$.

\begin{theorem}[{\cite[Theorem~1.2]{igusa-zacharia}}]
    \label{theorem:igusa-zacharia}
    Let $\Pscr$ be a finite poset, and let $x,y \in \Pscr$ such that $I(x,y) \coloneqq \{z \in \Pscr : y < z < x\}$ is non-empty.
    For every $i \geq 2 \in\Nbb$, we have
    \[
        \Ext^i_{\vect_\kbb^\Pscr}\big(\,\Ssf_{\mathbf x}\,,\,\Ssf_{\mathbf y}\,\big) \; \cong \; \overline{H}^{i-2}\big(\,|I(x,y)|\,;\, \kbb \,\big).
    \]
    %\luis{If $i=0$, then $Ext$ is $hom$ and it is just zero because we assume $x \neq y$. If $i=1$, then}
\end{theorem}

\fielddependence*
\begin{proof}
%    Let $\Pscr$ be a finite poset.
%    For the first statement, take $G = \Gsf_p$ and use \cref{lemma:hitting-simples}.
%
    Let $j \geq 3$.
    Let $\Fbb_2$ be the field of two elements, and recall that, for $n \geq 2$, the real projective space $\Rbb P^n$ is a topological space that has the property that $\overline{H}^{n-1}(\Rbb P^n; \Fbb_2) \cong \Fbb_2$, and $\overline{H}^{n-1}(\Rbb P^n; \kbb) = 0$ if the characteristic of $\kbb$ is not $2$ (see, e.g., \cite[Example~12.2.3]{dieck}).
    Let $\Qscr$ be a finite poset such that $|\Qscr|$ is homeomorphic to the real projective space $\Rbb P^{j-1}$.
    To see that such a poset $\Qscr$ exists, first use that, since $\Rbb P^{j-1}$ is a compact manifold, there exists a finite simplicial complex $K$ whose geometric realization is homeomorphic to $\Rbb P^{j-1}$ \cite[Theorem~7]{whitehead}; then, define $\Qscr$ to be the face poset of $K$, and use the standard fact that $|K|$ is homeomorphic to the geometric realization of $\Qscr$ (see, e.g., \cite[Section~9.3]{bjorner}).
    Now, let $\Pscr$ be given by adjoining to $\Pscr$ a minimum element $q$ and a maximum element $p$.
    In particular, the dimension of $\overline{H}^{j-2}(|I(p,q)| ; \kbb)$ depends on the field $\kbb$.
    We now compute as follows
    \begin{align*}
        \beta_j^{H_0(\Gsf_p; \kbb)}(q) &= \beta_j^{\Ssf_p \oplus \Psf_p}(q)
            = \beta_j^{\Ssf_p}(q) + \beta_j^{\Psf_p}(q)
            = \beta_j^{\Ssf_p}(q)\\
            &= \dim_\kbb \Ext^j_{\vect_\kbb^\Pscr}(\Ssf_p, \Ssf_q)\\
            &= \dim_\kbb \overline{H}^{j-2}(|I(p,q)|; \kbb )\\
            &= \dim_\kbb \overline{H}^{j-2}(\Rbb P^{j-1}; \kbb ),
    \end{align*}
    where in the first line we use the first statement and the easy-to-check properties that Betti tables are additive with respect to direct sums, and that higher Betti tables of a projective module are zero.
    For the fourth and fifth equality we use Lemma \ref{lemma:betti-is-ext} and Theorem \ref{theorem:igusa-zacharia}, and for the last one we use the fact that $|\Qscr|$ is homeomorphic to $\Rbb P^{j-1}$.
    It follows that the $j$th Betti table of $H_0(\Gsf_p; \kbb)$ depends on the field $\kbb$, as required.
\end{proof}

\section{Experiments}
\label{section:experiments}
We have implemented our main algorithms in a C++ package called \emph{graph-mph}~\cite{graph-mph}. We also provide Python bindings for user-friendly visualization of Betti numbers and machine learning applications. The package currently accepts either a point cloud or a bifiltration, and outputs a minimal presentation and absolute Betti numbers. 

% Recall that a bifiltration $F$ is a collection of finite simplicial complexes indexed by $\mathbb R^2$ such that $F_a\subset F_b$ whenever $a\leq b$ (the canonical partial order on $\mathbb R^2$).
In the computational setting, we work with a finite (potentially multi-critical) bifiltration $F$.
Such a bifiltration can be specified by a single simplicial complex $K$ together with a collection of incomparable grades $\mathrm{births}(\sigma)\subset\mathbb R^2$ for each simplex $\sigma$, specifying the bigrades where it appears.
We do not requite the input filtration to be 1-critical ($\mathrm{births}(\sigma)$ does not have to contain just one element) instead we implement the algorithm in \cref{section:multi-critical} that reduces multi-critical filtrations to satisfy this requirement.

The bifiltrations in our experiments are the degree-Rips and function-Rips bifiltrations, which we now recall. 

\paragraph{Function-Rips bifiltration} For a finite metric space $P$ with metric $d$ and $r\geq 0$, let $N(P)_r$ denote the $r$-neighborhood graph of $P$: the vertex set of $N(P)_r$ is $P$, and there is an edge $\{i,j\}\in N(P)_r$ if and only if $d(i,j)\leq r$. If $r<0$, we define $N(P):=\emptyset$. Given any function $\gamma:P\to \mathbb R$, we define the function-Rips bifiltration by $FR(\gamma)_{a,b}:=N(\gamma^{-1}(-\infty,a])_b$. Note that $FR(\gamma)$ is always 1-critical. 

Apart from taking a user-defined function $\gamma$, our package also supports constructing a built-in ball density function directly from a point cloud. It is defined by counting the number of points in $P$ within distance $r$ of $x$, that is
 $\gamma(x)=C\cdot | \{ p \in P \mid d(x,p) \leq r \} |$,
where $r>0$ is a fixed parameter (by default we take the 20-th percentile of all pairwise distances in the point cloud), and $C$ is a normalization constant chosen so that $\sum_{x\in P} \gamma(x)=1$.

\paragraph{Degree-Rips bifiltration} For $r,d\in \mathbb R$, let $P_{d,r}\subset P$ be the set of vertices in $N(P)_r$ of degree at least $d$. We define the degree-Rips bifiltration $DR(P)$ by taking $DR(P)_{d,r}:= N(P_{d,r})_r$, the $r$-neighborhood graph of $P_{d,r}$. Note that this is in fact a bifiltration indexed by $\mathbb R^{\mathrm{op}}\times \mathbb R$, where $\mathbb R^{\mathrm{op}}$ denotes the opposite poset of $\mathbb R$; that is, $DR(P)_{a,b}\subset DR(P)_{a',b'}$ whenever $a\geq a'$ and $b\leq b'$. If $P$ has more than one point, then $DR(P)$ is multi-critical. In order to use our algorithm, we convert it into a 1-critical bifiltration as described in \cref{section:multi-critical} and negate its first coordinate so that it is indexed by $\mathbb R^2$.

\subsection{Comparison with RIVET and mpfree}
We compare our \emph{graph-mph} with two other packages: RIVET \cite{rivet} and mpfree \cite{kerber-rolle}. Note that RIVET and graph-mph both support point cloud and bifiltration input; they output the minimal presentation and Betti numbers. However, mpfree only accepts bifiltration input (which forces the file size to be considerably larger than a point cloud) and outputs the minimal presentation. In order to make a fair comparison, we have removed the I/O time in the following discussion. The datasets we use are:

\begin{itemize}
    \item 2D Annulus with noise. We generate it by adding Gaussian noise to the points around a circle. 
    \item 3D Blob clusters. This dataset contains three clusters and is generated by scikit-learn \cite{scikit-learn}. As shown in \cref{fig:3D-blob-clusters}, the purple and yellow clusters are closer to each other than the green one.
    \item NYC Taxi pickups dataset from January 2025. This dataset is obtained from the official website and is publicly available\footnote{\url{https://www.nyc.gov/site/tlc/about/tlc-trip-record-data.page}}. The complete dataset is a table with 3.4 million rows and 20 columns, where each row represents one taxi ride including pickup and dropoff times and locations, number of passengers, tip and fare amounts, etc. For our experiment, we use only the pickup locations and merge them by zones. This results in a point cloud with a function by assigning to the centroid of each zone the number of pickups in that zone, as shown in \cref{fig:NYC-taxi-pickups}.
    % Note that Newark airport has a relatively small number of pickups compared to the other two airports. This is because this dataset only records taxi rides within New York City, and public transportation between Newark airport and New York City is a more popular or affordable choice for passengers.
\end{itemize}

\begin{figure}
    \centering
    \includegraphics[width=0.5\linewidth]{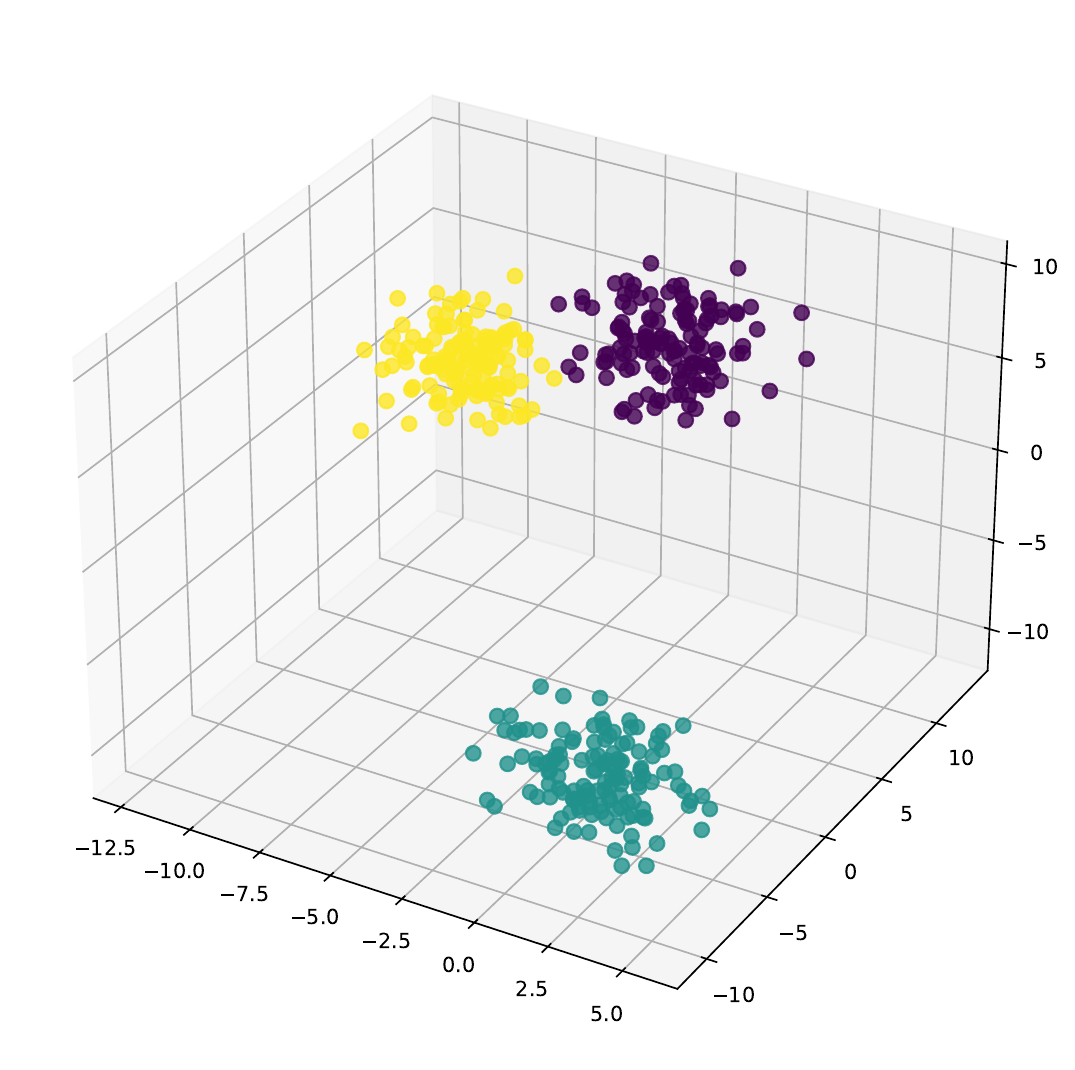}
    \caption{%
    The 3D Blob clusters of size 400. The purple and yellow clusters are closer to each other than the green one.
    }
    \label{fig:3D-blob-clusters}
\end{figure}

\begin{table}[ht]
    \centering
    \renewcommand{\arraystretch}{1.2}
    \begin{tabular}{c|cccc|cccc}
    \hline
    & \multicolumn{4}{c|}{ball-density-Rips} & \multicolumn{4}{c}{deg-Rips} \\ \hline
    size & our-float & our-rational & RIVET & mpfree & our-float & our-rational & RIVET & mpfree \\ \hline
    200  & 0.038 & 0.05 & 0.079 & \textbf{0.026} & \textbf{0.48} & 0.59 & 2.94 & 1.97 \\ \hline
    400  & \textbf{0.10} & 0.17 & 0.46 & 0.12 & \textbf{4.59} & 5.26 & 38.56 & 17.07 \\ \hline
    600  & \textbf{0.24} & 0.45 & 1.59 & 0.33 & \textbf{17.36} & 20.44 & 184.30 & 60.97 \\ \hline
    800  & \textbf{0.46} & 0.86 & 4.33 & 0.86 & \textbf{51.53} & 54.34 & 625.99 & -- \\ \hline
    1000 & \textbf{0.77} & 1.49 & 8.20 & 1.51 & \textbf{122.03} & 124.12 & -- & -- \\ \hline
    \end{tabular}
    \vspace{0.3cm}
    \caption{Time comparison on 2D noisy annulus of different sizes. Our algorithm is implemented in both float and rational data types (same as RIVET). The dash `--' means that the program terminated due to memory overflow.
    }
    \label{table:time-2D-noisy-annulus}
\end{table}

In \cref{table:time-2D-noisy-annulus}, we compare the time taken by our algorithm on 2D noisy annulus of different sizes. We use the ball-density-Rips and degree-Rips bifiltrations. In order to match the precision of RIVET, we implement our algorithm in the rational data type in addition to the standard float data type. Because converting the point cloud from floating point to rational data type creates additional overhead, float results are faster than rational. The results show that our algorithm is faster than RIVET and mpfree. It is worth noting that our algorithm is memory-efficient and can handle large datasets.
The dash `--' indicates that the program terminated due to memory overflow, which happens for both RIVET and mpfree.

\begin{table}[ht]
    \centering
    \renewcommand{\arraystretch}{1.2}
    \begin{tabular}{c|c|c|c|c|c}
    \hline
    Size & Grade Table size (in millions) & Active Grade Size & our & rivet & mpfree \\ \hline
    500 & 62m  & 0.49m (0.79\%) & \textbf{9.41} & 90.21 & 35.09 \\ \hline
    600 & 108m & 0.71m (0.66\%) & \textbf{17.05} & 185.33 & 61.57 \\ \hline
    700 & 171m & 0.97m (0.56\%) & \textbf{29.21} & 348.43 & 101.75 \\ \hline
    800 & 256m & 1.27m (0.49\%) & \textbf{46.54} & 626.19 & -- \\ \hline
    900 & 364m & 1.61m (0.44\%) & \textbf{70.18} & 1159.92 & -- \\ \hline
    1000 & 499m & 1.99m (0.39\%) & \textbf{102.67} & -- & -- \\ \hline
    \end{tabular}
    \vspace{0.3cm}
    \caption{Time spent on computing minimal-presentation of degree-Rips bifiltrations on 3D Blob of 3 clusters of different sizes.}
    \label{table:time-3D-blob-clusters}
\end{table}

We also see in \cref{table:time-3D-blob-clusters} that our algorithm is faster than RIVET and mpfree on 3D Blob clusters of different sizes. The two additional columns are the grade table size and the active grade size. Grade table size represents the product of the number of $x$-coordinates and $y$-coordinates in the grade table. Active grade size represents the percentage of the grade table that is active, i.e., the grade points that are held by at least one vertex or edge. It shows the sparsity of the grade table as the size of the dataset increases.
    
\subsection{Visualization}

Based on Betti numbers, we can also visualize the Hilbert function, $\Hil^M:\mathbb R^2\to \mathbb N$ which sends $a$ to $\dim M_a$, the number of connected components of the bifiltered graph at $a$. It can be computed from the Betti numbers~\cite{peeva} in linear time via 2D prefix sum:
$$
\Hil^M(a) = \sum_{b\leq a} \beta_0(b) - \beta_1(b) + \beta_2(b)
$$

For example, for the bifiltered graph in \cref{fig:graded-graph}, we show the three persistence modules in a minimal resolution of $H_0(G,f)$ in \cref{fig:graded-graph-betti-numbers} and its Hilbert function in \cref{fig:Hilbert_matrix_with_betti_numbers_figure1}.

\begin{figure}
    \centering
    \includegraphics[width=0.32\linewidth]{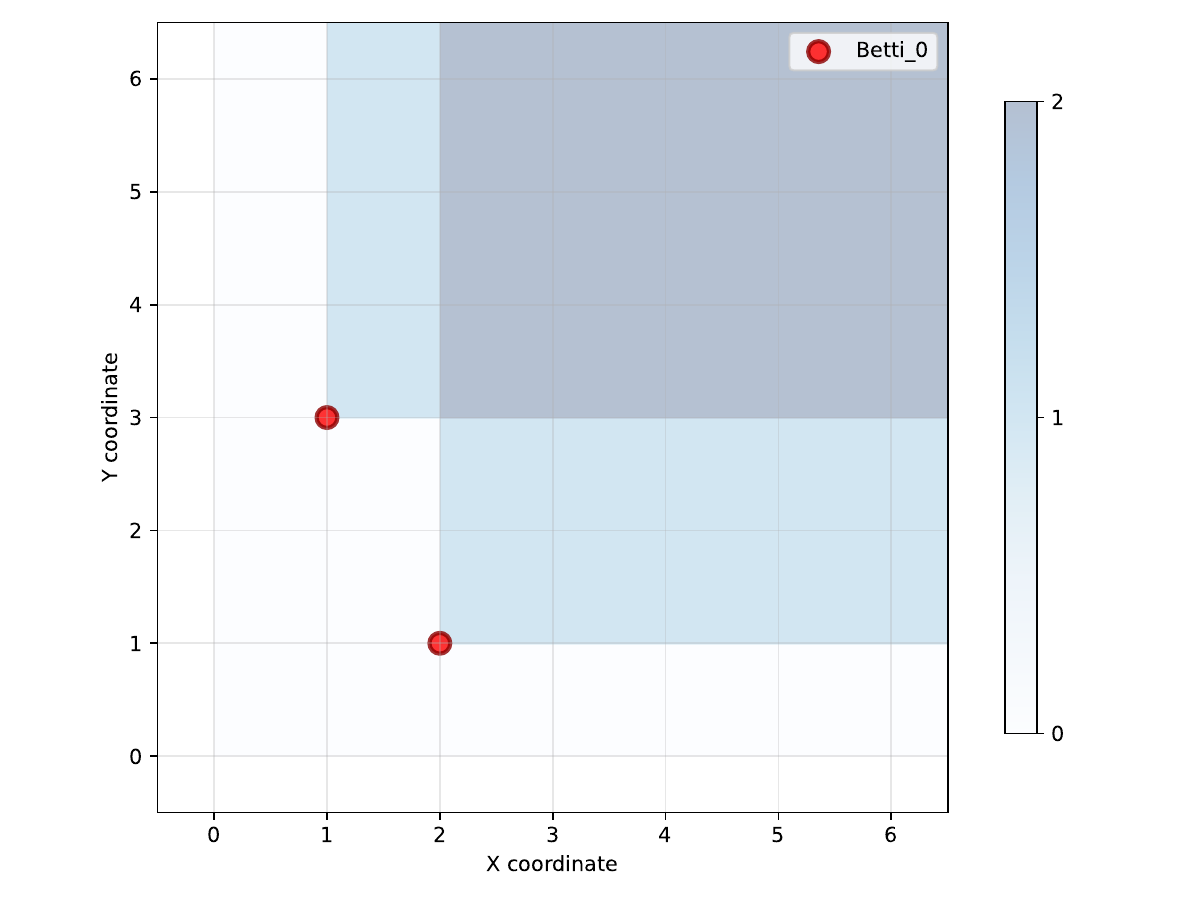}\hfill
    \includegraphics[width=0.32\linewidth]{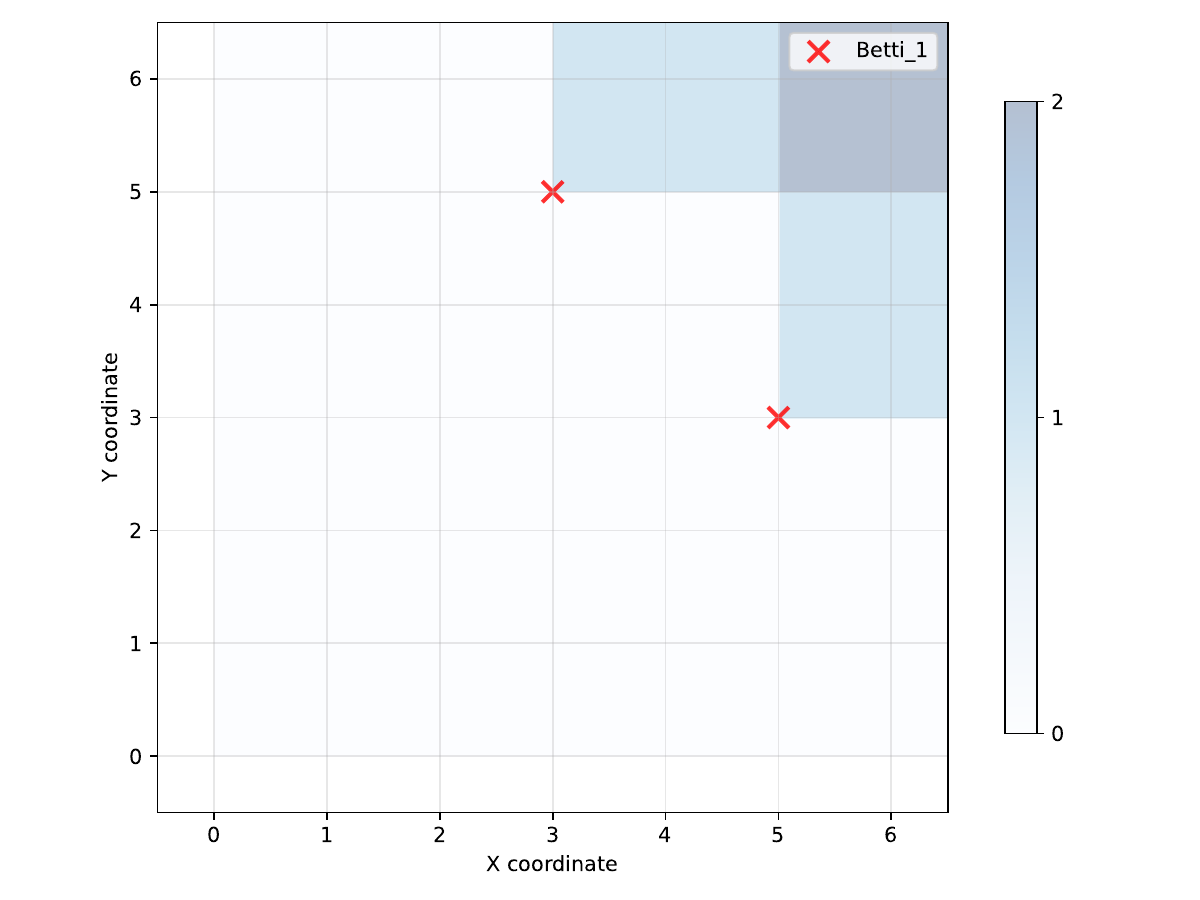}\hfill
    \includegraphics[width=0.32\linewidth]{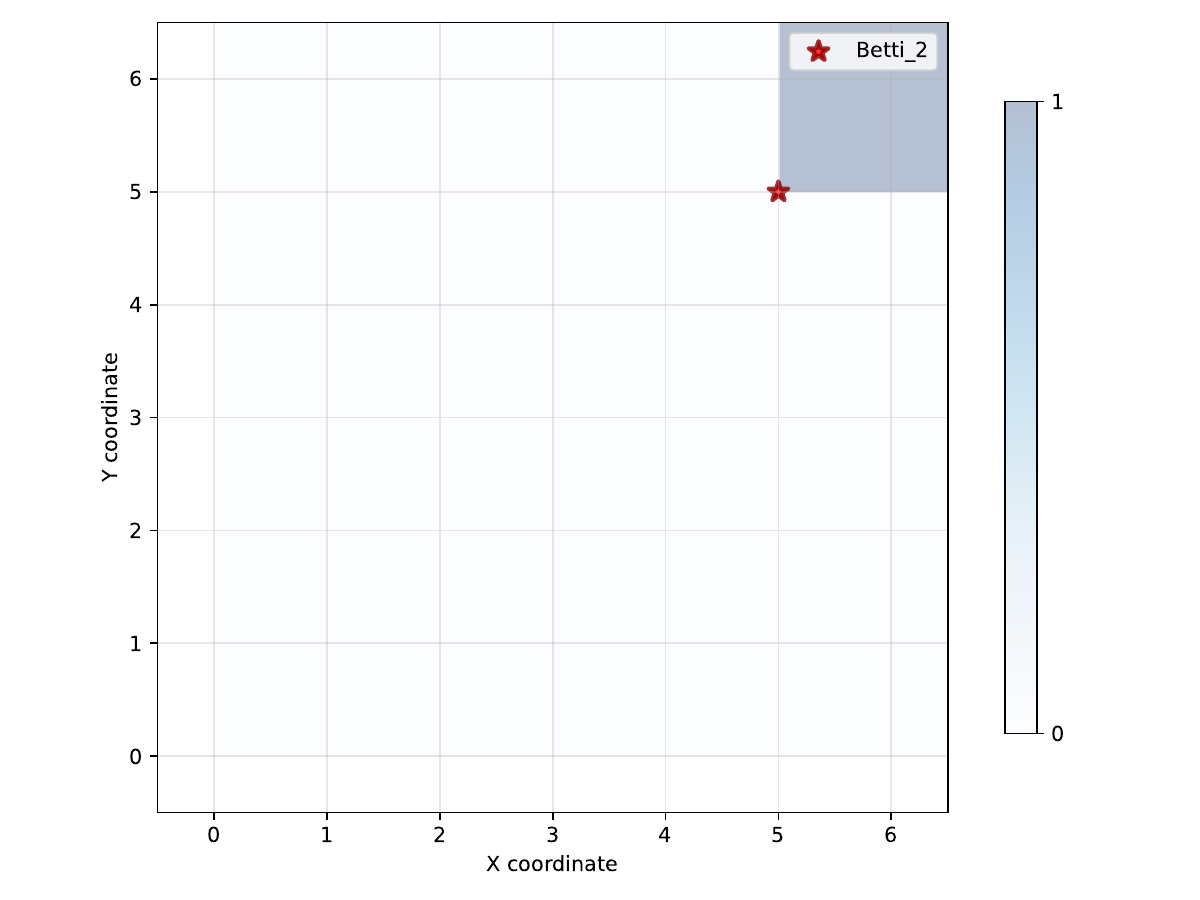}
    \caption{%
    The three persistence modules in a minimal resolution of $H_0(G,f)$ for the bifiltered graph in \cref{fig:graded-graph}.
    }
    \label{fig:graded-graph-betti-numbers}
\end{figure}

\begin{figure}
    \centering
    \includegraphics[width=0.5\linewidth]{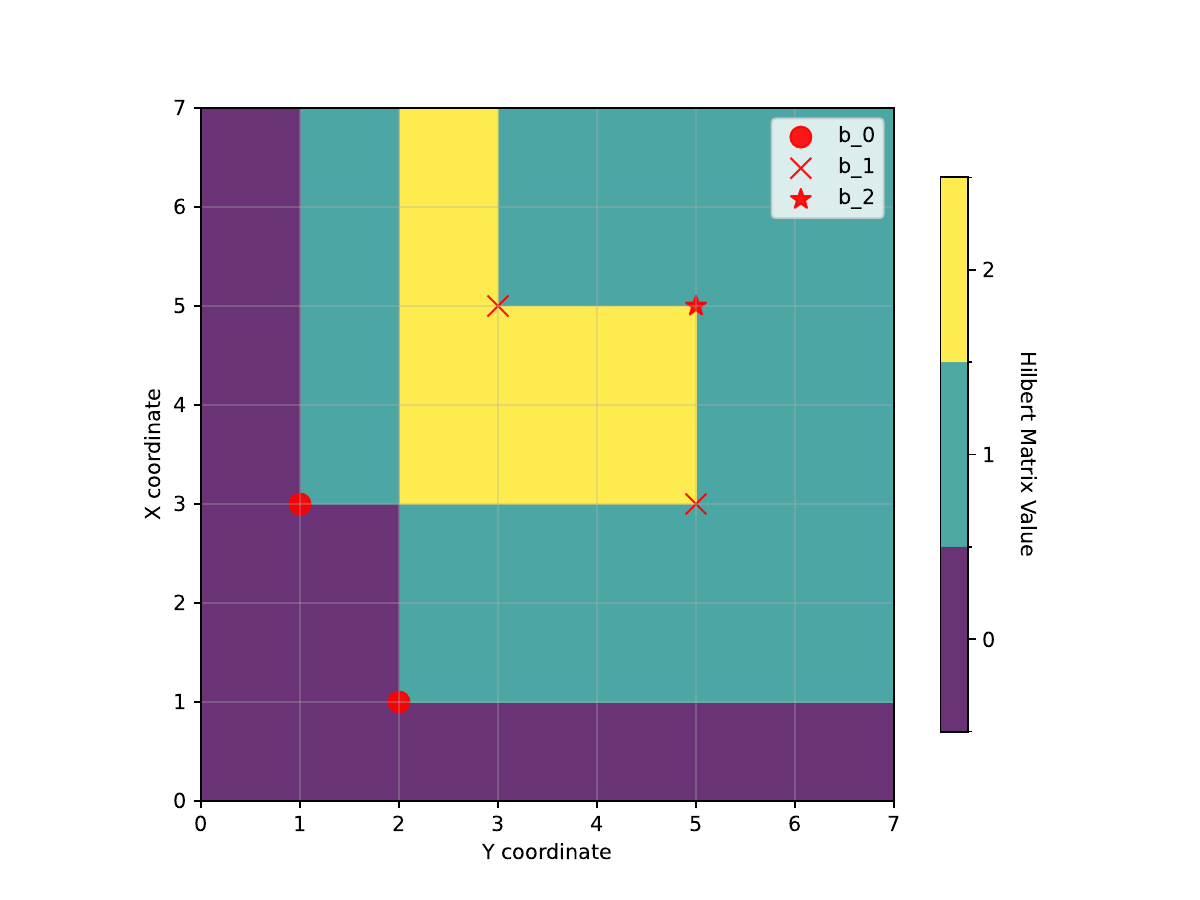}
    \caption{%
    Hilbert matrix of the bifiltered graph in \cref{fig:graded-graph}.
    }
    \label{fig:Hilbert_matrix_with_betti_numbers_figure1}
\end{figure}

In \cref{fig:Hilbert_matrix_blob_400}, we show the two Hilbert matrices with Betti numbers using ball-density-Rips and degree-Rips bifiltrations on 3D Blob clusters of size 400. For visualization purposes, we clip values larger than 5 to 5. Both matrices have a large region with 2 connected components, and the region with 3 connected components is comparatively smaller, which indicates that the point cloud has two close clusters and one far away. This also matches \cref{fig:3D-blob-clusters}, where the purple and yellow clusters are closer to each other than the green one.

In \cref{fig:NYC-taxi-pickups}, we show the Hilbert matrix from Betti numbers computed from functional-Rips bifiltration on the NYC Taxi dataset. The top figure shows the NYC Taxi pickups by zone, the middle figure is the converted point cloud, and the bottom figure shows the Hilbert matrix with Betti numbers. It shows that the Hilbert matrix has a large region of value 2 and a smaller region of value 3 merging into the region of value 2 when the distance increases to 30,000. The large region of value 2 might correspond to the fact that John F. Kennedy airport has a large number of pickups and is far away from other zones. The small region of value 3 might correspond to the fact that LaGuardia airport has a high number of pickups and is close to other zones. 

\begin{figure}
    \centering
    \includegraphics[width=0.5\linewidth]{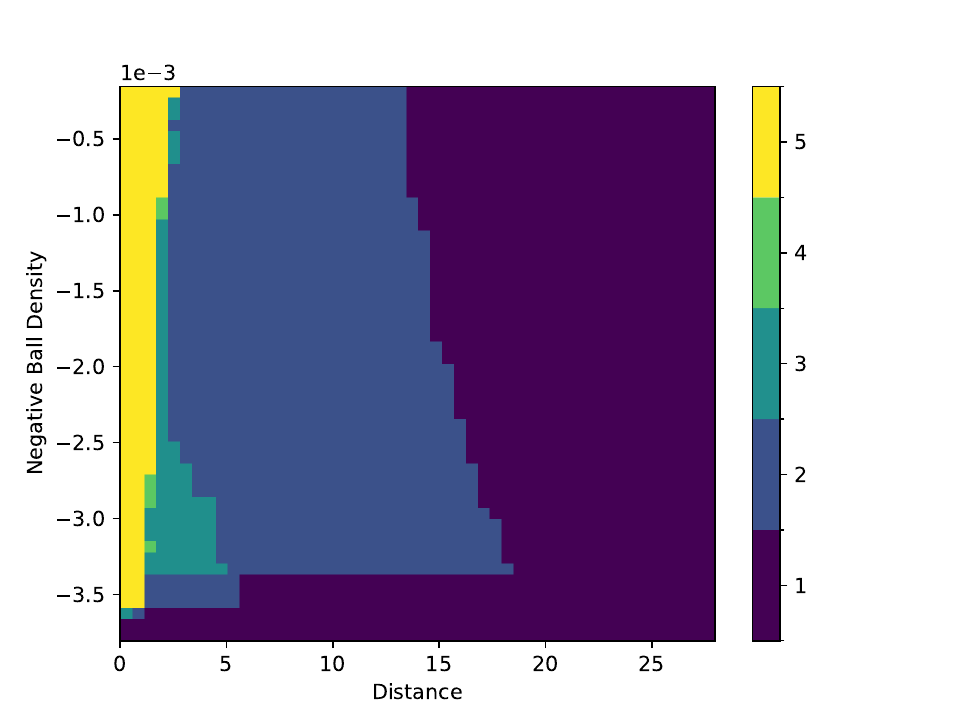}\hfill
    \includegraphics[width=0.5\linewidth]{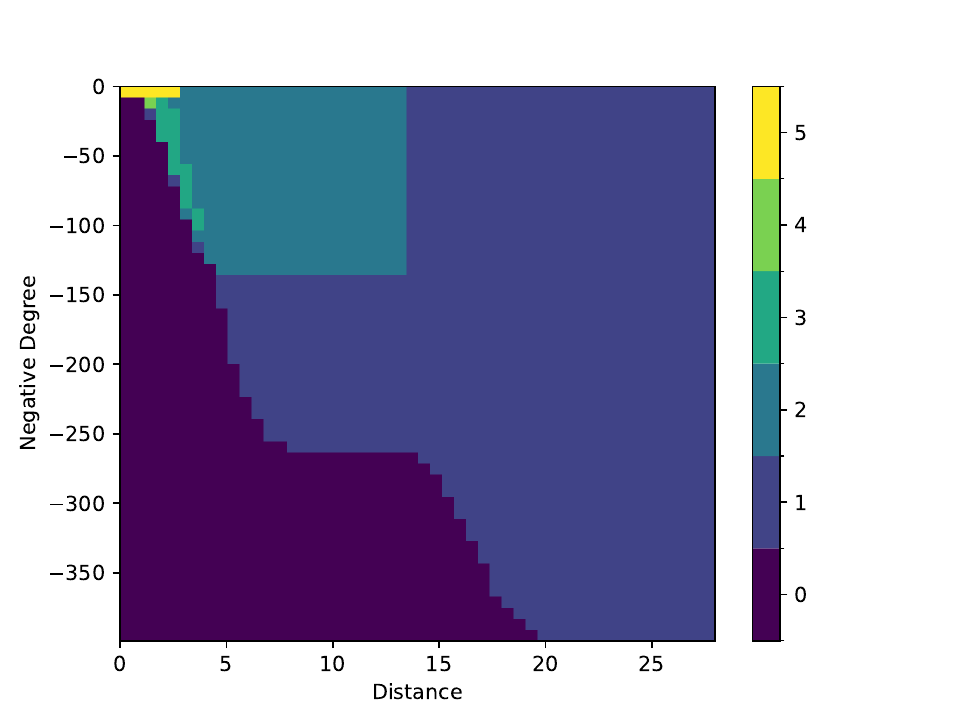}
    \caption{
        Hilbert matrices with Betti numbers for ball-density-Rips (Left) and degree-Rips (Right) bifiltrations on 3D Blob clusters of size 400 in \cref{fig:3D-blob-clusters}. For visualization purposes, values larger than 5 are clipped to 5. Both Hilbert matrices have a large region of value 2, and a smaller region of value 3 indicates that the point cloud has two close clusters. 
    }
    \label{fig:Hilbert_matrix_blob_400}
\end{figure}

\begin{figure}
    \centering
    \includegraphics[width=0.6\linewidth]{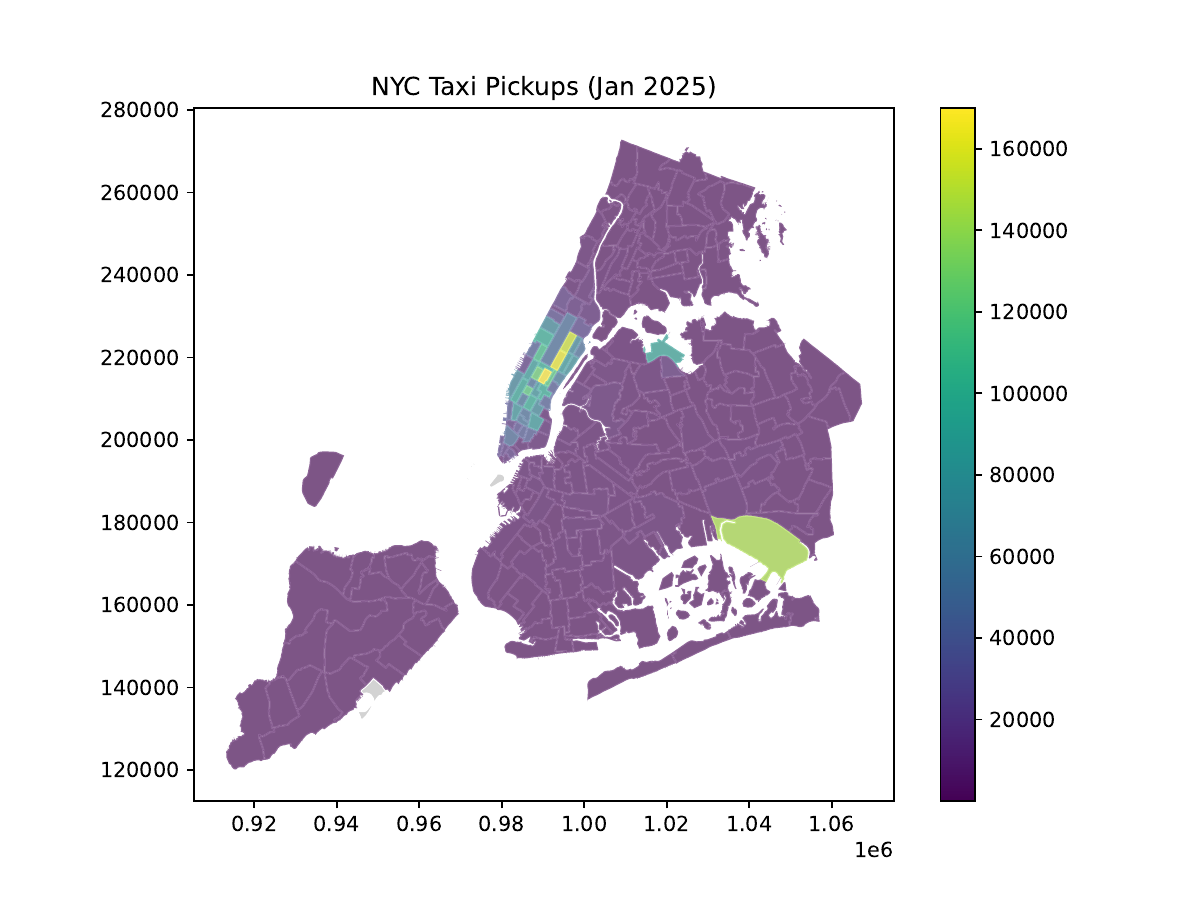}\hfill
    \includegraphics[width=0.6\linewidth]{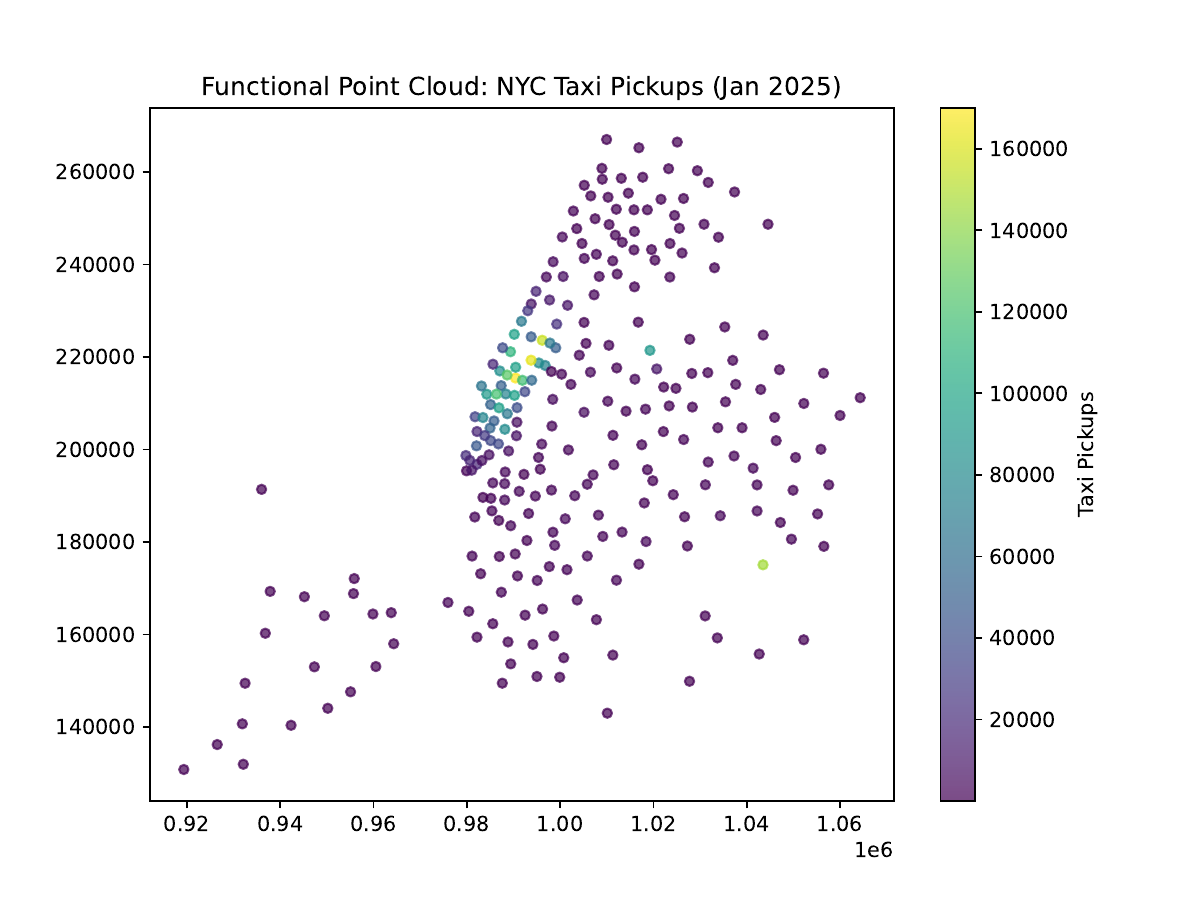}\hfill
    \includegraphics[width=0.6\linewidth]{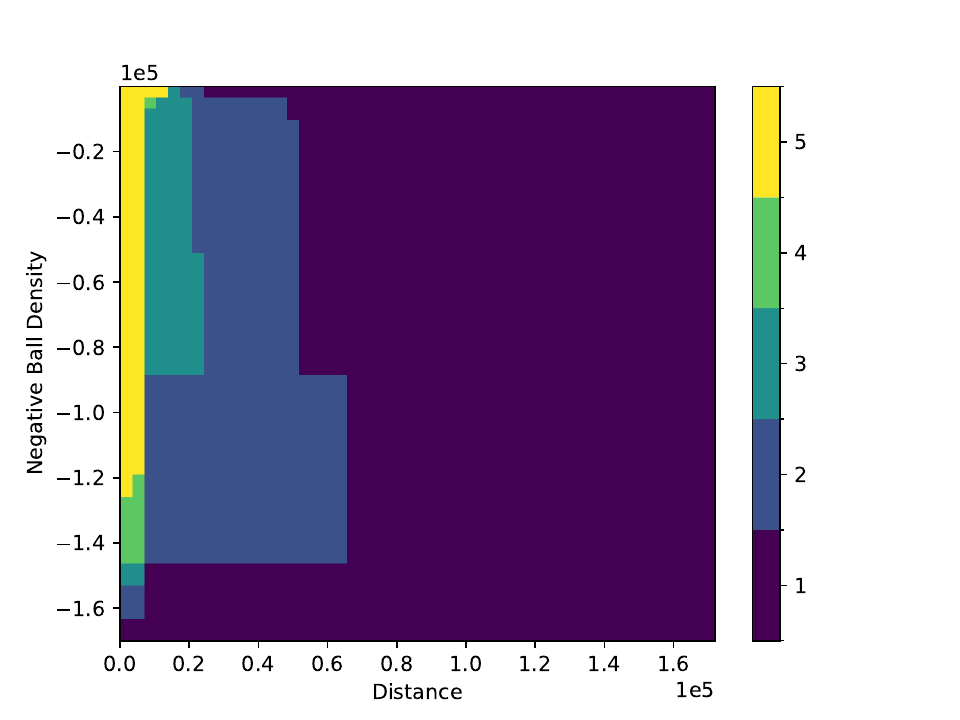}
    \caption{Hilbert matrix with Betti numbers for ball-density-Rips bifiltration on NYC Taxi dataset. \emph{Top:} NYC Taxi pickups by zone. \emph{Middle:} NYC Taxi pickups as a point cloud. \emph{Bottom:} Hilbert matrix with Betti numbers.}
    \label{fig:NYC-taxi-pickups}
\end{figure}

\subsection{Detailed timing analysis}
Using the same time recording framework as \cite{kerber-rolle}, we record and provide a detailed time analysis of our algorithm and mpfree. For example, to compute a minimal presentation matrix of the degree-Rips bifiltration of the 3D Blob clusters of size $700$ (detailed time output is in Appendix \ref{section:raw-time-output}), out of the total time $101$ seconds, mpfree takes $81$ seconds to convert the raw input into its (dense) graded matrices. Instead, our algorithm takes $14$ seconds ($49\%$ of the total time) to sort all grades from vertices and edges lexicographically before the main loop (line 5) in \cref{algorithm:betti-tables}. We achieve this by iterating over all vertices and edges after collapsing, and storing and sorting them lexicographically in a single container. Although both algorithms spend a majority of time on processing or sorting the grade points, our algorithm delays the sorting after collapsing, which is both time and memory efficient.

Another optimization we apply is to store the graph as adjacency lists with duplicates. At the end of \Cref{algorithm:vertex-reduction,algorithm:local-components}, one important overhead is the update of the graph after collapsing. Keeping the duplicate does not affect the correctness of the output because the DFS keeps track of the visited vertices.

%\section{Conclusion and future work}
%
%We have presented new algorithms for computing minimal presentations and Betti tables of zero-dimensional persistent homology of graphs filtered by arbitrary posets.
%Our approach exploits the combinatorial structure through edge and vertex collapsing, achieving linear time for Betti tables and quadratic time for minimal presentations---a significant improvement over the cubic-time matrix reduction methods required for general homological dimensions.
%
%Our experimental results on synthetic and real-world datasets demonstrate that the algorithm is substantially faster and more memory-efficient than existing methods, particularly for large datasets. The detailed timing analysis reveals that both our algorithm and mpfree spend a significant portion of time on sorting grades lexicographically, but our approach delays this step until after collapsing, reducing both time and memory overhead.

\section*{Acknowledgments}
We thank the anonymous reviewers and the SoCG'25 program committee for helpful comments and for spotting an issue with Definition \ref{definition:cycle-creating}, and its corresponding fix.

YL and DM were supported by the National Science Foundation, award DMS-2324632.
LS was supported by the National Science Foundation through grants CCF-2006661 and CAREER award DMS-1943758, while at Northeastern University, and by EPSRC grant ``New Approaches to Data Science: Application Driven Topological Data Analysis'', EP/R018472/1, while at University of Oxford.

\appendix

\section{Technical lemmas}
\label{section:proofs}

\begin{lemma}
    \label{lemma:projective-cover-minimal-elements}
    Let $M$ be a $\Pscr$-persistence module and let $g : Q = \bigoplus_{i \in I} \Psf_{f(i)} \cdot \{i\} \to M$ be a finite projective cover.
    For every $j \in I$, we have that $g_{f(j)}(\{j\}) \neq 0 \in M(f(j))$, and $\{j\} \in Q(f(j))$ is the only preimage of $g_{f(j)}(\{j\})$.
\end{lemma}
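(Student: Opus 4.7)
My plan is to deduce both claims from a single structural consequence of the minimality of a projective cover: for every $r \in \Pscr$, $\ker(g_r) \subseteq Q^{<r}(r) \coloneqq \bigoplus_{i \in I,\, f(i) < r} \kbb \cdot \{i\} \;\subseteq\; Q(r)$, i.e., the kernel at grade $r$ involves only basis vectors coming from strictly smaller grades. The two claims of the lemma will then follow immediately by specializing to $r = f(j)$.

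To establish this containment, I would argue by contradiction. Suppose some $q = \sum_{i : f(i) \leq r} a_i \{i\}$ lies in $\ker(g_r)$ with $a_j \neq 0$ for some $j$ satisfying $f(j) = r$. Then $g_r(\{j\})$ can be written as a $\kbb$-linear combination of $\{g_r(\{i\}) : i \neq j,\, f(i) \leq r\}$; applying the structure maps $\phi^M_{r,s}$ and using naturality of $g$ propagates the same relation to $g_s(\{j\})$ at every $s \geq r$. Setting $Q' \coloneqq \bigoplus_{i \neq j} \Psf_{f(i)} \cdot \{i\}$ and $g' \coloneqq g|_{Q'} : Q' \to M$, I would then verify that $g'$ remains surjective: for each grade $s$, any preimage $\tilde q = \sum_i b_i \{i\} \in Q(s)$ of $m \in M(s)$ under $g_s$ can be converted into a preimage lying in $Q'(s)$, either vacuously when $f(j) \not\leq s$ (in which case the $\{j\}$ summand is absent from $Q(s)$), or by substituting for the $b_j \{j\}$ term via the propagated linear relation when $f(j) \leq s$. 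Since $Q'$ has strictly fewer summands than $Q$, this contradicts the minimality of $g$ as a projective cover.

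Granting the containment, the first claim is immediate: the basis vector $\{j\}$ does not lie in $Q^{<f(j)}(f(j))$, so $\{j\} \notin \ker(g_{f(j)})$ and hence $g_{f(j)}(\{j\}) \neq 0$. For the second claim, any $q \in Q(f(j))$ with $g_{f(j)}(q) = g_{f(j)}(\{j\})$ satisfies $q - \{j\} \in \ker(g_{f(j)}) \subseteq Q^{<f(j)}(f(j))$; since $Q^{<f(j)}(f(j))$ contains no contribution from basis vectors at grade $f(j)$, the coefficient of $\{j\}$ in $q$ must be $1$ and the coefficients of all other basis vectors $\{i\}$ with $f(i) = f(j)$ must vanish, identifying $\{j\}$ as the unique preimage of $g_{f(j)}(\{j\})$ among its top-grade representatives (and, in particular, among the designated generators $\{\{i\}\}_{i \in I}$ themselves).

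The main technical obstacle I anticipate is the careful bookkeeping in the surjectivity check for $g'$: the linear relation among the $g_r(\{i\})$'s is extracted at the single grade $r = f(j)$ and must be transported consistently to every $s \geq r$ via naturality, while the cases $f(j) \leq s$ and $f(j) \not\leq s$ must be handled separately so that the substitution is well-defined and uniform across all grades.
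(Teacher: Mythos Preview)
Your kernel-containment argument $\ker(g_r)\subseteq Q^{<r}(r)$ is correct and uses the same mechanism as the paper's proof (drop a generator, contradict minimality), just organised more cleanly. The first claim of the lemma follows immediately, as you say.

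For the second claim you correctly hedge, concluding only that $\{j\}$ is the unique preimage ``among top-grade representatives''. This is not a gap in your argument but in the lemma itself: the literal statement---that $\{j\}$ is the \emph{only} element of $Q(f(j))$ with image $g_{f(j)}(\{j\})$---is false. Take $\Pscr=\{0<1\}$, $M(0)=M(1)=\kbb$ with zero structure map, and the projective cover $Q=\Psf_0\cdot\{1\}\oplus\Psf_1\cdot\{2\}\to M$ with $g_0(\{1\})=1$ and $g_1(\{2\})=1$; naturality forces $g_1(\{1\})=0$, so $\{1\}+\{2\}$ is a second preimage of $g_1(\{2\})$. The paper's proof asserts that any such second preimage would make the restriction to $\bigoplus_{i\neq j}\Psf_{f(i)}\cdot\{i\}$ surjective, but in this example dropping $\{2\}$ destroys surjectivity at grade~$1$. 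The weaker statement you actually establish is precisely what is needed in the proof of Lemma~\ref{lemma:minimal-resolution-uniqueness}: it gives $(h'\circ g')_{f(j)}(\{j\})-\{j\}\in P^{<f(j)}(f(j))$, so $h'\circ g'$ is unipotent upper-triangular with respect to any linear extension of the grading and hence an isomorphism, which is enough to split $g'$.
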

\begin{proof}
    If $g_{f(j)}(\{j\}) = 0 \in M(f(j))$, then the restriction of $g$ given by
    $\bigoplus_{i \in I \setminus \{j\}} \Psf_{f(i)} \cdot \{i\} \to M$ would be surjective, and thus $\sum_{r \in \Pscr} \beta^Q(r)$ would not be minimal, so $Q \to M$ would not be a projective cover.

    Similarly, if there exists another element of $Q(f(j))$ mapping to $g_{f(j)}(\{j\})$, then this element would be linearly independent of $\{j\}$, and again the restriction of $g$ given by $\bigoplus_{i \in I \setminus \{j\}} \Psf_{f(i)} \cdot \{i\} \to M$ would be surjective, and $Q \to M$ would not be a projective cover.
\end{proof}

\begin{lemma}
    \label{lemma:deletable-pi-0-iso}
    If $(G,f) = (V,E,\partial,f)$ is a filtered graph and $e \in E$ is deletable, then the morphism of persistent sets $\pi_0(G \setminus e,f) \to \pi_0(G,f)$ induced by the inclusion $G \setminus e \to G$ is an isomorphism.
\end{lemma}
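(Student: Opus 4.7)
The plan is to verify that, for every $r \in \Pscr$, the map $\pi_0((G\setminus e, f)(r)) \to \pi_0((G,f)(r))$ induced by inclusion is a bijection of sets; naturality with respect to $r \leq s$ follows automatically from functoriality of $\pi_0$, so this pointwise check is all that is needed.

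Surjectivity at each grade $r$ is immediate: the two filtered graphs $(G\setminus e, f)_r$ and $(G,f)_r$ share the same vertex set $\{v \in V : f(v) \leq r\}$, and every connected component of the latter contains at least one vertex.

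For injectivity at grade $r$, I would split into two cases according to whether $f(e) \leq r$ or not. If $f(e) \nleq r$, then the edge $e$ is absent from $(G,f)_r$, so $(G\setminus e, f)_r = (G,f)_r$ and there is nothing to prove. If $f(e) \leq r$, suppose $v, w \in V$ with $f(v), f(w) \leq r$ satisfy $[v] = [w] \in \pi_0((G,f)_r)$, so there is an edge-path $\gamma$ in $(G,f)_r$ from $v$ to $w$. Whenever $\gamma$ traverses the edge $e$ (in either direction), I replace that traversal by a path from $e_0$ to $e_1$ (resp.\ from $e_1$ to $e_0$) in $(V, E\setminus\{e\}, \partial, f)_{f(e)}$, which exists by the hypothesis that $e$ is deletable, meaning $[e_0] = [e_1] \in \pi_0(V, E\setminus\{e\}, \partial, f)(f(e))$. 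Since $f(e) \leq r$, this replacement path lies inside $(G\setminus e, f)_r$, and the concatenation yields a path from $v$ to $w$ in $(G\setminus e, f)_r$. Hence $[v] = [w] \in \pi_0((G\setminus e, f)_r)$, establishing injectivity.

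No step here is delicate; the main conceptual point is simply that the definition of deletable edge, which a priori only guarantees connectivity at the precise grade $f(e)$ in the modified graph, suffices because $\pi_0$ maps induced by inclusions of subgraphs only ever identify more vertices, so connectivity at $f(e)$ persists to every $r \geq f(e)$.
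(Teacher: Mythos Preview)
Your proof is correct and follows essentially the same approach as the paper: establish pointwise surjectivity from the shared vertex set, and for injectivity replace every traversal of $e$ in a connecting path by the witness path in $(G\setminus e,f)_{f(e)}$ guaranteed by deletability. Your case split on whether $f(e)\leq r$ is a minor organizational variant of the paper's split on whether the chosen path actually uses $e$, but the substance is identical.
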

\begin{proof}
    The morphism $\pi_0(G \setminus e,f) \to \pi_0(G,f)$ is clearly surjective, since $G \setminus e \to G$ is surjective on vertices.
    Thus, it is sufficient to prove that, for all $r \in \Pscr$ and all $x,y \in V$ with $f(x),f(y) \geq r$, we have that $[x] = [y] \in \pi_0(G,f)(r)$ implies $[x] = [y] \in \pi_0(G \setminus e,f)(r)$.
    So assume that $[x] = [y] \in \pi_0(G,f)(r)$, and let $e^1, \dots, e^k$ be a path between $x$ and $y$ in $(G,f)_r$.
    If the path does not contain the edge $e$, then it is also a path between $x$ and $y$ in $(G\setminus e, f)_r$, and we are done.
    If the path does contain the edge $e$, then, in particular, $r \geq f(e)$.
    Since $e$ is deletable, there exists a path $d^1, \dots, d^\ell$ between the vertices $e_0$ and $e_1$ in $(G \setminus e, f)_{f(r)}$.
    Now, take the path $e^1, \dots, e^k$ and replace every occurrence of the edge $e$ with the path $d^1, \dots, d^\ell$.
    This yields a path between $x$ and $y$ in $(G\setminus, f)_r$, so that $[x]=[y] \in \pi_0(G \setminus e,f)_r$, as required.
\end{proof}

\section{Detailed Time Output}\label{section:raw-time-output}

The following is the raw time output of mpfree and our algorithm for computing a minimal presentation of the degree-Rips bifiltration of the 3D Blob clusters of size 700. The time is in seconds. Both algorithms use the same timer tool provided by the boost library.

For mpfree, we added two timers beyond those it already provides: ``Load data into prematrix'' and ``Prematrix to graded matrix,'' both reported within the original ``IO timer.'' Unlike our algorithm, which takes a point cloud as input, mpfree requires the input to be a bifiltration, resulting in considerably larger file sizes. Its pipeline first loads the bifiltration into prematrices and then converts them into graded matrices. Since the conversion step does not involve I/O, we report it separately.

For our algorithm, the timers follow the order of steps in \cref{algorithm:betti-tables}. ``Build filtration graph'' measures the time to construct the graph from the bifiltration, with ``Load input'' as a sub-timer for reading and parsing the input file. ``Collapse edge'' and its sub-timer ``Update graph'' correspond to \cref{algorithm:local-components}. ``Collapse vertex'' and its sub-timer ``Update graph'' correspond to \cref{algorithm:vertex-reduction}, excluding the edge-collapsing step (line~1). ``Collect grades lexicographically'' measures the time to collect and sort all grades before the main loop (line~5 of \cref{algorithm:betti-tables}). ``Main loop (visit grades)'' measures the time to iterate through the grades and compute the Betti tables. As with mpfree, we also report the total time excluding input loading for a fairer comparison.

Note that both mpfree (`Prematrix to graded matrix timer') and our algorithm (`Collect grades lexicographically') spend a significant amount of time on sorting the grades lexicographically. 

% Running on blobs_700.txt...
% --------------------------------
% input file_name: data/PointCloud/blobs_700.txt
% filtration type: degree
% Total grades size: (244651, 700) = 1.71e+02 M
% Active grades size: 9.72e-01 M (5.68e-01%)
% \begin{verbatim}
%     mpfree:
%     Overall timer: 135.233
%     Real timer (overall-load_data_into_prematrix_timer): 101.753
%     IO timer:              116.701     ( 86.296% )
%     (In IO) Load data into prematrix timer: 33.4798     ( 24.7571% )
%     (In IO) Prematrix to graded matrix timer: 81.582     ( 60.3269% )
%     Chunk timer:           12.255     ( 9.06213% )
%     Mingens timer:         3.03393     ( 2.24348% )
%     Kerbasis timer:        0.0036777     ( 0.00271953% )
%     Reparam timer:         0.00330986     ( 0.00244752% )
%     Minimize timer:        0.00743799     ( 0.00550012% )
%     --------------------------------
%     Our algorithm:
%     Build ggraph(filtration):                        5.6275 s   ( 19.25 % )
%     - Load input:                                    0.0100 s   (  0.03 % )
%     Collapse edge:                                   1.0732 s   (  3.67 % )
%     Update graph inside collapse edge:               0.1865 s   (  0.64 % )
%     Collapse vertex:                                 5.1981 s   ( 17.79 % )
%     Update graph inside collapse vertex:             0.7032 s   (  2.41 % )
%     Collect grades lexicographically:               14.5864 s   ( 49.91 % )
%     Main Loop(visit grades lexicographically):       2.5169 s   (  8.61 % )
%     Total time: 29.2273 s
%     Total time minus load input: 29.2174 s
% \end{verbatim}

\begin{table}[h]
\centering
    \begin{tabular*}{0.85\linewidth}{@{\extracolsep{\fill}}lrr}
    \hline
    \textbf{mpfree} & \textbf{Time (s)} & \textbf{Percentage} \\
    \hline
    IO timer & 116.701 & 86.30\% \\
    \quad Load data into prematrix & 33.480 & 24.76\% \\
    \quad Prematrix to graded matrix & 81.582 & 60.33\% \\
    Chunk timer & 12.255 & 9.06\% \\
    Mingens timer & 3.034 & 2.24\% \\
    Kerbasis timer & 0.004 & $<$0.01\% \\
    Reparam timer & 0.003 & $<$0.01\% \\
    Minimize timer & 0.007 & $<$0.01\% \\
    \hline
    \textbf{Total time} & \textbf{135.233} & --- \\
    Total time (excl.\ load data into prematrix) & 101.753 & --- \\
    \hline
    \end{tabular*}
\end{table}
    
\begin{table}[h]
\centering
    \begin{tabular*}{0.85\linewidth}{@{\extracolsep{\fill}}lrr}
    \hline
    \textbf{Our algorithm} & \textbf{Time (s)} & \textbf{Percentage} \\
    \hline
    Build filtration graph & 5.628 & 19.25\% \\
    \quad Load input & 0.010 & 0.03\% \\
    Collapse edge & 1.073 & 3.67\% \\
    \quad Update graph & 0.187 & 0.64\% \\
    Collapse vertex & 5.198 & 17.79\% \\
    \quad Update graph & 0.703 & 2.41\% \\
    Collect grades lexicographically & 14.586 & 49.91\% \\
    Main loop (visit grades) & 2.517 & 8.61\% \\
    \hline
    \textbf{Total time} & \textbf{29.227} & --- \\
    Total time (excl.\ load input) & 29.217 & --- \\
    \hline
    \end{tabular*}
\end{table}

\eject

\bibliographystyle{amsplain}
\bibliography{references}

@article{scikit-learn,
  title={Scikit-learn: Machine Learning in {P}ython},
  author={Pedregosa, F. and Varoquaux, G. and Gramfort, A. and Michel, V.
          and Thirion, B. and Grisel, O. and Blondel, M. and Prettenhofer, P.
          and Weiss, R. and Dubourg, V. and Vanderplas, J. and Passos, A. and
          Cournapeau, D. and Brucher, M. and Perrot, M. and Duchesnay, E.},
  journal={Journal of Machine Learning Research},
  volume={12},
  pages={2825--2830},
  year={2011}
}

@article{georgiadis,
  author     = {Georgiadis, Loukas and Kaplan, Haim and Shafrir, Nira and Tarjan, Robert E. and Werneck, Renato F.},
  title      = {Data structures for mergeable trees},
  year       = {2011},
  issue_date = {March 2011},
  publisher  = {Association for Computing Machinery},
  address    = {New York, NY, USA},
  volume     = {7},
  number     = {2},
  issn       = {1549-6325},
  url        = {https://doi.org/10.1145/1921659.1921660},
  doi        = {10.1145/1921659.1921660},
  abstract   = {Motivated by an application in computational geometry, we consider a novel variant of the problem of efficiently maintaining a forest of dynamic rooted trees. This variant includes an operation that merges two tree paths. In contrast to the standard problem, in which a single operation can only add or delete one arc, one merge can add and delete up to a linear number of arcs. In spite of this, we develop three different methods that need only polylogarithmic time per operation. The first method extends a solution of Farach and Thorup [1998] for the special case of paths. Each merge takes O(log2n) amortized time on an n-node forest and each standard dynamic tree operation takes O(log n) time; the latter bound is amortized, worst case, or randomized depending on the underlying data structure. For the special case that occurs in the motivating application, in which arbitrary arc deletions (cuts) do not occur, we give a method that takes O(log n) time per operation, including merging. This is best possible in a model of computation with an Ω(n log n) lower bound for sorting n numbers, since such sorting can be done in O(n) tree operations. For the even-more-special case in which there are no cuts and no parent queries, we give a method that uses standard dynamic trees as a black box: each mergeable tree operation becomes a constant number of standard dynamic tree operations. This third method can also be used in the motivating application, but only by changing the algorithm in the application. Each of our three methods needs different analytical tools and reveals different properties of dynamic trees.},
  journal    = {ACM Trans. Algorithms},
  month      = {mar},
  articleno  = {14},
  numpages   = {30},
  keywords   = {merging, manifolds, link-cut trees, extreme points, dynamic trees, critical pairs, computational topology, Amortized efficiency}
}

@article{guidolin-landi,
  title   = {On the support of Betti tables of multiparameter persistent homology modules},
  author  = {Guidolin, Andrea and Landi, Claudia},
  journal = {arXiv preprint arXiv:2303.05294},
  year    = {2023}
}

@article{kerber-rolle,
  author    = {Michael Kerber and Alexander Rolle},
  title     = {Fast Minimal Presentations of Bi-graded Persistence Modules},
  journal = {Proceedings of the Symposium on Algorithm Engineering and Experiments (ALENEX)},
  chapter   = {},
  year      = {2021},
  pages     = {207-220},
  doi       = {10.1137/1.9781611976472.16},
  url       = {https://epubs.siam.org/doi/abs/10.1137/1.9781611976472.16},
  eprint    = {https://epubs.siam.org/doi/pdf/10.1137/1.9781611976472.16},
}

@article{alonso-kerber-pritam,
  author    = {Ángel Javier Alonso and Michael Kerber and Siddharth Pritam},
  title     = {Filtration-Domination in Bifiltered Graphs},
  journal = {2023 Proceedings of the Symposium on Algorithm Engineering and Experiments (ALENEX)},
  chapter   = {},
  pages     = {27-38},
  doi       = {10.1137/1.9781611977561.ch3},
  year = {2023},
  url       = {https://epubs.siam.org/doi/abs/10.1137/1.9781611977561.ch3},
  eprint    = {https://epubs.siam.org/doi/pdf/10.1137/1.9781611977561.ch3},
}

@article{edelsbrunner-parsa,
  title={On the computational complexity of {B}etti numbers: reductions from matrix rank},
  author={Edelsbrunner, Herbert and Parsa, Salman},
  journal={Proceedings of the twenty-fifth annual ACM-SIAM symposium on discrete algorithms},
  pages={152--160},
  year={2014},
}

@article{allili-kaczynski-landi,
  author     = {Allili, Madjid and Kaczynski, Tomasz and Landi, Claudia},
  title      = {Reducing complexes in multidimensional persistent homology
                theory},
  journal    = {J. Symbolic Comput.},
  fjournal   = {Journal of Symbolic Computation},
  volume     = {78},
  year       = {2017},
  pages      = {61--75},
  issn       = {0747-7171,1095-855X},
  mrclass    = {55N35 (55U15 57Q10 68Q25 68U05 68W40)},
  mrnumber   = {3535329},
  mrreviewer = {Timothy\ Porter},
  doi        = {10.1016/j.jsc.2015.11.020},
  url        = {https://doi.org/10.1016/j.jsc.2015.11.020}
}

@incollection{botnan-lesnick,
  author    = {Botnan, Magnus Bakke and Lesnick, Michael},
  title     = {An introduction to multiparameter persistence},
  booktitle = {Representations of algebras and related structures},
  series    = {EMS Ser. Congr. Rep.},
  pages     = {77--150},
  publisher = {Eur. Math. Soc., Z\"{u}rich},
  year      = {[2023] \copyright 2023},
  isbn      = {978-3-98547-054-9},
  mrclass   = {55},
  mrnumber  = {4693638}
}

@incollection{bauer-lenzen-lesnick,
  author    = {Bauer, Ulrich and Lenzen, Fabian and Lesnick, Michael},
  title     = {Efficient two-parameter persistence computation via
               cohomology},
  booktitle = {39th {I}nternational {S}ymposium on {C}omputational
               {G}eometry},
  series    = {LIPIcs. Leibniz Int. Proc. Inform.},
  volume    = {258},
  pages     = {Art. No. 15, 17},
  publisher = {Schloss Dagstuhl. Leibniz-Zent. Inform., Wadern},
  year      = {2023},
  isbn      = {978-3-95977-273-0},
  mrclass   = {68U03},
  mrnumber  = {4604000},
  doi       = {10.4230/lipics.socg.2023.15},
  url       = {https://doi.org/10.4230/lipics.socg.2023.15}
}

@article{fugacci-kerber-rolle,
  author     = {Fugacci, Ulderico and Kerber, Michael and Rolle, Alexander},
  title      = {Compression for 2-parameter persistent homology},
  journal    = {Comput. Geom.},
  fjournal   = {Computational Geometry. Theory and Applications},
  volume     = {109},
  year       = {2023},
  pages      = {Paper No. 101940, 28},
  issn       = {0925-7721,1879-081X},
  mrclass    = {55N31 (68W30)},
  mrnumber   = {4480751},
  mrreviewer = {Samir\ Shukla},
  doi        = {10.1016/j.comgeo.2022.101940},
  url        = {https://doi.org/10.1016/j.comgeo.2022.101940}
}

@article{lesnick-wright,
  author     = {Lesnick, Michael and Wright, Matthew},
  title      = {Computing minimal presentations and bigraded {B}etti numbers
                of 2-parameter persistent homology},
  journal    = {SIAM J. Appl. Algebra Geom.},
  fjournal   = {SIAM Journal on Applied Algebra and Geometry},
  volume     = {6},
  year       = {2022},
  number     = {2},
  pages      = {267--298},
  issn       = {2470-6566},
  mrclass    = {55N31 (13D02)},
  mrnumber   = {4429408},
  mrreviewer = {Nicolas\ Boutry},
  doi        = {10.1137/20M1388425},
  url        = {https://doi.org/10.1137/20M1388425}
}

@incollection{carlsson-zomorodian,
  author    = {Carlsson, Gunnar and Zomorodian, Afra},
  title     = {The theory of multidimensional persistence},
  booktitle = {Computational geometry ({SCG}'07)},
  pages     = {184--193},
  publisher = {ACM, New York},
  year      = {2007},
  isbn      = {978-1-59593-705-6},
  mrclass   = {52C35 (68U05)},
  mrnumber  = {2469164},
  doi       = {10.1145/1247069.1247105},
  url       = {https://doi.org/10.1145/1247069.1247105}
}

@article{oudot-scoccola,
  author   = {Oudot, Steve and Scoccola, Luis},
  title    = {On the {S}tability of {M}ultigraded {B}etti {N}umbers and
              {H}ilbert {F}unctions},
  journal  = {SIAM J. Appl. Algebra Geom.},
  fjournal = {SIAM Journal on Applied Algebra and Geometry},
  volume   = {8},
  year     = {2024},
  number   = {1},
  pages    = {54--88},
  issn     = {2470-6566},
  mrclass  = {55N31 (62R40)},
  mrnumber = {4695718},
  doi      = {10.1137/22M1489150},
  url      = {https://doi.org/10.1137/22M1489150}
}

@article{loiseaux-et-al,
  title   = {Stable vectorization of multiparameter persistent homology using signed barcodes as measures},
  author  = {Loiseaux, David and Scoccola, Luis and Carri{\`e}re, Mathieu and Botnan, Magnus Bakke and Oudot, Steve},
  journal = {Advances in Neural Information Processing Systems},
  volume  = {36},
  year    = {2024}
}

@software{rivet,
  author  = {{The RIVET Developers}},
  title   = {RIVET},
  url     = {https://github.com/rivetTDA/rivet/},
  version = {1.1.0},
  year    = {2020}
}

@article{guidolin-landi-2,
  author     = {Guidolin, Andrea and Landi, Claudia},
  title      = {Morse inequalities for the {K}oszul complex of
                multi-persistence},
  journal    = {J. Pure Appl. Algebra},
  fjournal   = {Journal of Pure and Applied Algebra},
  volume     = {227},
  year       = {2023},
  number     = {7},
  pages      = {Paper No. 107319, 26},
  issn       = {0022-4049,1873-1376},
  mrclass    = {55N31 (13D02 55U15 57Q70)},
  mrnumber   = {4537717},
  mrreviewer = {Massimo\ Ferri},
  doi        = {10.1016/j.jpaa.2023.107319},
  url        = {https://doi.org/10.1016/j.jpaa.2023.107319}
}

@book{eisenbud,
  author     = {Eisenbud, David},
  title      = {Commutative algebra},
  series     = {Graduate Texts in Mathematics},
  volume     = {150},
  note       = {With a view toward algebraic geometry},
  publisher  = {Springer-Verlag, New York},
  year       = {1995},
  pages      = {xvi+785},
  isbn       = {0-387-94268-8; 0-387-94269-6},
  mrclass    = {13-01 (14A05)},
  mrnumber   = {1322960},
  mrreviewer = {Matthew\ Miller},
  doi        = {10.1007/978-1-4612-5350-1},
  url        = {https://doi.org/10.1007/978-1-4612-5350-1}
}

@book{miller-sturmfels,
  author     = {Miller, Ezra and Sturmfels, Bernd},
  title      = {Combinatorial commutative algebra},
  series     = {Graduate Texts in Mathematics},
  volume     = {227},
  publisher  = {Springer-Verlag, New York},
  year       = {2005},
  pages      = {xiv+417},
  isbn       = {0-387-22356-8},
  mrclass    = {13-01 (05-01 05E99 13D02 14M15 14M25)},
  mrnumber   = {2110098},
  mrreviewer = {Joseph\ Gubeladze}
}

@article {cai-kim-memoli-wang,
    AUTHOR = {Cai, Chen and Kim, Woojin and M\'{e}moli, Facundo and Wang,
              Yusu},
     TITLE = {Elder-rule-staircodes for augmented metric spaces},
   JOURNAL = {SIAM J. Appl. Algebra Geom.},
  FJOURNAL = {SIAM Journal on Applied Algebra and Geometry},
    VOLUME = {5},
      YEAR = {2021},
    NUMBER = {3},
     PAGES = {417--454},
      ISSN = {2470-6566},
   MRCLASS = {55N31 (16W50)},
  MRNUMBER = {4297819},
MRREVIEWER = {Nicolas\ Boutry},
       DOI = {10.1137/20M1353605},
       URL = {https://doi.org/10.1137/20M1353605},
}

@article {carlsson-memoli,
    AUTHOR = {Carlsson, Gunnar and M\'{e}moli, Facundo},
     TITLE = {Classifying clustering schemes},
   JOURNAL = {Found. Comput. Math.},
  FJOURNAL = {Foundations of Computational Mathematics. The Journal of the
              Society for the Foundations of Computational Mathematics},
    VOLUME = {13},
      YEAR = {2013},
    NUMBER = {2},
     PAGES = {221--252},
      ISSN = {1615-3375,1615-3383},
   MRCLASS = {62H30 (68T10)},
  MRNUMBER = {3032681},
       DOI = {10.1007/s10208-012-9141-9},
       URL = {https://doi.org/10.1007/s10208-012-9141-9},
}

@incollection {carlsson-memoli-2,
    AUTHOR = {Carlsson, Gunnar and M\'{e}moli, Facundo},
     TITLE = {Multiparameter hierarchical clustering methods},
 BOOKTITLE = {Classification as a tool for research},
    SERIES = {Stud. Classification Data Anal. Knowledge Organ.},
     PAGES = {63--70},
 PUBLISHER = {Springer, Berlin},
      YEAR = {2010},
      ISBN = {978-3-642-10744-3},
   MRCLASS = {62H30},
  MRNUMBER = {2722123},
       DOI = {10.1007/978-3-642-10745-0\_6},
       URL = {https://doi.org/10.1007/978-3-642-10745-0_6},
}

@article{rolle-scoccola,
  author  = {Alexander Rolle and Luis Scoccola},
  title   = {Stable and Consistent Density-Based Clustering via Multiparameter Persistence},
  journal = {Journal of Machine Learning Research},
  year    = {2024},
  volume  = {25},
  number  = {258},
  pages   = {1--74},
  url     = {http://jmlr.org/papers/v25/21-1185.html}
}

@article{scoccola-rolle,
    doi = {10.21105/joss.05022},
    url = {https://doi.org/10.21105/joss.05022},
    year = {2023},
    publisher = {The Open Journal},
    volume = {8},
    number = {83},
    pages = {5022},
    author = {Luis Scoccola and Alexander Rolle},
    title = {Persistable: persistent and stable clustering},
    journal = {Journal of Open Source Software}
}

@article {chazal-guibas-oudot-skraba,
    AUTHOR = {Chazal, Fr\'{e}d\'{e}ric and Guibas, Leonidas J. and Oudot,
              Steve Y. and Skraba, Primoz},
     TITLE = {Persistence-based clustering in {R}iemannian manifolds},
   JOURNAL = {J. ACM},
  FJOURNAL = {Journal of the ACM},
    VOLUME = {60},
      YEAR = {2013},
    NUMBER = {6},
     PAGES = {Art. 41, 38},
      ISSN = {0004-5411,1557-735X},
   MRCLASS = {68U05 (52B55 62H30 65D18 68T05)},
  MRNUMBER = {3144911},
MRREVIEWER = {Arnaud\ de Mesmay},
       DOI = {10.1145/2535927},
       URL = {https://doi.org/10.1145/2535927},
}

@book {oudot,
    AUTHOR = {Oudot, Steve Y.},
     TITLE = {Persistence theory: from quiver representations to data
              analysis},
    SERIES = {Mathematical Surveys and Monographs},
    VOLUME = {209},
 PUBLISHER = {American Mathematical Society, Providence, RI},
      YEAR = {2015},
     PAGES = {viii+218},
      ISBN = {978-1-4704-2545-6},
   MRCLASS = {55N35 (16G20 55U10 62-07 68U05)},
  MRNUMBER = {3408277},
MRREVIEWER = {Patrizio\ Frosini},
       DOI = {10.1090/surv/209},
       URL = {https://doi.org/10.1090/surv/209},
}

@article {bauer-botnan-oppermann-steen,
    AUTHOR = {Bauer, Ulrich and Botnan, Magnus B. and Oppermann, Steffen and
              Steen, Johan},
     TITLE = {Cotorsion torsion triples and the representation theory of
              filtered hierarchical clustering},
   JOURNAL = {Adv. Math.},
  FJOURNAL = {Advances in Mathematics},
    VOLUME = {369},
      YEAR = {2020},
     PAGES = {107171, 51},
      ISSN = {0001-8708,1090-2082},
   MRCLASS = {55N31 (16G20)},
  MRNUMBER = {4091895},
MRREVIEWER = {Walter\ D.\ Freyn},
       DOI = {10.1016/j.aim.2020.107171},
       URL = {https://doi.org/10.1016/j.aim.2020.107171},
}

@article{bauer-scoccola,
    author = {Bauer, Ulrich and Scoccola, Luis},
    title = {Multi-parameter Persistence Modules are Generically Indecomposable},
    journal = {International Mathematics Research Notices},
    volume = {2025},
    number = {5},
    pages = {rnaf034},
    year = {2025},
    month = {02},
    abstract = {Algebraic persistence studies persistence modules (typically, linear representations of the poset \$\\textbf\{R\}^\{n\}\$ with \$n \\geq 1\$) and the algebraic relationships between persistence modules that are interleaved. The notion of \$\\varepsilon \$-interleaving between persistence modules is a generalization of the notion of isomorphism (recovering isomorphism when \$\\varepsilon = 0\$), which can be used to quantify how far any two persistence modules are from being isomorphic. An emblematic example of this kind of study is the algebraic stability theorem, which strengthens the Krull–Schmidt property of one-parameter persistence modules (representations of \$\\textbf\{R\}\$) by generalizing isomorphism to interleaving: If a pair of one-parameter persistence modules is \$\\varepsilon \$-interleaved, then there exists a partial matching between the indecomposable summands of the two modules such that matched indecomposables are \$\\varepsilon \$-interleaved and unmatched indecomposables are \$\\varepsilon \$-interleaved with the zero module. Our first main result implies that the obvious extension of the algebraic stability theorem to the case of multi-parameter persistence modules (representations of \$\\textbf\{R\}^\{n\}\$ with \$n \\geq 2\$) fails spectacularly: Any finitely presentable multi-parameter persistence module can be approximated arbitrarily well by an indecomposable module. Our second main result states that modules that are sufficiently close to an indecomposable decompose as a direct sum of an indecomposable and a nearly trivial module. We derive from these two results several consequences about the interplay between the algebraic and the topological properties of multi-parameter persistence modules. These results provide strong motivation for approaching multi-parameter persistence in a way that does not rely on directly decomposing modules by indecomposables.},
    issn = {1073-7928},
    doi = {10.1093/imrn/rnaf034},
    url = {https://doi.org/10.1093/imrn/rnaf034},
    eprint = {https://academic.oup.com/imrn/article-pdf/2025/5/rnaf034/62170563/rnaf034.pdf},
}

@InProceedings{perslay,
  title = 	 {PersLay: A Neural Network Layer for Persistence Diagrams and New Graph Topological Signatures},
  author =       {Carriere, Mathieu and Chazal, Frederic and Ike, Yuichi and Lacombe, Theo and Royer, Martin and Umeda, Yuhei},
  booktitle = 	 {Proceedings of the Twenty Third International Conference on Artificial Intelligence and Statistics},
  pages = 	 {2786--2796},
  year = 	 {2020},
  editor = 	 {Chiappa, Silvia and Calandra, Roberto},
  volume = 	 {108},
  series = 	 {Proceedings of Machine Learning Research},
  month = 	 {26--28 Aug},
  publisher =    {PMLR},
  pdf = 	 {http://proceedings.mlr.press/v108/carriere20a/carriere20a.pdf},
  url = 	 {https://proceedings.mlr.press/v108/carriere20a.html},
  abstract = 	 {Persistence diagrams, the most common descriptors of Topological Data Analysis, encode topological properties of data and have already proved pivotal in many different applications of data science. However, since the metric space of persistence diagrams is not Hilbert, they end up being difficult inputs for most Machine Learning techniques. To address this concern, several vectorization methods have been put forward that embed persistence diagrams into either finite-dimensional Euclidean space or implicit infinite dimensional Hilbert space with kernels. In this work, we focus on persistence diagrams built on top of graphs. Relying on extended persistence theory and the so-called heat kernel signature, we show how graphs can be encoded by (extended) persistence diagrams in a provably stable way. We then propose a general and versatile framework for learning vectorizations of persistence diagrams, which encompasses most of the vectorization techniques used in the literature. We finally showcase the experimental strength of our setup by achieving competitive scores on classification tasks on real-life graph datasets.}
}

@article {Nazarova,
    AUTHOR = {Nazarova, L. A.},
     TITLE = {Partially ordered sets of infinite type},
   JOURNAL = {Izv. Akad. Nauk SSSR Ser. Mat.},
  FJOURNAL = {Izvestiya Akademii Nauk SSSR. Seriya Matematicheskaya},
    VOLUME = {39},
      YEAR = {1975},
    NUMBER = {5},
     PAGES = {963--991, 1219},
      ISSN = {0373-2436},
   MRCLASS = {06A10 (16A64)},
  MRNUMBER = {0406878},
MRREVIEWER = {I. Riener},
}

@article{hacquard-lebovici,
  author  = {Olympio Hacquard and Vadim Lebovici},
  title   = {Euler Characteristic Tools for Topological Data Analysis},
  journal = {Journal of Machine Learning Research},
  year    = {2024},
  volume  = {25},
  number  = {240},
  pages   = {1--39},
  url     = {http://jmlr.org/papers/v25/23-0353.html}
}

@misc{bender-gafvert-lesnick,
  author = {Bender, Mat\'{i}as and G\"{a}fvert, Oliver and Lesnick, Michael},
  title = {Efficient computation of multiparameter persistence, (in preparation)},
}

@article {igusa-zacharia,
    AUTHOR = {Igusa, Kiyoshi and Zacharia, Dan},
     TITLE = {On the cohomology of incidence algebras of partially ordered
              sets},
   JOURNAL = {Comm. Algebra},
  FJOURNAL = {Communications in Algebra},
    VOLUME = {18},
      YEAR = {1990},
    NUMBER = {3},
     PAGES = {873--887},
      ISSN = {0092-7872},
   MRCLASS = {18G20 (06A09 16E40)},
  MRNUMBER = {1052771},
MRREVIEWER = {Claude Cibils},
       DOI = {10.1080/00927879008823949},
       URL = {https://doi.org/10.1080/00927879008823949},
}

@article {gerstenhaber-shack,
    AUTHOR = {Gerstenhaber, Murray and Schack, Samuel D.},
     TITLE = {Simplicial cohomology is {H}ochschild cohomology},
   JOURNAL = {J. Pure Appl. Algebra},
  FJOURNAL = {Journal of Pure and Applied Algebra},
    VOLUME = {30},
      YEAR = {1983},
    NUMBER = {2},
     PAGES = {143--156},
      ISSN = {0022-4049},
   MRCLASS = {18G30 (16A61)},
  MRNUMBER = {722369},
MRREVIEWER = {Wolfgang Gr\"{o}lz},
       DOI = {10.1016/0022-4049(83)90051-8},
       URL = {https://doi.org/10.1016/0022-4049(83)90051-8},
}

@book{edelsbrunner-harer,
    AUTHOR = {Edelsbrunner, Herbert and Harer, John L.},
     TITLE = {Computational topology},
      NOTE = {An introduction},
 PUBLISHER = {American Mathematical Society, Providence, RI},
      YEAR = {2010},
     PAGES = {xii+241},
      ISBN = {978-0-8218-4925-5},
   MRCLASS = {00-02 (05C10 52-02 55-02 57-02 65D18 68U05)},
  MRNUMBER = {2572029},
MRREVIEWER = {Andrzej Kozlowski},
       DOI = {10.1090/mbk/069},
       URL = {https://doi.org/10.1090/mbk/069},
}

@misc{botnan-oppermann-oudot-scoccola,
      title={On the bottleneck stability of rank decompositions of multi-parameter persistence modules}, 
      author={Magnus Bakke Botnan and Steffen Oppermann and Steve Oudot and Luis Scoccola},
      year={2024},
      eprint={2208.00300},
      archivePrefix={arXiv},
      primaryClass={math.AT}
}

@misc{brodzki-burfitt-pirashvili,
      title={On the complexity of zero-dimensional multiparameter persistence}, 
      author={Jacek Brodzki and Matthew Burfitt and Mariam Pirashvili},
      year={2020},
      eprint={2008.11532},
      archivePrefix={arXiv},
      primaryClass={math.AT}
}

@book {rotman,
    AUTHOR = {Rotman, Joseph J.},
     TITLE = {An introduction to homological algebra},
    SERIES = {Universitext},
   EDITION = {Second},
 PUBLISHER = {Springer, New York},
      YEAR = {2009},
     PAGES = {xiv+709},
      ISBN = {978-0-387-24527-0},
   MRCLASS = {18Gxx (13Dxx 16Exx 18-01 20J06)},
  MRNUMBER = {2455920},
MRREVIEWER = {Fernando Muro},
       DOI = {10.1007/b98977},
       URL = {https://doi.org/10.1007/b98977},
}

@book {dieck,
    AUTHOR = {tom Dieck, Tammo},
     TITLE = {Algebraic topology},
    SERIES = {EMS Textbooks in Mathematics},
 PUBLISHER = {European Mathematical Society (EMS), Z\"{u}rich},
      YEAR = {2008},
     PAGES = {xii+567},
      ISBN = {978-3-03719-048-7},
   MRCLASS = {55-01},
  MRNUMBER = {2456045},
MRREVIEWER = {Donald M. Davis},
       DOI = {10.4171/048},
       URL = {https://doi.org/10.4171/048},
}

@article {whitehead,
    AUTHOR = {Whitehead, J. H. C.},
     TITLE = {On {$C^1$}-complexes},
   JOURNAL = {Ann. of Math. (2)},
  FJOURNAL = {Annals of Mathematics. Second Series},
    VOLUME = {41},
      YEAR = {1940},
     PAGES = {809--824},
      ISSN = {0003-486X},
   MRCLASS = {56.0X},
  MRNUMBER = {2545},
MRREVIEWER = {R. H. Fox},
       DOI = {10.2307/1968861},
       URL = {https://doi.org/10.2307/1968861},
}

@incollection {bjorner,
    AUTHOR = {Bj\"{o}rner, A.},
     TITLE = {Topological methods},
 BOOKTITLE = {Handbook of combinatorics, {V}ol. 1, 2},
     PAGES = {1819--1872},
 PUBLISHER = {Elsevier Sci. B. V., Amsterdam},
      YEAR = {1995},
   MRCLASS = {52B05 (05B35 05E25 52B40)},
  MRNUMBER = {1373690},
MRREVIEWER = {Andrew Vince},
}

@book {peeva,
    AUTHOR = {Peeva, Irena},
     TITLE = {Graded syzygies},
    SERIES = {Algebra and Applications},
    VOLUME = {14},
 PUBLISHER = {Springer-Verlag London, Ltd., London},
      YEAR = {2011},
     PAGES = {xii+302},
      ISBN = {978-0-85729-176-9},
   MRCLASS = {13D02 (05E40 13D40 13F55 14M25)},
  MRNUMBER = {2560561},
MRREVIEWER = {Christopher A. Francisco},
       DOI = {10.1007/978-0-85729-177-6},
       URL = {https://doi.org/10.1007/978-0-85729-177-6},
}

@inproceedings{scoccola-et-al,
  title={Differentiability and Optimization of Multiparameter Persistent Homology},
  author={Luis Scoccola and Siddharth Setlur and David Loiseaux and Mathieu Carri{\`e}re and Steve Oudot},
  booktitle={Forty-first International Conference on Machine Learning},
  year={2024},
  url={https://openreview.net/forum?id=ixdfvnO0uy}
}

@InProceedings{hofer-et-al,
  title = 	 {Graph Filtration Learning},
  author =       {Hofer, Christoph and Graf, Florian and Rieck, Bastian and Niethammer, Marc and Kwitt, Roland},
  booktitle = 	 {Proceedings of the 37th International Conference on Machine Learning},
  pages = 	 {4314--4323},
  year = 	 {2020},
  editor = 	 {III, Hal Daumé and Singh, Aarti},
  volume = 	 {119},
  series = 	 {Proceedings of Machine Learning Research},
  month = 	 {13--18 Jul},
  publisher =    {PMLR},
  pdf = 	 {http://proceedings.mlr.press/v119/hofer20b/hofer20b.pdf},
  url = 	 {https://proceedings.mlr.press/v119/hofer20b.html},
  abstract = 	 {We propose an approach to learning with graph-structured data in the problem domain of graph classification. In particular, we present a novel type of readout operation to aggregate node features into a graph-level representation. To this end, we leverage persistent homology computed via a real-valued, learnable, filter function. We establish the theoretical foundation for differentiating through the persistent homology computation. Empirically, we show that this type of readout operation compares favorably to previous techniques, especially when the graph connectivity structure is informative for the learning problem.}
}

@inproceedings{carriere-blumberg,
author = {Carri\`{e}re, Mathieu and Blumberg, Andrew J.},
title = {Multiparameter persistence images for topological machine learning},
year = {2020},
isbn = {9781713829546},
publisher = {Curran Associates Inc.},
address = {Red Hook, NY, USA},
abstract = {In the last decade, there has been increasing interest in topological data analysis, a new methodology for using geometric structures in data for inference and learning. A central theme in the area is the idea of persistence, which in its most basic form studies how measures of shape change as a scale parameter varies. There are now a number of frameworks that support statistics and machine learning in this context. However, in many applications there are several different parameters one might wish to vary: for example, scale and density. In contrast to the one-parameter setting, techniques for applying statistics and machine learning in the setting of multiparameter persistence are not well understood due to the lack of a concise representation of the results.We introduce a new descriptor for multiparameter persistence, which we call the Multiparameter Persistence Image, that is suitable for machine learning and statistical frameworks, is robust to perturbations in the data, has finer resolution than existing descriptors based on slicing, and can be efficiently computed on data sets of realistic size. Moreover, we demonstrate its efficacy by comparing its performance to other multiparameter descriptors on several classification tasks.},
booktitle = {Proceedings of the 34th International Conference on Neural Information Processing Systems},
articleno = {1881},
numpages = {13},
location = {Vancouver, BC, Canada},
series = {NIPS '20}
}

@inproceedings{todd,
author = {Demir, Andac and Coskunuzer, Baris and Segovia-Dominguez, Ignacio and Chen, Yuzhou and Gel, Yulia and Kiziltan, Bulent},
title = {ToDD: topological compound fingerprinting in computer-aided drug discovery},
year = {2024},
isbn = {9781713871088},
publisher = {Curran Associates Inc.},
address = {Red Hook, NY, USA},
abstract = {In computer-aided drug discovery (CADD), virtual screening (VS) is used for identifying the drug candidates that are most likely to bind to a molecular target in a large library of compounds. Most VS methods to date have focused on using canonical compound representations (e.g., SMILES strings, Morgan fingerprints) or generating alternative fingerprints of the compounds by training progressively more complex variational autoencoders (VAEs) and graph neural networks (GNNs). Although VAEs and GNNs led to significant improvements in VS performance, these methods suffer from reduced performance when scaling to large virtual compound datasets. The performance of these methods has shown only incremental improvements in the past few years. To address this problem, we developed a novel method using multiparameter persistence (MP) homology that produces topological fingerprints of the compounds as multidimensional vectors. Our primary contribution is framing the VS process as a new topology-based graph ranking problem by partitioning a compound into chemical substructures informed by the periodic properties of its atoms and extracting their persistent homology features at multiple resolution levels. We show that the margin loss fine-tuning of pretrained Triplet networks attains highly competitive results in differentiating between compounds in the embedding space and ranking their likelihood of becoming effective drug candidates. We further establish theoretical guarantees for the stability properties of our proposed MP signatures, and demonstrate that our models, enhanced by the MP signatures, outperform state-of-the-art methods on benchmark datasets by a wide and highly statistically significant margin (e.g., 93\% gain for Cleves-Jain and 54\% gain for DUD-E Diverse dataset).},
booktitle = {Proceedings of the 36th International Conference on Neural Information Processing Systems},
articleno = {2029},
numpages = {16},
location = {New Orleans, LA, USA},
series = {NIPS '22}
}

@inbook{zhao-wang,
author = {Zhao, Qi and Wang, Yusu},
title = {Learning metrics for persistence-based summaries and applications for graph classification},
year = {2019},
publisher = {Curran Associates Inc.},
address = {Red Hook, NY, USA},
abstract = {Recently a new feature representation framework based on a topological tool called persistent homology (and its persistence diagram summary) has gained much momentum. A series of methods have been developed to map a persistence diagram to a vector representation so as to facilitate the downstream use of machine learning tools. In these approaches, the importance (weight) of different persistence features are usually pre-set. However often in practice, the choice of the weight-function should depend on the nature of the specific data at hand. It is thus highly desirable to learn a best weight-function (and thus metric for persistence diagrams) from labelled data. We study this problem and develop a new weighted kernel, called WKPI, for persistence summaries, as well as an optimization framework to learn the weight (and thus kernel). We apply the learned kernel to the challenging task of graph classification, and show that our WKPI-based classification framework obtains similar or (sometimes significantly) better results than the best results from a range of previous graph classification frameworks on benchmark datasets.},
booktitle = {Proceedings of the 33rd International Conference on Neural Information Processing Systems},
articleno = {884},
numpages = {12}
}

@article {sleator-tarjan,
    AUTHOR = {Sleator, Daniel D. and Tarjan, Robert Endre},
     TITLE = {A data structure for dynamic trees},
   JOURNAL = {J. Comput. System Sci.},
  FJOURNAL = {Journal of Computer and System Sciences},
    VOLUME = {26},
      YEAR = {1983},
    NUMBER = {3},
     PAGES = {362--391},
      ISSN = {0022-0000},
   MRCLASS = {68P05 (05-04 05C05 05C35)},
  MRNUMBER = {710253},
       DOI = {10.1016/0022-0000(83)90006-5},
       URL = {https://doi.org/10.1016/0022-0000(83)90006-5},
}

@article {chacholski-scolamiero-vaccarino,
    AUTHOR = {Chach\'{o}lski, W. and Scolamiero, M. and Vaccarino, F.},
     TITLE = {Combinatorial presentation of multidimensional persistent
              homology},
   JOURNAL = {J. Pure Appl. Algebra},
  FJOURNAL = {Journal of Pure and Applied Algebra},
    VOLUME = {221},
      YEAR = {2017},
    NUMBER = {5},
     PAGES = {1055--1075},
      ISSN = {0022-4049},
   MRCLASS = {55U10},
  MRNUMBER = {3582717},
MRREVIEWER = {Vladimer L. Baladze},
       DOI = {10.1016/j.jpaa.2016.09.001},
       URL = {https://doi.org/10.1016/j.jpaa.2016.09.001},
}

@article{scaramuccia,
title = {Computing multiparameter persistent homology through a discrete Morse-based approach},
journal = {Computational Geometry},
volume = {89},
pages = {101623},
year = {2020},
issn = {0925-7721},
doi = {https://doi.org/10.1016/j.comgeo.2020.101623},
url = {https://www.sciencedirect.com/science/article/pii/S0925772120300171},
author = {Sara Scaramuccia and Federico Iuricich and Leila {De Floriani} and Claudia Landi},
keywords = {Multiparameter persistent homology, Topological data analysis, Discrete Morse theory, Morse reductions, Homotopy expansion},
abstract = {Persistent homology allows for tracking topological features, like loops, holes and their higher-dimensional analogues, along a single-parameter family of nested shapes. Computing descriptors for complex data characterized by multiple parameters is becoming a major challenging task in several applications, including physics, chemistry, medicine, and geography. Multiparameter persistent homology generalizes persistent homology to allow for the exploration and analysis of shapes endowed with multiple filtering functions. Still, computational constraints prevent multiparameter persistent homology to be a feasible tool for analyzing large size data sets. We consider discrete Morse theory as a strategy to reduce the computation of multiparameter persistent homology by working on a reduced dataset. We propose a new preprocessing algorithm, well suited for parallel and distributed implementations, and we provide the first evaluation of the impact of multiparameter persistent homology on computations.}
}

@article{brouillette,
  title={Multiparameter discrete Morse theory},
  author={Brouillette, Guillaume and Allili, Madjid and Kaczynski, Tomasz},
  journal={Journal of Applied and Computational Topology},
  pages={1--42},
  year={2024},
  publisher={Springer}
}

@article{lesnick-course,
  title={Notes on {M}ultiparameter {P}ersistence (for {AMAT} 840)},
  author={Lesnick, Michael},
  year={April 12th, 2025}
}

@InProceedings{short-version,
  author =	{Morozov, Dmitriy and Scoccola, Luis},
  title =	{{Computing Betti Tables and Minimal Presentations of Zero-Dimensional Persistent Homology}},
  booktitle =	{41st International Symposium on Computational Geometry (SoCG 2025)},
  pages =	{69:1--69:15},
  series =	{Leibniz International Proceedings in Informatics (LIPIcs)},
  ISBN =	{978-3-95977-370-6},
  ISSN =	{1868-8969},
  year =	{2025},
  volume =	{332},
  editor =	{Aichholzer, Oswin and Wang, Haitao},
  publisher =	{Schloss Dagstuhl -- Leibniz-Zentrum f{\"u}r Informatik},
  address =	{Dagstuhl, Germany},
  URL =		{https://drops.dagstuhl.de/entities/document/10.4230/LIPIcs.SoCG.2025.69},
  URN =		{urn:nbn:de:0030-drops-232219},
  doi =		{10.4230/LIPIcs.SoCG.2025.69},
  annote =	{Keywords: Multiparameter persistence, Zero-dimensional homology, Minimal presentation, Betti table}
}

@misc{graph-mph,
  author = {Luo, Yuan},
  title = {graph-mph: A {C++} package for computing Betti tables and minimal presentations of zero-dimensional persistent homology},
  year = {2025},
  howpublished = {\url{https://github.com/YuanL12/graph-mph0}},
  note = {C++ implementation with Python bindings}
}

\end{document}